\pgfqpoint{\LineSpace}{\LineSpace}}%
\pgfqpoint{\LineSpace}{\LineSpace}}%
\pgfqpoint{\LineSpace}{\LineSpace}}%
\pgfqpoint{\LineSpace}{\LineSpace}}%
\pgfpoint{\LineSpace}{\LineSpace*0.866025*2}}
\pgfpoint{\LineSpace}{\LineSpace*0.866025*2}}%
\newdimen\LineSpace
\newdimen\LineWidth
\tikzset{
    pattern space/.code={\LineSpace=#1},
    pattern space=3pt,
    pattern width/.code={\LineWidth=#1},
    pattern width=.4pt
}
\definecolor{cdred}{RGB}{200,20,20}
\definecolor{cdgreen}{RGB}{220,255,200}	
\definecolor{cdblue}{RGB}{130,145,255}	
\tikzset{ %
	dred/.style={pattern=my north west lines,pattern width=.6pt,pattern space=2.3pt,pattern color=red},
	dgreen/.style={pattern=my grid,pattern width=0.8pt,pattern space=2pt,pattern color=green!80!black},
	dblue/.style={pattern=my dots,pattern width=0.7pt,pattern space=2.3pt,pattern color=blue},	
	H/.style={circle,fill=gray!20,draw=gray!20,line width=2pt,inner sep=0pt,minimum size=15pt},
	He/.style={draw=gray!20,line width=12pt},
	He2/.style={draw=gray!20,line width=6pt},
	G/.style={circle,fill=black,inner sep=0pt,minimum size=3pt},
	G1/.style={G,minimum size=5pt},
	Ge/.style={draw=black},
	v/.style={circle,draw=black!75,inner sep=0pt,minimum size=8pt},
	v1/.style={v,line width=1.5pt},
	Gv/.style={inner sep=0pt},
	Gve/.style={Ge,dashed},
}
\newtheorem{theorem}{Theorem}
\newtheorem{lemma}[theorem]{Lemma}
\newtheorem{definition}{Definition}
\newtheorem{corollary}[theorem]{Corollary}
\newtheorem{remark}[theorem]{Remark}
\theoremstyle{definition}
\theoremstyle{remark}
\newtheorem{claim}{Claim}
\newtheorem{myclaim}{Claim}
\newcommand{\yes}{\ensuremath{\mathsf{YES}}\xspace}
\newcommand*{\recol}[1]{\textsc{Recol(\ensuremath{#1})}\xspace}
\newcommand*{\recolun}[1]{\textsc{Recol$_1$(\ensuremath{#1})}\xspace}
\newcommand*{\recollong}[1]{$#1$-Recoloring\xspace}
\newcommand*{\csp}[1]{\textsc{CSP(\ensuremath{#1})}\xspace}
\newcommand*{\undir}[1]{\ensuremath{\bar{#1}}\xspace}
\newcommand*{\loops}[1]{\ensuremath{{{#1}^\circ}}\xspace}
\newcommand*{\Hom}{\ensuremath{\operatorname{Hom}}\xspace}
\newcommand*{\IN}{\textsc{in}\xspace}
\newcommand*{\OUT}{\textsc{out}\xspace}
\newcommand*{\SYM}{\textsc{sym}\xspace}
\newcommand*{\pop}{push-or-pull\xspace}
\newcommand{\dtwosqgraph}{
	\begin{scope}
		\node[H](ha0) at (0:1) {};
		\node[H](ha1) at (90:1) {};
		\node[H](ha2) at (180:1) {};
		\node[H](ha3) at (-90:1) {};
		\draw[He]
			(ha0.center)--(ha1.center)
			(ha1.center)--(ha2.center)
			(ha2.center)--(ha3.center)
			(ha3.center)--(ha0.center);
	\end{scope}
	\begin{scope}[shift={(3,0)}]
		\node[H](hb0) at (180:1) {};
		\node[H](hb1) at (90:1) {};
		\node[H](hb2) at (0:1) {};
		\node[H](hb3) at (-90:1) {};
		\draw[He]
			(hb0.center)--(hb1.center)
			(hb1.center)--(hb2.center)
			(hb2.center)--(hb3.center)
			(hb3.center)--(hb0.center);
	\end{scope}
	\draw[He] (ha0.center)--(hb0.center);
}
\newcommand{\dhexasqgraph}{
	\begin{scope}
		\node[H](ha0) at (0:1) {};
		\node[H](ha1) at (60:1) {};
		\node[H](ha2) at (120:1){};
		\node[H](ha3) at (180:1) {};
		\node[H](ha4) at (-120:1) {};
		\node[H](ha5) at (-60:1) {};
		\draw[He]
			(ha0.center)--(ha1.center)
			(ha1.center)--(ha2.center)
			(ha2.center)--(ha3.center)
			(ha3.center)--(ha4.center)
			(ha4.center)--(ha5.center)
			(ha5.center)--(ha0.center);
	\end{scope}
	\begin{scope}[shift={(3,0)}]
		\node[H](hb0) at (180:1) {};
		\node[H](hb1) at (90:1) {};
		\node[H](hb2) at (0:1) {};
		\node[H](hb3) at (-90:1) {};
		\draw[He]
			(hb0.center)--(hb1.center)
			(hb1.center)--(hb2.center)
			(hb2.center)--(hb3.center)
			(hb3.center)--(hb0.center);
	\end{scope}
	\draw[He] (ha0.center)--(hb0.center);
}
\newcommand{\dbubblegraph}{
	\begin{scope}
		\node[H](ha0) at (0:1) {};
		\node[H](ha1) at (60:1) {};
		\node[H](ha2) at (120:1) {};
		\node[H](ha3) at (180:1) {};
		\node[H](ha4) at (-120:1) {};
		\node[H](ha5) at (-60:1) {};
		\draw[He]
			(ha0.center)--(ha1.center)
			(ha1.center)--(ha2.center)
			(ha2.center)--(ha3.center)
			(ha3.center)--(ha4.center)
			(ha4.center)--(ha5.center)
			(ha5.center)--(ha0.center);
		\node[H](hb0) at (0:2) {};
		\node[H](hb1) at (60:2) {};
		\node[H](hb2) at (120:2) {};
		\node[H](hb3) at (180:2) {};
		\node[H](hb4) at (-120:2) {};
		\node[H](hb5) at (-60:2) {};
		\draw[He]
			(hb0.center)--(hb1.center)
			(hb1.center)--(hb2.center)
			(hb2.center)--(hb3.center)
			(hb3.center)--(hb4.center)
			(hb4.center)--(hb5.center)
			(hb5.center)--(hb0.center);
		\draw[He]
			(ha0.center)--(hb0.center)
			(ha2.center)--(hb2.center)
			(ha4.center)--(hb4.center);
	\end{scope}
			
}
\newcommand{\arrowIn}{
\tikz \draw[He2,-stealth,gray!20] (-1pt,0) -- (1pt,0);
}
\newcommand{\arrowInG}{
\tikz \draw[Ge,-stealth] (-1pt,0) -- (1pt,0);
}
\newcommand{\dstargraph}{
    \begin{scope}
        \node[H](ha0) at (0:2){};
        \node[H](ha1) at (72:2){};
        \node[H](ha2) at (144:2){};
        \node[H](ha3) at (-144:2){};
        \node[H](ha4) at (-72:2){};
        \draw[He2] (ha0.center)--(ha1.center) node[sloped,pos=0.65,allow upside down]{\arrowIn}; ; 
        \draw[He2] (ha1.center)--(ha2.center) node[sloped,pos=0.65,allow upside down]{\arrowIn}; ; 
        \draw[He2] (ha2.center)--(ha3.center) node[sloped,pos=0.65,allow upside down]{\arrowIn}; ; 
        \draw[He2] (ha3.center)--(ha4.center) node[sloped,pos=0.65,allow upside down]{\arrowIn}; ; 
        \draw[He2] (ha4.center)--(ha0.center) node[sloped,pos=0.65,allow upside down]{\arrowIn}; ; 
        
        \draw[He2] (ha0.center)--(ha2.center) node[sloped,pos=0.6,allow upside down]{\arrowIn}; ; 
        \draw[He2] (ha1.center)--(ha3.center) node[sloped,pos=0.6,allow upside down]{\arrowIn}; ; 
        \draw[He2] (ha2.center)--(ha4.center) node[sloped,pos=0.6,allow upside down]{\arrowIn}; ; 
        \draw[He2] (ha3.center)--(ha0.center) node[sloped,pos=0.6,allow upside down]{\arrowIn}; ; 
        \draw[He2] (ha4.center)--(ha1.center) node[sloped,pos=0.6,allow upside down]{\arrowIn}; ; 
    \end{scope}
}
\newcommand{\dstardiamgraph}{
    \dstargraph;
    \begin{scope}[shift={(6,0)}]
    
        \node[H](hb0) at (-180:2){};
        \node[H](hb1) at (90:2){};
        \node[H](hb2) at (0:2){};
        \node[H](hb3) at (-90:2){};
        \draw[He2] (hb0.center)--(hb1.center) node[sloped,pos=0.65,allow upside down]{\arrowIn}; ; 
        \draw[He2] (hb1.center)--(hb2.center) node[sloped,pos=0.65,allow upside down]{\arrowIn}; ; 
        \draw[He2] (hb2.center)--(hb3.center) node[sloped,pos=0.65,allow upside down]{\arrowIn}; ; 
        \draw[He2] (hb3.center)--(hb0.center) node[sloped,pos=0.65,allow upside down]{\arrowIn}; ; 
        \draw[He2] (hb3.center)--(hb1.center) node[sloped,pos=0.65,allow upside down]{\arrowIn}; ; 
        \draw[He2] (ha0.center)--(hb0.center) node[sloped,pos=0.65,allow upside down]{\arrowIn}; ; 
    \end{scope}
}
\title{Reconfiguration of Digraph Homomorphisms}
\author[1]{  Benjamin Lévêque }
\author[1]{  Moritz Mühlenthaler }
\author[1]{  Thomas Suzan }
\affil[1]{Université Grenoble Alpes, Grenoble INP, G-SCOP, France}
\begin{document}

\maketitle

\begin{abstract}
  For a fixed graph $H$, the $H$-Recoloring problem asks whether for two given homomorphisms from a graph $G$ to $H$, we can transform one into the other by changing the image of a single vertex of $G$ in each step and maintaining a homomorphism from $G$ to $H$ throughout. We extend an algorithm of Wrochna for $H$-Recoloring where $H$ is a square-free loopless undirected graph to the more general setting of directed graphs.
  We obtain a polynomial-time algorithm for $H$-Recoloring in this setting whenever  $H$ is a loopless digraph that does not contain a 4-cycle of algebraic girth zero and whenever $H$ is a reflexive digraph that contains neither a $3$-cycle of algebraic girth $1$ nor a $4$-cycle of algebraic girth zero.
\end{abstract}

\section{Introduction}
\label{sec:intro}

Reconfiguration problems have been introduced formally by Ito et al.
in~\cite{Ito:11} and their complexity has been studied systematically since.
Applications can be found in statistical physics, combinatorial games,
and uniform sampling of objects such as colorings and matchings.
The general setting is the following: Given two feasible solutions of an
instance of a combinatorial problem, the goal is to decide whether one can be
transformed into the other in a step-by-step manner, visiting only feasible
configurations during the transformation.  Related questions of interest are whether any two feasible
solutions admit a transformation, and if there is a transformation of at most a
certain length between two given solutions. We refer the reader to the surveys
of Nishimura~\cite{Nishimura:18} and van den Heuvel~\cite{Heuvel:13} for a
discussion of results and applications in this area.

A digraph homomorphism maps the
vertex set of a digraph $G$ to the vertex set  of a digraph $H$ such that each arc of $G$ is mapped to an arc of $H$. 
The classical digraph homomorphism problem \csp{H} %\csplong{H}  
asks whether
there is a digraph homomorphism from a given digraph $G$ to a fixed ``template'' digraph~$H$. One of the motivations for studying \csp{H} is that it is polynomially
equivalent to the seemingly richer constraint satifaction problem \csp{\mathcal{H}}, where the template $\mathcal{H}$ can
be any fixed finite relational structure~\cite{FV:93}. The complexity of the
\csp{H} is well understood in the sense that  \csp{H} is known to be either
polynomial-time solvable or \NP-complete, a result that has been proved recently
by Bulatov~\cite{Bulatov:17} and by Zhuk~\cite{Zhuk:20}, settling in the
affirmative a long-standing conjecture by Feder and Vardi~\cite{FV:93}.
Motivated by these recent developments we study the complexity of the natural reconfiguration
variant \recol{H} (informally called \emph{\recollong{H}})  
associated with \csp{H}, which is the following question: Given two digraph
homomorphisms $\alpha$ and $\beta$ from a $G$ to $H$, is there a step-by-step transformation between
$\alpha$ and $\beta$ that changes the image of one vertex of $G$ at a time
and maintains a digraph homomorphism from $G$ to $H$ throughout?

Complexity results are known for several special cases of \recollong{H}, most
notably for deciding whether there is a step-by-step transformation between two
homomorphisms from a given undirected graph to a fixed undirected graph $H$.
Notice that for our purposes we may interpret an undirected graph to be the
digraph obtained by replacing each undirected edge by two directed edges of
opposite orientation.  Despite several
positive~\cite{Lee:21,Brewster:16,Cereceda:11,Wrochna:20} and negative~\cite{BC:09,Brewster:16,Lee:20} results in this setting, a complete
classification of the complexity of \recollong{H} for undirected graphs is not known. 
In the more general context of digraphs, we are aware of only two results
for \recollong{H}, which consider the case where $H$ is a transitive tournament~\cite{Dochtermann:21} and where $H$ is some orientation of a reflexive digraph cycle~\cite{Brewster:17} (\emph{reflexive} means that there is a loop on each vertex). We extend the topological approach of
Wrochna for \recollong{H} for undirected graphs to digraphs and obtain the
following results, which generalize the previous algorithmic results
in~\cite{Lee:21,Brewster:17,Cereceda:11,Wrochna:20}. The \emph{algebraic girth} of the
orientation of a cycle is the absolute value of the number of forward arcs minus
the number of backward arcs.  A \emph{4-cycle of algebraic girth 0} is either
one of the two graphs shown in Figure~\ref{fig:zerogirth}. 

\begin{theorem}
	Let $H$ be a loopless digraph that contains no 4-cycle of algebraic girth 0 as subgraph. Then \recollong{H} admits a polynomial-time algorithm.
	\label{thm:c4free}
\end{theorem}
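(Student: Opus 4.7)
The plan is to adapt Wrochna's topological strategy for square-free loopless undirected graphs to the directed setting, with the hypothesis that $H$ has no 4-cycle of algebraic girth 0 playing the role of square-freeness. To each homomorphism $\alpha : G \to H$ I would associate a lift that records how walks of $G$ map to walks of $H$. A single-vertex recoloring at a vertex $v$ of $G$ corresponds to an elementary homotopy of these image walks: the two arcs incident to $v$ in any walk through $v$ are replaced by an alternative pair through the new image of $v$. The different possible orientations of the two arcs at $v$ each give rise to a specific elementary-homotopy rule, which I would enumerate carefully.

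Next I would define a notion of \emph{tight} walk in $H$ as one that admits no length-decreasing elementary homotopy, and prove the key uniqueness lemma: under the hypothesis that $H$ contains no 4-cycle of algebraic girth 0, each homotopy class of walks between a fixed pair of vertices in $H$ has essentially a unique tight representative. The argument should follow Wrochna's general pattern: assuming two distinct tight walks are homotopic, one concatenates them into a closed walk and applies the reduction rules iteratively, extracting either a contradiction with tightness or an embedded 4-cycle of algebraic girth 0 in $H$. The condition is tight in the sense that precisely the two shapes displayed in Figure~\ref{fig:zerogirth} realize such an obstruction.

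With uniqueness of tight representatives in hand, I would conclude as in Wrochna's framework. The homomorphisms $\alpha$ and $\beta$ are reconfigurable if and only if for every walk $W$ of $G$ the image walks $\alpha(W)$ and $\beta(W)$ are homotopic in $H$. This reduces to checking polynomially many invariants: the tight canonical forms of the images of a spanning tree of $G$ together with the winding of $\alpha$ and $\beta$ around a generating set of fundamental cycles of $G$. All of these can be computed in polynomial time by a traversal of $G$, after a preprocessing step that eliminates vertices whose image is forced.

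The main obstacle will be the uniqueness lemma. Unlike the undirected case, directed walks carry an orientation pattern, a sequence of forward/backward traversals, and the elementary homotopies act on this pattern in orientation-sensitive ways. Showing that any non-trivial coincidence of two tight walks forces a 4-cycle of algebraic girth $0$ in $H$ -- as opposed to just some 4-cycle -- is exactly where the algebraic-girth hypothesis is consumed, and the case analysis on how the orientation patterns of the two walks align will be the technical core of the proof.
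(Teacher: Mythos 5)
Your high-level instinct—that the algebraic-girth-0 hypothesis plays the role of square-freeness and forces a monochromatic-neighborhood style rigidity—is correct, but the plan as sketched has several gaps, and it takes a substantially harder road than the paper does.

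First, the characterization you propose, that $\alpha$ and $\beta$ are reconfigurable ``if and only if for every walk $W$ of $G$ the image walks $\alpha(W)$ and $\beta(W)$ are homotopic in $H$,'' is not the right invariant. For a closed walk $C$ based at $q$, the walks $\alpha(C)$ and $\beta(C)$ live in fundamental groups at \emph{different} basepoints $\alpha(q)$ and $\beta(q)$; the actual condition involves a connecting walk $Q$ from $\alpha(q)$ to $\beta(q)$ and is a conjugation relation $\beta(C) = Q^{-1}\alpha(C)Q$ in $\pi(H)$, not equality. More importantly, even in Wrochna's original undirected square-free case, the set $\Pi_q$ of realizable walks is \emph{not} characterized by having a unique tight representative in a homotopy class: it is one of $\emptyset$, a single $Q$, an infinite family $\{R^n P : n \in \mathbb{Z}\}$, or all reduced walks. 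Your ``uniqueness of tight representatives'' lemma is a true statement about the fundamental groupoid of a graph, but it does not produce this fourfold classification and so does not by itself determine reconfigurability.

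Second, the genuinely new content in the directed setting is the \emph{orientation compatibility} of the vertex walks, which the paper captures with an explicit zigzag condition (alternating in/out arcs along $S_v$ according to whether $v$ has in- or out-neighbors). Your proposal gestures at ``orientation-sensitive elementary-homotopy rules'' and ``a preprocessing step that eliminates vertices whose image is forced,'' but never isolates this as a separable condition that can be tested on top of the undirected classification. The paper's key structural insight, which your route misses, is that one can run Wrochna's classification \emph{as a black box} on the underlying undirected graph $\undir H$, and then filter the resulting walks by the zigzag condition; the filtering is handled via a case analysis on the classification (unique $Q$, $R^n P$ family, or all walks), together with a reduction to finding ``symmetric'' generated walks on the vertices that have both in- and out-neighbors. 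Re-deriving the entire homotopy machinery in the directed setting, as you propose, is both more work and more error-prone, precisely because (as you yourself flag) the orientation-sensitive case analysis is delicate.

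In short: the hypothesis on $H$ is being used for the right reason (to enforce the monochromatic neighborhood property), but the proposed invariant is incorrect, the uniqueness-of-tight-representatives framing does not yield the classification needed for the algorithm, and the route bypasses the modular reduction to the undirected case plus zigzag filtering that makes the paper's proof tractable.
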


The triangle of algebraic girth 1 is shown in Figure~\ref{fig:onegirth}.  For
reflexive digraphs $H$ we show the following.

\begin{theorem}
	\label{thm:reflexive}
	Let $H$ be a reflexive digraph that contains neither a triangle of
	algebraic girth 1 nor a 4-cycle of algebraic girth 0 as subgraph. Then
	\recollong{H} admits a polynomial-time algorithm.
\end{theorem}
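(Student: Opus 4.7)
The plan is to adapt Wrochna's topological approach, already invoked for Theorem~\ref{thm:c4free}, to the reflexive setting. For each homomorphism $\alpha\colon G\to H$ and each closed walk $W$ of $G$, one tracks the algebraic length of the image closed walk $\alpha(W)$ in $H$, i.e., the signed count of forward minus backward arcs. I would define an invariant $\Phi_\alpha$ on the cycle space of $G$ by recording this algebraic length, and show that it descends to a well-defined function on the first homology of $G$. The main thrust of the argument is then that (i) $\Phi$ is preserved under single-vertex recolorings, so $\Phi_\alpha=\Phi_\beta$ whenever $\alpha$ and $\beta$ lie in the same reconfiguration component; and (ii) conversely, if $\Phi_\alpha=\Phi_\beta$ then a polynomial-length reconfiguration sequence can be constructed explicitly.

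For invariance under a single-vertex change at $v$, observe that replacing $\alpha(v)$ by $h'$ modifies $\alpha(W)$ only at the occurrences of $v$, substituting each traversal $\alpha(u)\to \alpha(v)\to \alpha(w)$ by a short detour through $h'$. The two excluded substructures are precisely what is needed to show that every such detour is algebraically neutral. The absence of a 4-cycle of algebraic girth $0$ rules out the classical Wrochna-type obstruction (already used in the loopless case), where the original length-two path is replaced by a ``parallel'' length-two path closing an algebraic-girth-$0$ square. In the reflexive setting, the loops at vertices of $H$ introduce new detours that absorb one traversal via a loop; the absence of a triangle of algebraic girth $1$ rules out exactly these loop-induced shifts of the algebraic length by $\pm 1$. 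Together, the two exclusions guarantee that every legal single-vertex move leaves $\Phi$ unchanged.

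For the sufficiency direction, I would fix a spanning tree $T$ of $G$ and reconfigure vertex-by-vertex from the leaves inward, using loops at vertices of $H$ as temporary ``parking'' positions to keep intermediate maps homomorphisms. The hypothesis $\Phi_\alpha=\Phi_\beta$, combined with the excluded subgraphs, ensures that the non-tree edges can be closed up consistently: on every fundamental cycle the algebraic length prescribed by $\alpha$ and by $\beta$ agree, so the tree-based reconfiguration extends without creating a local conflict at the non-tree edges. An amortized count of moves along $T$ then yields the claimed polynomial bound.

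The main obstacle will be this sufficiency step. Reflexivity greatly enlarges the homomorphism space—every constant map is a homomorphism, and loops act as always-available ``wait'' moves—so many delicate local obstructions become potential failure modes. The challenge is to rule out each of them using only the two excluded subgraphs; this is where the combinatorial-topological machinery developed for the loopless Theorem~\ref{thm:c4free} has to be rebuilt carefully, in particular handling the interplay between arc orientations in $H$ and the loops. I expect the bulk of the technical work to lie in isolating the correct normal form for a partial reconfiguration and in verifying that the algebraic invariant really does detect the only obstruction once triangles of algebraic girth $1$ and 4-cycles of algebraic girth $0$ are forbidden.
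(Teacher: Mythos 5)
Your proposed invariant $\Phi_\alpha$ — the signed arc-count (algebraic length) of $\alpha(W)$ for each closed walk $W$ — is too coarse, and the converse you assert in (ii) is false even under the two excluded-subgraph hypotheses. Take $H$ a symmetric reflexive $6$-cycle (no triangle of algebraic girth $1$, no $4$-cycle of algebraic girth $0$) and $G$ a large cycle; if $\alpha$ winds once around $H$ and $\beta$ winds twice, then $\alpha(C)$ and $\beta(C)$ both have algebraic length $0$ for every closed walk $C$, so $\Phi_\alpha=\Phi_\beta$, yet $\alpha$ and $\beta$ lie in different homotopy classes and no recoloring sequence exists. The obstruction here is not numerical but lives in the fundamental groupoid $\pi(H)$: one must track the entire reduced walk $\alpha(C)$ up to conjugacy, not just a signed edge count. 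This is precisely what Wrochna's topological-validity condition and the classification in Theorem~\ref{thm:T:Wrochna} capture, and it is the backbone of the paper's argument; replacing it by a $\mathbb{Z}$-valued class function on $H_1(G)$ discards exactly the information that detects unsolvable instances.

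The paper's actual route is different in two further, essential respects that your sketch does not address. First, for reflexive digraphs it does not work with the monochromatic-neighborhood property at all; it identifies a weaker local property specific to the reflexive setting, the \emph{push-or-pull} property (when $v$ changes color $a\to b$, every neighbor has color $a$ or $b$), shows that the triangle of algebraic girth $1$ is exactly the obstruction to it, and rederives the walk-generation Lemma~\ref{lem:fundtri}, the frozen-vertex criterion via $\alpha$-tight cycles (Lemma~\ref{lem:cons}), and the classification Theorem~\ref{thm:R:realwalks} under this weaker property. Second, the $4$-cycle of algebraic girth $0$ is used not as a local ``detour-neutrality'' constraint on an invariant, as you suggest, but to establish Lemma~\ref{lem:equirecol}: that \recol{H} reduces to \recolun{H} (the $\Hom_1$ version), which is what makes the push-or-pull machinery and the move-forward algorithm applicable to all instances rather than only reflexive $G$. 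Finally, the check of which $\undir{H}$-realizable walks are compatible with arc orientations is handled case by case according to which of the four outcomes of Theorem~\ref{thm:R:realwalks} occurs, using the move-forward algorithm (Lemma~\ref{lem:moveforward}), the $R^nP$ window argument (Lemma~\ref{lem:caseRnP}), and the symmetric-walk analysis on strongly connected components (Lemma~\ref{lem:R:dirsym}). None of this structure is present in your sketch, and the sufficiency direction as you state it would not go through.
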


One intriguing property of reconfiguration problems is that they can be easy
even if the underlying decision problem is hard (and vice versa). To illustrate
this, consider the classical 3-Coloring problem, which can be stated as the
following graph homomorphism problem: Given an undirected graph $G$, is there a
homomorphism from $G$ to the complete graph on three vertices? While 3-Coloring
is \NP-complete, Cereceda et al.~have shown in~\cite{Cereceda:11} that, given two such homomorphisms,
there is a polynomial-time algorithm that decides whether there is a
step-by-step transformation between them.
Theorem~\ref{thm:c4free} implies a large class of such examples. For instance, it
is known that  there are orientations $H$ of a tree such that \csp{H} is
\NP-complete, but Theorem~\ref{thm:c4free} shows that \recollong{H} admits a
polynomial-time algorithm for any orientation $H$ of a tree. The situation is
different for reflexive graphs. Deciding if there is a homomorphism into a
fixed reflexive graph $H$ is trivial, since a homomorphism may map all
vertices of a graph to the same looped vertex of $H$. However, deciding if two
given homomorphisms to a reflexive graph $H$ admit a step-by-step
transformation turns out to be non-trivial (see Theorem~\ref{thm:reflexive}).

\begin{figure}
    \centering
    \begin{tikzpicture}[node distance={2.5cm}, thick, main/.style = {draw, circle}] 
    	\node[main] (1) {}; 
    	\node[main] (2) [below right of=1] {}; 
    	\node[main] (3) [below left of=1] {}; 
    	\node[main] (4) [below left of=2] {}; 
    	\draw [very thick,->] (1) -- (2); 
    	\draw [very thick,->] (2) -- (4); 
    	\draw [very thick,->] (1) -- (3); 
    	\draw [very thick,->] (3) -- (4); 
    \end{tikzpicture} 
    \hspace{3em}
    \begin{tikzpicture}[node distance={2.5cm}, thick, main/.style = {draw, circle}] 
    	\node[main] (1) {}; 
    	\node[main] (2) [below right of=1] {}; 
    	\node[main] (3) [below left of=1] {}; 
    	\node[main] (4) [below left of=2] {}; 
    	\draw [very thick,->] (1) -- (2); 
    	\draw [very thick,<-] (2) -- (4); 
    	\draw [very thick,->] (1) -- (3); 
    	\draw [very thick,<-] (3) -- (4); 
    \end{tikzpicture}
    \caption{The two non-isomorphic orientations of a 4-cycle with algebraic girth zero.}
    \label{fig:zerogirth}
\end{figure}
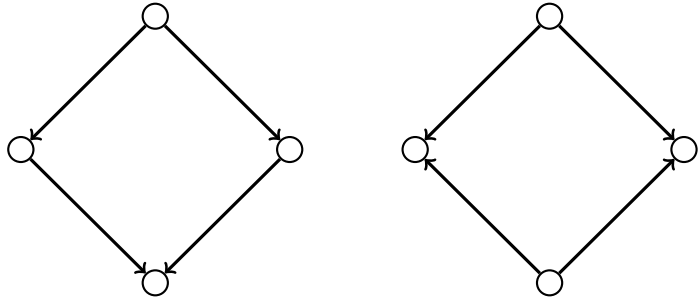

\begin{figure}
    \centering
    \begin{tikzpicture}[node distance={2.5cm}, thick, main/.style = {draw, circle}] 
    	\draw (0,0) node[main] (1) {}; 
    	\draw (3,0) node[main] (2) {}; 
    	\draw (1.5,2.598) node[main] (3) {}; 
    	\draw [very thick,->] (1) -- (2); 
    	\draw [very thick,->] (2) -- (3); 
    	\draw [very thick,->] (1) -- (3); 
    \end{tikzpicture} 
    \caption{The orientation of a 3-cycle with algebraic girth 1.}
    \label{fig:onegirth}
\end{figure}
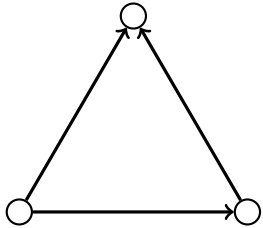

\subsection{Our results and their relation to Wrochna's algorithm}

We show that the topological approach introduced by Wrochna
in~\cite{Wrochna:20} for the reconfiguration of homomorphisms of undirected graphs
can be extended to the digraph homomorphisms. An undirected graph is called
\emph{square-free} if it does not contain a cycle on four vertices as a
subgraph. For a homomorphism $G \to H$ of directed or undirected graphs, we
refer to the image of a vertex as its \emph{color}. Two digraph homorphisms
$\alpha, \beta : G \to H$ admit a step-by-step transformation if there is a
sequence $f_1, f_2, \ldots, f_\ell$ of homomorphisms $G \to H$, such that
$\alpha = f_1$, $\beta = f_\ell$, and any two consecutive homomorphisms $f_i$,
$f_{i+1}$ differ with respect to the color of exactly one vertex. Such a
sequence is called \emph{$H$-recoloring sequence} (from $\alpha$ to $\beta$).
One key observation of Wrochna is that if $H$ is an undirected square-free
graph then, whenever the color of a vertex changes during a transformation, 
all of its neighbors must have the same color. This so-called
\emph{monochromatic neighborhood property} implies that, roughly speaking, the
color changes of a single vertex in a $H$-recoloring sequence determine those of all
other vertices. This sets the stage for a topological point of view on the
problem, which in turn results in a characterization of \emph{all} possible
sequences of color changes of a single arbitrary vertex of $G$ that generate
$H$-recoloring sequences satisfying the monochromatic neighborhood property.
This characterization allows for a polynomial-time algorithm that either finds
a suitable sequence of color changes of a vertex or concludes that no such
sequence exists.

At first sight, the crucial premise of Wrochna's algorithm seems to be the
square-freeness of $H$. But in fact the algorithm finds for any undirected
graph $H$, square-free or not, a representation of all $H$-recoloring sequences
that satisfy the monochromatic neighborhood property (possibly there is no such
sequence). In fact, Wrochna remarks the following.
\begin{remark}[{\cite{Wrochna:20}}]
    \label{rem:nostructure}
    We note that none of the proofs in this paper used any structural
    properties of H. If we consider $H$-Recoloring for any graph $H$, but only
    allow recoloring a vertex if all of its neighbors have one common color (in
    other words, a reconfiguration step is allowed only when the homotopy class
    of the mapping does not change), the same results will follow.
\end{remark}
From this remark it follows that the algorithm also works in the following
cases for undirected graphs $H$ with loops allowed: $H$ does not contain $C_4$,
$K_3$ with one loop added and $K_2$ with both loops added. 

For a loopless digraph $H$, a natural structural property that enforces the monochromatic neighborhood property of any $H$-recoloring sequence is that $H$ does not contain a 4-cycle of algebraic girth 0 (see
Figure~\ref{fig:zerogirth}). 
It is not hard to see that any $H$-recoloring sequence induces a $\bar
H$-recoloring sequence for the underlying undirected graph $\undir{H}$ of $H$.
Following the discussion of Wrochna's algorithm above, we may apply it to the
graph $\undir{H}$ and it will return a representation of all $\undir{H}$-recoloring
sequences that satisfy the monochromatic neighborhood property.
Since we know that, under the restrictions of Theorem~\ref{thm:c4free}, all
$H$-recoloring sequences satisfy the monochromatic neighborhood property, it
remains to determine whether or not there is one of the $\undir{H}$-recoloring
sequence %obtained via Wrochna's algorithm 
that is compatible with the
orientation of the arcs of $H$. For this purpose we introduce the so-called zigzag
condition that allows us to
obtain a characterization of all possible sequences of color changes of a
particular vertex of $G$ that generate precisely $H$-recoloring sequences that
satisfy the monochromatic neighborhood property and in addition are compatible
with the orientation of the arcs of $H$ (see Theorem~\ref{thm:c4:classification}). This
characterization leads to the polynomial-time algorithm of
Theorem~\ref{thm:c4free}.

In order to prove Theorem~\ref{thm:reflexive}, we first adapt Wrochna's
algorithm to undirected reflexive graphs. The topological intuition behind the
monochromatic neighborhood property is that all $H$-recoloring sequences
satisfying this property are in the same homotopy class (see~\cite{Wrochna:20} for more details). 
For
reflexive undirected graphs we introduce another such property:
We say that an $H$-recoloring sequence satisfies the \emph{\pop property} if, 
whenever a vertex $v$ of a graph $G$ changes its color, say from $a$ to $b$, then the color of any neighbor of $v$ in $G$ is either $a$ or $b$. The natural
structural property on a reflexive undirected graph $H$ that ensures that
any $H$-recoloring sequence satisfies the \pop property is that $H$ is
triangle-free. 
We obtain a
characterization of sequences of color changes of vertices of $G$ that generate
$H$-recoloring sequences satisfying the \pop property (see
Theorem~\ref{thm:R:realwalks}). This result implies that  \recollong{H} admits a
polynomial-time algorithm for any undirected reflexive graph $H$ of girth at
least 5, which has been proved recently by Lee et al.~using other
methods~\cite{Lee:21}. For reflexive digraphs $H$, we can apply the algorithm of
Theorem~\ref{thm:R:realwalks} (almost) as a black box for the corresponding undirected graph $\undir{H}$ and then check whether any of the
$\undir{H}$ recoloring sequences is compatible with the orientation of the arcs
of $H$. Depending on which case of Theorem~\ref{thm:R:realwalks} applies,
different levels of sophistication are required, but in any case there is a
polynomial-time algorithm. We thus obtain Theorem~\ref{thm:reflexive}.

\subsection{Related work}

The complexity of \recollong{H} for undirected graphs $H$ has been studied
systematically, in particular since the work of Cereceda et
al.~\cite{Cereceda:11}, who showed that if $H$ is a $K_3$, a complete graph on
three vertices, then \recollong{H} admits a polynomial-time algorithm, despite
\csp{K_3} (``3-Coloring'') being \NP-complete. Wrochna generalized this result
in~\cite{Wrochna:20} showing that \recollong{H} admits a polynomial-time
algorithm if $H$ is loopless and square-free.  Brewster and al.~gave a
complexity classification of \recollong{H} for the class of  circular cliques
$C_{p, q}$~\cite{Brewster:16}. Note that their polynomial-time algorithm for $2
\leq p/q < 4$ includes graphs $H$ that are not square-free. Recently, Lee et
al.~adapted Wrochna's algorithm to the case that $H$ is reflexive and has girth
at least 5~\cite{Lee:21}.  On the negative side, it is known that \recollong{H}
is \PSPACE-complete if $H$ is a clique on at least four vertices~\cite{BC:09},
a circular clique $C_{p, q}$ where $p/q \geq 4$~\cite{Brewster:16}, a wheel on
a $k$-cycle, where $k \geq 3$ and $k \neq 4$~\cite{Lee:20}, or a
quadrangulation with certain properties~\cite{Lee:20}.
    
We are aware of two results for \recollong{H} for digraphs $H$. The first one is
by Brewster et al., who showed that \recollong{H} admits a polynomial-time
algorithm if $H$ is a reflexive digraph cycle that does not contain a 4-cycle
of algebraic girth 0~\cite{Brewster:17}. In spirit this algorithm uses the
topological approach of Wrochna that reduces the task of finding $H$-recoloring
sequences to finding vertex walks in $H$. 
Secondly, Dochterman and Singh study the \Hom-complex for digraphs $G$ and $H$
and show that it is connected (in the topological sense) if $H$ is a transitive
tournament $T_n$~\cite{Dochtermann:21} on $n$ vertices. From this they conclude
that any instance of \recollong{T_n} is a \yes-instance and give a
polynomial-time algorithm that finds a $T_n$-recoloring sequence. The algorithm
is simple and does not need any topological tools. It boils down to the fact
that a homomorphism of an acyclic digraph into a tournament corresponds to a
linear extension of a partial order. To reconfigure one linear extension into
another, we may take the last element where the two linear extensions disagree and assign to the vertex with the smaller image the larger image (with respect to
the total order).

Further results are known for \recollong{\mathcal{H}} for a relational
structure $\mathcal{H}$ on a Boolean domain. This problem corresponds to the
reconfiguration of satisfying assignments of Boolean formulas.
In~\cite{Gopalan:06}, Gopalan et al.~provide a complexity trichotomy for this
problem, characterizing the relations for which Boolean satisfiability
reconfiguration is in \P, \NP-complete, or \PSPACE-complete. 
Another popular theme with a different flavor is the reconfiguration of subgraphs, which may be
considered to be homomorphisms (injective or not) from a fixed graph $H$ to a
given graph $G$. In this context, Ito et al.~showed that reconfiguring directed paths
is \PSPACE-complete. Often however the graph $H$ is not fixed, but any
subgraph of a certain ``shape'', e.g., any tree, would be acceptable.  We refer
the readers to the survey of Nishimura~\cite{Nishimura:18} for an overview of
known results in this direction, in particular those on the reconfiguration of
independent sets, as well as~\cite{Tesshu:20} for results on spanning and
induced subgraphs. A general introduction to reconfiguration problems that
also discusses their relation to combinatorial games and puzzles can be found
in~\cite{Heuvel:13}.

\subsection{Organization}

Section~\ref{sec:preliminaries} contains notation and basic definitions that are
needed to prove Theorems~\ref{thm:c4free} and~\ref{thm:reflexive}. In
Section~\ref{sec:homgraph} we will discuss the relation of \recollong{H} to the
Hom-graph. Section~\ref{sec:monochromatic} contains the polynomial-time
algorithm for \recollong{H}, where $H$ is a loopless digraph that contains no
4-cycle of algebraic girth 0 (Theorem~\ref{thm:c4free}).
Section~\ref{sec:push-pull} is divided into two parts. In
Section~\ref{sec:unreflexive} we adapt Wrochna's algorithm to undirected
reflexive graphs. We use this algorithm in Section~\ref{sec:R:directed} in
order obtain a polynomial-time algorithm for \recollong{H} for reflexive
digraphs $H$ that contain neither a 4-cycle of algebraic girth 0 nor a triangle
of algebraic girth 1 (Theorem~\ref{thm:reflexive}).

\section{Preliminaries}
\label{sec:preliminaries}

A directed graph (digraph) is a pair $(V(G), A(G))$ where $V(G)$ is a finite
set of vertices and $A(G) \subseteq V(G) \times V(G)$ are \textit{arcs}. We
write $u \rightarrow v$ when $uv \in A(G)$. We say that a digraph $G$ is
\emph{symmetric} if $vu \in A(G)$ whenever $uv \in A(G)$. A digraph $G$ is
\emph{reflexive} if $uu \in A(G)$ for each vertex $u \in V(G)$. We interpret a
symmetric digraph as \emph{undirected graph} and think of two edges $\{uv,
vu\}$ as undirected edge, which we also write as $uv$ since it should be clear
from the context whether we refer to a directed or undirected edge. We write
$E(G)$ for the set of undirected edges of a symmetric graph $G$. For any
digraph $G$, we canonically associate to $G$ an unoriented graph $\bar{G}$
where $V(\bar{G}) = V(G)$ and $uv \in E(\bar{G})$ if $u \rightarrow v$ or $v
\rightarrow u$. Let $G$ be a digraph. The \emph{in-neighborhood} (resp.,
\emph{out-neighborhood}) of a vertex $v \in V(G)$ is given by $N^-_G(v) := \lbrace w
\in V(G) \mid w \rightarrow v \rbrace$ (resp., $ N^+_G(v) := \lbrace w \in V(G)
\mid v \rightarrow w \rbrace$).  If $G$ is symmetric (undirected), the
neighborhood $N_G(v)$ of a vertex $v \in V(G)$ is the set of vertices adjacent
to $v$ in $G$, that is, $N_G(v) := \lbrace w \in V(G) \mid vw \in E(G)
\rbrace$. 
Let $G$ and $H$ be digraphs. A homomorphism $\phi : G \to H$ or ($H$-coloring
of $G$) is a map $V(G) \to V(H)$ that preserves arcs, that is, for each $u
\rightarrow v$, we have $\phi(u) \rightarrow \phi(v) $. Similarly, for
undirected graphs $G$ and $H$, a homomorphism $\phi \colon G \to H$ is a map
$V(G) \to V(H)$ that preserves edges (but not necessarily non-edges).  A
homomorphism $\alpha \colon G \to H$ also induces a homomorphism
$\undir{\alpha} \colon \undir{G} \to \undir{H}$.

A \emph{walk} $W$ in an undirected graph $G$ is a sequence of consecutive edges $W
=(v_1v_2)(v_2 v_3)\ldots (v_{n-1} v_n)$. The reverse walk $W^{-1}$ of a walk $W
=(v_1v_2)(v_2v_3)\ldots (v_{n-1}v_n)$ is the
walk $W^{-1} = (v_n v_{n-1}) \ldots (v_2 \, v_1)$. The length $|W|$ of $W$
the number of edges of $W$. A \emph{cycle} $C$ is a closed walk, i.e., a walk
such that $v_1 = v_n$.  A walk in digraph $G$ is a walk in $\undir{G}$. The
\textit{algebraic girth} of a cycle $C$ in a digraph $G$ is the absolute value
of the number of forward arcs minus the number of backward arcs. We say that a
graph or a digraph $G$ is \emph{connected} if for any two vertices $u,v \in
V(G)$ there is a walk from $u$ to $v$ in $G$. A walk $W =
(v_1v_2)\ldots(v_{n-1}v_n)$ in a digraph is \emph{directed} if $v_i \rightarrow
v_{i+1}$ for all $1 \leq i \leq n-1$. The walk $W$ is \emph{symmetric} if both
$W$ and $W^{-1}$ are directed.  We denote the empty
walk by $\varepsilon$.

\paragraph{The fundamental groupoid.} Let $H$ be an undirected or directed graph. Given a walk $W = (v_1 \, v_2)(v_2 \, v_3) \ldots (v_{n-1} \, v_n)$ in $H$, we call \emph{reduction} the two following operations (see Figure~\ref{fig:reduction}):
\begin{itemize}
\item The operation of deleting $(v_i \, v_{i+1})(v_{i+1} \, v_{i+2})$ from $W$ if $v_i = v_{i+2}$ and $1 \leq i \leq n-2$
\item The operation of deleting $(v_i \, v_{i+1})$ from $W$ if $v_i = v_{i+1}$ and $1 \leq i \leq n-1$. Note that this operation requires a loop on $v_i$, so it applies in particular if $H$ is reflexive.
\end{itemize}

We say that $W$ is \textit{reduced} if none of the two operations above can be
applies. That is, for $1 \leq i \leq n-2$, we have $v_{i+2} \neq v_i$ and for
$1 \leq i \leq n-1$, we have $v_{i+1} \neq v_i$. We can \emph{reduce} a walk
$W$ by iteratively applying reductions on it; we can easily see that by doing
so we obtain a unique reduced walk. By considering two walks to be equivalent
if they reduce to the same walk, we obtain an equivalence relation~$\sim$ on
the walks in $H$. The fundamental groupoid $\pi(H)$ is the set of all
equivalence classes of walks in $H$ under~$\sim$. Its groupoid operation is the
concatenation $\cdot$ of walks and its neutral element is the empty walk
$\varepsilon$. For any walk $W= (v_1 \, v_2) \ldots (v_{n-1} \, v_n)$, the
inverse of (the class of) $W$ in $\pi(H)$ is (the class of) the reversed walk
$W^{-1} = (v_n \, v_{n-1}) \ldots (v_2 \, v_1)$ since both $WW^{-1}$ and
$W^{-1}W$ reduce to $\epsilon$. In the next sections of this paper, we will
write $W_1 = W_2$ \emph{in} $\pi(H)$ if $W_1 \sim W_2$, that is, $W_1$ and
$W_1$ reduce to the same walk.

\begin{figure}
\begin{tikzpicture}
	\begin{scope}
	\dbubblegraph;
	\node[G1] (v0) at ($(ha0) + (60:4pt)$) {};
	\node[G] (v1) at ($(ha1) + (60:5pt)$) {};
	\node[G] (v2) at ($(ha2) + (90:5pt)$) {};
	\node[G] (v3) at ($(hb2) + (-30:5pt)$) {};
	\node[G] (v4) at ($(hb1) + (-150:5pt)$) {};
	\node[G] (v5) at ($(hb0) + (150:5pt)$) {};
	\node[G] (v6) at ($(hb0) + (30:5pt)$) {};
	\node[G] (v7) at ($(hb1) + (30:5pt)$) {};
	\node[G] (v8) at ($(hb2) - (-30:5pt)$) {};
	\node[G] (v9) at ($(ha2) - (90	:5pt)$) {};
	\node[G] (v10) at ($(ha1) - (60:5pt)$) {};
	\node[G] (v11) at ($(ha0) - (60:5pt)$) {};
	\node[G] (v12) at ($(hb0) + (-120:5pt)$) {};
	\node[G1] (v13) at ($(hb5)$) {};
	\draw[Gve] (v0)--(v1)--(v2)--(v3)--(v4)--(v5)--(v6)--(v7)--(v8)--(v9)--(v10)--(v11)--(v12)--(v13);
	\node at (2,1.5) {\Large $W_1$};
	\end{scope}
	\begin{scope}[shift={(5.5,0)}]

	\dbubblegraph;
	\node[G1] (v0) at ($(ha0)$) {};
	\node[G] (v1) at ($(hb0)$) {};
	\node[G1] (v2) at ($(hb5)$) {};
	\draw[Gve] (v0)--(v1)--(v2);
	\node at (2,-1.5) {\Large $W_2$};
	\end{scope}
	
	\begin{scope}[shift={(11,0)}]
	\dbubblegraph;
	\node[G1] (v0) at ($(ha0)$) {};
	\node[G] (v1) at ($(ha1)$) {};
	\node[G] (v2) at ($(ha2)$) {};
	\node[G] (v3) at ($(hb2)$) {};
	\node[G] (v4) at ($(hb1)$) {};
	\node[G] (v5) at ($(hb0)$) {};
	\node[G1] (v6) at ($(hb5)$) {};
	\draw[Gve] (v0)--(v1)--(v2)--(v3)--(v4)--(v5)--(v6);
	\node at (2.2,1.2) {\Large $W_3$};
	\end{scope}
\end{tikzpicture}

\caption{Consider the grey graph $H$ and the three walks $W_1, W_2$ and $W_3$ in $H$ (their successive vertices are drawn in black). The walk $W_1$ reduces to $W_2$. The walk $W_3$ is already reduced.}
\label{fig:reduction}
\end{figure}
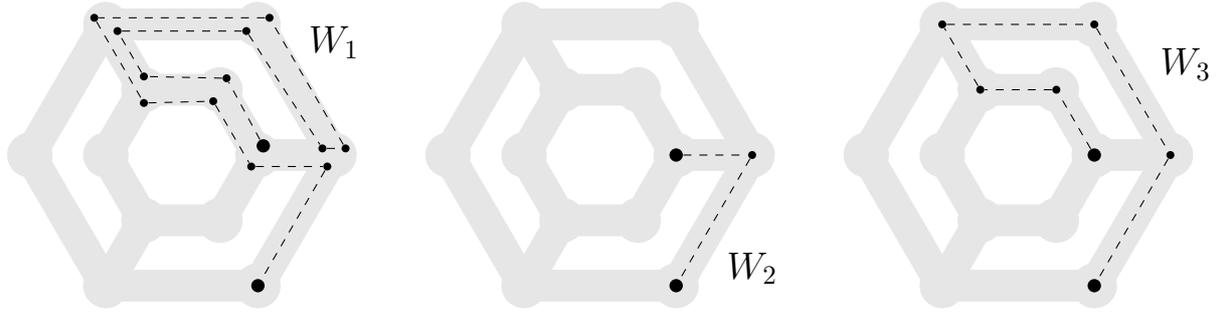

\paragraph{Cyclic reduction.} We say that closed walk $C = (v_1v_2)\ldots(v_{n-1}v_1)$ is \emph{cyclically reduced} if it is reduced and additionally $v_2 \neq v_{n-1}$. We can \emph{cyclically reduce} any reduced closed walk $C$ by iteratively deleting from it both its first and last edges while one is the inverse of the other. This operation leads to a unique decomposition $C = A^{-1} C_0 A$ where $A$ is the sequence of deleted edges of $C$ and $C_0$ is cyclically reduced.

\paragraph{The $H$-recoloring problem.} Let $G$ and $H$ be digraphs. Recall that two digraph homomorphisms $\alpha, \beta : G \to H$ admit an \emph{$H$-recoloring sequence} if for some $\ell$ there are digraph homomorphisms $f_1, f_2, \ldots, f_\ell : G \to H$, such that $\alpha = f_1$, $\beta = f_\ell$ and for $1 \leq i < \ell$, the two homorphisms $f_i$ and $f_{i+1}$ differ with respect to the color of a single vertex of $G$. The problem \recol{H} asks whether, given a graph $G$ and two digraph homomorphisms $\alpha, \beta : G \to H$, the homomorphisms $\alpha$ and $\beta$  admit an $H$-recoloring sequence.
Equivalently, the problem \recol{H} can be defined as a reachability problem on the following (undirected) \emph{reconfiguration graph} $\mathcal{R}_H(G)$. The vertex set of $\mathcal{R}_H(G)$ is the set of homomorphisms from $G$ to $H$ and two vertices of $\mathcal{R}_H(G)$ are adjacent if the corresponding homomorphisms differ with respect to the image of exactly one vertex of $G$. Hence, \recol{H} asks whether there is a path from $\alpha$ to $\beta$ in  $\mathcal{R}_H(G)$.

\paragraph{In this entire paper we assume that $G$ and $H$ are directed or undirected, (weakly) connected graphs, with at least two vertices.} One can see that assuming the connectivity of $G$ and $H$ imposes no restrictions, since if $G$ is not connected, we may consider the reconfiguration of each connected component of $G$ separately. If $H$ is not connected, observe that any connected component of $G$ maps to a connected component of $H$.

\section{\recollong{H} and \Hom-graphs}
\label{sec:homgraph}

In this section we will discuss different natural adjacency relations that have
been considered for digraph homomorphisms.  Our main technical goal is to show
that if $H$ does not contain a cycle of algebraic girth 0 then \recollong{H} is
equivalent to reachability in the graph $\Hom_1(G, H)$, which will be defined
shortly. In the setting of undirected reflexive graphs this has been observed
in~\cite{Lee:21} and the proof essentially carries over to digraphs. Similar to~\cite{Lee:21},
we use this result in order to generalize an algorithm for reflexive
instances of \recollong{H}, where $H$ is reflexive and triangle-free, to all
instances to obtain Theorem~\ref{thm:reflexive}. 

Let us consider an adjacency relation on graph homomorphisms that is different from the one used to define $\mathcal{R}_H(G)$. It yields the so-called \Hom-graph $\Hom(G, H)$, whose vertices are
homomorphism from $G$ to $H$ and two homomorphisms $\phi, \psi : G \to H$ are
adjacent if for each edge $uv$ of $G$, we have that $\phi(u)
\psi(v)$ is an edge of $H$. 
It is known that for simple, undirected graphs (symmetric digraphs),
the graphs $\mathcal{R}_H(G)$ and $\Hom(G,H)$ have the same connected
components (although their edge sets may be different). For reflexive
undirected graphs or digraphs (with or without loops) this is not the case in
general (an example is given below).  That being said, Brewster and
al.~generalized their results for \recollong{H} where $H$ is a reflexive digraph cycle from~\cite{Brewster:17} to the corresponding recoloring problems
in the \Hom-graph in~\cite{Brewster:21} (with or without restriction on the number of vertex that can change color at the same step). 

Homomorphisms that are adjacent in $\Hom(G, H)$ may differ with respect to the
colors of more than one vertex of $G$. In order to get something more similar
to the graph $\mathcal{R}_H(G)$, we may consider the subgraph $\Hom_1(G, H)$ of
$\Hom(G, H)$, where two homomorphisms are adjacent if they are in $\Hom(G, H)$
and they differ with respect to the color of exactly one vertex of $G$. Let
\recolun{H} be the problem that asks whether, given a digraph $G$ (symmetric or
not) and two homomorphisms $\phi, \psi : G \to H$, there is a path connecting
$\phi$ and $\psi$ in $\Hom_1(G, H)$. To illustrate the differences between the
graphs $\mathcal{R}_H(G)$, $\Hom_1(G, H)$ and $\Hom(G, H)$, we consider the
example shown in Figure~\ref{fig:homvsrecol}. The graph $G$ is the oriented
cycle shown in black. The graph \loops{H} is an orientation of the complete graph on
five vertices with a loop on each vertex and the graph $G$ is an oriented 5-cycle. Two homomorphisms
$\phi, \psi : G \to H$ are shown figures~\ref{fig:homvsrecol:phi}
and~\ref{fig:homvsrecol:psi}, respectively. The two homomorphisms are adjacent
in $\Hom(G, \loops{H})$, but not in $\Hom_1(G, \loops{H})$ or in $\mathcal{R}_\loops{H}(G)$, since only a
single vertex can change its color. Let $G'$ be the graph with a single looped vertex $u$. We
observe that $u$ can change its color freely to any other vertex of \loops{H} in
$\mathcal{R}_{\loops{H}}(G')$, but in $\Hom(G', \loops{H})$ and $\Hom_1(G',
\loops{H})$ it can only move to a neighbor in \loops{H}.  

We denote by \recolun{H} the problem of deciding whether for two given
homomorphisms $\phi, \psi$ from a graph $G$ to a fixed template graph $H$ there
is a path in $\Hom_1(G, H)$.  We show that under the conditions imposed by
Theorems~\ref{thm:c4free} the problems \recol{H} and \recolun{H} are
equivalent. The proof is essentially the same as the one given in~\cite{Lee:21} for
reflexive undirected graphs.

\begin{lemma} 
    \label{lem:equirecol} 
    Let $H$ be a reflexive digraph. If $H$ does not contain a $4$-cycle of
    algebraic girth $0$ then \recol{H} and \recolun{H} are polynomially
    equivalent.
\end{lemma}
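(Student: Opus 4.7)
The easy direction is immediate: by construction $\Hom_1(G, H)$ is a spanning subgraph of the reconfiguration graph $\mathcal{R}_H(G)$, so every \recolun{H}-\yes-instance is also a \recol{H}-\yes-instance. Thus \recolun{H} reduces to \recol{H} by the identity reduction. For the converse, it is enough to show that on every instance $(G, \alpha, \beta)$, if $\alpha$ and $\beta$ are in the same component of $\mathcal{R}_H(G)$, then they are in the same component of $\Hom_1(G, H)$; the polynomial bound on the length of the replacement sequence will follow from the construction.

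The first task is to pinpoint where the two adjacency notions actually diverge. A direct inspection of the $\Hom_1$-condition shows that if two homomorphisms differ only on a vertex $v$ that carries no self-loop in $G$, the condition $\phi(u)\psi(v')$ being an arc of $H$ for each arc $u \to v'$ of $G$ incident to $v$ (with $u, v' \neq v$) reduces to $\psi$ being a homomorphism, which is given. Hence the $\Hom_1$- and $\mathcal{R}$-adjacencies coincide for color changes at loopless vertices of $G$, and the only problematic steps are those that change $v$ from $a$ to $b$ when $v$ has a self-loop in $G$, since these additionally demand the arc(s) between $a$ and $b$ prescribed by the self-loop at $v$.

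The second task is to replace each such problematic \recol{H}-step by a sequence of $\Hom_1$-valid color changes of $v$ alone. Let $f_i$ be the homomorphism just before the step and consider the feasibility set
$C(v) = \{ x \in V(H) \mid f_i(u) \to x \text{ for all } u \in N_G^-(v) \setminus \{v\} \text{ and } x \to f_i(w) \text{ for all } w \in N_G^+(v) \setminus \{v\} \}$
of colors compatible with the frozen colors of $v$'s other neighbors. Both $a$ and $b$ belong to $C(v)$, and I aim to exhibit a walk $a = x_0, x_1, \ldots, x_k = b$ inside $C(v)$ whose consecutive pairs satisfy the self-loop $\Hom_1$-condition. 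This is the point where the hypotheses on $H$ enter: reflexivity guarantees that every vertex carries the loop needed to realise intermediate colorings, while the absence of a 4-cycle of algebraic girth 0 in $H$ translates into a $C_4$-freeness property of the underlying undirected graph $\undir{H}$ that makes $C(v)$ sufficiently rich to contain such a walk, preventing the configurations that would otherwise force a detour outside $C(v)$.

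The main obstacle is establishing the existence of these replacement walks, and this is the direct digraph analogue of the argument used by Lee et al.~\cite{Lee:21} for undirected reflexive graphs; the only new ingredient is bookkeeping of orientations, which the algebraic-girth-0 hypothesis is tailored to handle. Once the replacement walks are available, each has length at most $|V(H)|$, so splicing them into the original $\mathcal{R}_H(G)$-sequence of length $\ell$ yields a $\Hom_1(G, H)$-sequence of length $O(\ell \cdot |V(H)|)$ from $\alpha$ to $\beta$, completing the polynomial reduction \recol{H} $\leq_p$ \recolun{H}.
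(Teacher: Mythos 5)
Your easy direction is correct, and your observation that a $\Hom_1$-step can only fail at a vertex carrying a loop in $G$ agrees with the paper's proof. The problem is the converse. You reduce it to exhibiting a walk from $a$ to $b$ inside $C(v)$ whose steps respect the loop condition, and you justify its existence by claiming that the absence of a 4-cycle of algebraic girth $0$ in $H$ ``translates into a $C_4$-freeness property of the underlying undirected graph $\undir{H}$.'' That translation is false: a directed $4$-cycle $a \to b \to c \to d \to a$ has algebraic girth $4$ and is permitted by the hypothesis, yet its underlying graph $\undir{H}$ contains a $C_4$. So the hypothesis gives you no square-freeness of $\undir{H}$, the richness of $C(v)$ is unsupported, and the existence of the replacement walk --- which is the entire content of the lemma --- is left unproved.

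The paper's mechanism is different and does not rely on any connectivity of a feasibility set. Since $a$ and $b$ are non-adjacent, each neighbor of $u$ has a color that is a common neighbor of $a$ and $b$ in $H$, distinct from both. If two neighbors of $u$ had distinct colors $h \neq h'$, then $a,h,b,h'$ would span a $4$-cycle, and because $h$ and $h'$ are each forced to be either common out-neighbors or common in-neighbors of $a$ and $b$ (depending on the orientation of the corresponding $G$-arc at $u$), the algebraic girth of that $4$-cycle is $0$ --- contradiction. Hence all neighbors of $u$ share one color $h$, and the modification $\phi'$ with $\phi'(u)=h$ (which is a homomorphism by reflexivity of $H$) gives a length-$2$ detour $\phi,\phi',\psi$ in $\Hom_1(G,H)$, yielding a $2\ell$-length replacement sequence rather than $O(\ell\,|V(H)|)$. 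Your sketch stops just short of this monochromatic-neighborhood observation, and I don't see how to close the gap inside $C(v)$ without essentially rediscovering it, since $C(v)$ by itself need not carry any useful walk between $a$ and $b$.
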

\begin{proof}
	Let $\alpha,\beta \colon G \to H$. By the definition of $\Hom_1(G, H)$ it
	is clear that any path from $\phi$ to $\psi$ in $\Hom_1(G, H)$ is also
	a path in $\mathcal{R}_H(G)$, so in particular a \yes-instance of
	\recolun{H} is also a \yes-instance of \recollong{H}.

	Conversely, consider any step $\phi, \psi$ of an $H$-recoloring sequence
	from $\alpha$ to $\beta$ (with respect to $\mathcal{R}_H(G)$. We show how to obtain in polynomial time a
	path from $\phi$ to $\psi$ in $\Hom_1(G, H)$. Assume that $\phi$ and
	$\psi$ differ with respect to the color of a single vertex of $G$, say
	$u$, and let $a = \phi(u)$ and $b = \psi(u)$. %If $a$ and $b$ are neighbors in $H$ then $\phi$ and $\psi$ are adjacent in $\Hom_1(G, H)$ and we are done. Otherwise, observe
	%that 
	If $a$ and $b$ are
	adjacent in $H$ then $\phi$ and $\psi$ are adjacent in $\Hom_1(G, H)$
	and we are done.
	Recall that $u$ is not isolated since $G$ is weakly connected and has
	at least two vertices. Thus, if $u$ has no loop then $\phi$ and $\psi$ are
	adjacent in $\Hom_1(H)$ and we are done.  
	Otherwise, consider a neighbor $v$ of $u$. 
	Since $\phi$ and $\psi$ agree on every vertex of $G$ except $u$, we have that $h =
	\phi(v) = \psi(v)$ and $h$ is a common neighbor of $a$ and $b$ in $H$,
	thus distinct from $a$ and $b$ as they are non-adjacent. If any
	neighbor $w$ of $u$ has a color different from $h$ with respect to
	$\phi$ and $\psi$ then again this color is a common neighbor of $a$ and
	$b$, distinct from $a$ and $b$. Therefore, $H$ contains a $4$-cycle of
	algebraic girth $0$ (see Figure~\ref{fig:zerogirth}), which contradicts
	our assumptions on $H$. Hence, all neighbors of $u$ have color $h$. Let 
	\[
		\phi'(w) = 
		\begin{cases}
			h & \text{if $w = u$, and}\\
			\phi(w) &   \text{otherwise ($w \in V(G) - u$).}
		\end{cases}
	\]
	Since $\phi'$ maps all arcs that are incident to $u$ to the same loop
	on $h$ and $\phi$ is a homomorphisms, we have that $\phi'$ is a
	homomorphism from $G$ to $H$.  Furthermore, $\phi, \phi', \psi$ is a
	path from $\phi$ to $\psi$ in $\Hom_1{G, H}$. Hence, we apply this
	reasoning to all steps of the initial $H$-recoloring sequence from
	$\alpha$ to $\beta$ to obtain a path from $\phi$ to $\psi$ in
	$\Hom_1(G, H)$ that is at most twice as long as the initial one.
\end{proof}

\begin{figure}
	\subcaptionbox{Illustration of the homomorphisms $\phi : G \to H$.\label{fig:homvsrecol:phi}}[0.5\textwidth] {
                \centering
		\begin{tikzpicture}
		    \begin{scope}
			\dstargraph;
			\node[G,label=0:$v_0$] (v0) at ($(ha0)$){};
			\node[G,label=72:$v_1$] (v1) at ($(ha1)$){};
			\node[G,label=144:$v_2$] (v2) at ($(ha2)$){};
			\node[G,label=-144:$v_3$] (v3) at ($(ha3)$){};
			\node[G,label=-72:$v_4$] (v4) at ($(ha4)$){};
			\draw[Ge] (v0)--(v1) node[sloped,pos=0.55,allow upside down]{\arrowInG}; ;
			\draw[Ge] (v1)--(v2) node[sloped,pos=0.55,allow upside down]{\arrowInG}; ;
			\draw[Ge] (v2)--(v3) node[sloped,pos=0.55,allow upside down]{\arrowInG}; ;
			\draw[Ge] (v3)--(v4) node[sloped,pos=0.55,allow upside down]{\arrowInG}; ;
			\draw[Ge] (v4)--(v0) node[sloped,pos=0.55,allow upside down]{\arrowInG}; ;
			\node at (1.7,-1.4) {\Large$\phi$};
		    \end{scope}
	    \end{tikzpicture}
    }
    \subcaptionbox{Illustration of the homomorphisms $\psi : G \to H$.\label{fig:homvsrecol:psi}}[0.5\textwidth] {
                \centering
		\begin{tikzpicture}
			\begin{scope}[shift={(5.5,0)}]
				\dstargraph;
				\node[G,label=0:$v_4$] (v0) at ($(ha0)$){};
				\node[G,label=72:$v_0$] (v1) at ($(ha1)$){};
				\node[G,label=144:$v_1$] (v2) at ($(ha2)$){};
				\node[G,label=-144:$v_2$] (v3) at ($(ha3)$){};
				\node[G,label=-72:$v_3$] (v4) at ($(ha4)$){};
				\draw[Ge] (v0)--(v1) node[sloped,pos=0.55,allow upside down]{\arrowInG}; ;
				\draw[Ge] (v1)--(v2) node[sloped,pos=0.55,allow upside down]{\arrowInG}; ;
				\draw[Ge] (v2)--(v3) node[sloped,pos=0.55,allow upside down]{\arrowInG}; ;
				\draw[Ge] (v3)--(v4) node[sloped,pos=0.55,allow upside down]{\arrowInG}; ;
				\draw[Ge] (v4)--(v0) node[sloped,pos=0.55,allow upside down]{\arrowInG}; ;
				\node at (1.7,-1.4) {\Large$\psi$};
			\end{scope}
		\end{tikzpicture}
	}
	\caption{A reflexive digraph \loops{H}, which is an orientation of a complete graph (shown in grey) and two homomorphisms $\phi, \psi$ mapping a directed 5-cycle $G$ to $H$ (shown in black in~\ref{fig:homvsrecol:phi} and~\ref{fig:homvsrecol:psi}, respectively). The homomorphisms $\phi$ and $\psi$ are adjacent in $\Hom(G, H)$ but not in  $\Hom_1(G,H)$ or $\mathcal{R}_H(G)$.}
	\label{fig:homvsrecol}
\end{figure}
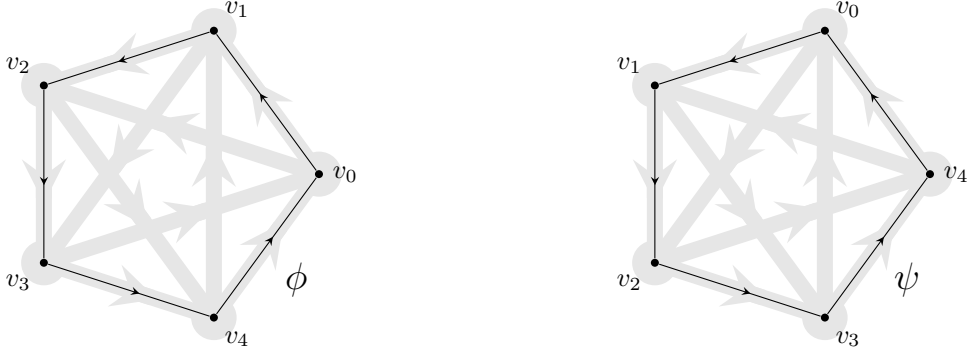

\section{General tools}
\label{sec:general}

This section introduces tools that will be used in the proof of both Theorem~\ref{thm:c4free} and Theorem~\ref{thm:reflexive}. Let $G$ and $H$ be both undirected graphs or digraphs, with loops or not. Let $\alpha, \beta : G \to H$ be graph homomorphisms and let $q \in V(G)$. The topological validity condition introduced by Wrochna in~\cite{Wrochna:20} can be stated in our very general setting as follows;
please refer to~\cite{Wrochna:20} for an explanation of the topological intuition behind it. 

\begin{definition}
A walk  $Q \in \pi(H)$ from $\alpha(q)$ to $\beta(q)$ is
\emph{topologically valid} with respect to $\alpha$, $\beta$, and $q$ if for any closed walk $C$ from $q$ to $q$ in $G$, we have $\beta(C) = Q^{-1} \cdot \alpha(C) \cdot Q$ in $\pi(H)$.
\end{definition}

Figure~\ref{fig:topval} gives an example of walks that are topologically valid and of walks that are not in the simple case when $G$ is a cycle $C$ (say a closed walk from $q$ to $q$). The topological validity of any walk $Q$ can therefore be easily checked by computing $Q^{-1} \alpha(C) Q$ and $\beta(C)$. $Q$ is topologically valid if and only if those reduce to the same walk in $\pi(H)$. So in the example, $Q_1$ and $Q_2$ are topologically valid but $Q_3$ is not.

\begin{figure}
\begin{tikzpicture}
    \begin{scope}
	    \dtwosqgraph;
	    \node[G1,label=-60:$q$] (v0) at ($(hb3)-(90:4pt)$) {};	
	    \node[G,label=180:$v_1$] (v1) at ($(hb2)-(0:4pt)$) {};
	    \node[G, label=90:$v_2$] (v2) at ($(hb3)+ (90:4pt)$) {};
    	\node[G,label=180:$v_3$] (v3) at ($(hb0)$) {};
    	\node[G] (v4) at ($(hb1)$) {};
    	\node[G] (v5) at ($(hb2)+ (0:4pt)$) {};
    	\draw[Ge] (v0)--(v1)--(v2)--(v3)--(v4)--(v5)--(v0);
    	\node at (1.5,1.5) {\Large$\alpha$};
    	\node at (4.2,1) {$\alpha(C)$};
    \end{scope}
	
	\begin{scope}[shift={(6.5,0)}]
		\dtwosqgraph;
		\node[G1,label=120:$q$] (v0) at ($(hb0)-(0:4pt)$) {};
		\node[G,label=90:$v_1$] (v1) at ($(hb1)+(90:4pt)$) {};
		\node[G, label=0:$v_2$] (v2) at ($(hb0)+(0:4pt)$) {};
		\node[G,label=-90:$v_3$] (v3) at ($(hb1)-(90:4pt)$) {};
		\node[G] (v4) at ($(hb2)$) {};
		\node[G] (v5) at ($(hb3)$) {};
		\draw[Ge] (v0)--(v1)--(v2)--(v3)--(v4)--(v5)--(v0);
		\node at (1.5,1.5) {\Large$\beta$};
		
	\end{scope}
\end{tikzpicture}

\begin{tikzpicture}[scale=0.8]
	\dtwosqgraph;
	\node[G1] (q0) at ($(hb3)$) {};
	\node[G1] (q1) at ($(hb0)$) {};
	\draw[Gve] (q0)--(q1) node[sloped,pos=0.65,allow upside down]{\arrowInG}; ; 
	\node at (1.5,-1.5) {$Q_1$};
	
	\begin{scope}[shift={(6,0)}]
		\dtwosqgraph;
		\node[G1] (q0) at ($(hb3)$) {};
		\node[G1] (q1) at ($(hb2)$) {};
		\node[G1] (q2) at ($(hb1)$) {};
		\node[G1] (q3) at ($(hb0)$) {};
		\draw[Gve] (q0)--(q1) node[sloped,pos=0.65,allow upside down]{\arrowInG}; ; 
		\draw[Gve] (q1)--(q2)--(q3);
		\node at (1.5,-1.5) {$Q_2$};
		
	\end{scope}
	\begin{scope}[shift={(12,0)}]
		\dtwosqgraph;
		\node[G1] (q0) at ($(hb3)$) {};
		\node[G1] (q1) at ($(hb0)-(90:4pt)$) {};
		\node[G1] (q2) at ($(ha0)-(90:4pt)$) {};
		\node[G1] (q3) at ($(ha3)$) {};
		\node[G1] (q4) at ($(ha2)$) {};
		\node[G1] (q5) at ($(ha1)$) {};
		\node[G1] (q6) at ($(ha0)+(90:4pt)$) {};
		\node[G1] (q7) at ($(hb0)+(90:4pt)$) {};
		\draw[Gve] (q0)--(q1) node[sloped,pos=0.65,allow upside down]{\arrowInG}; ; 
		\draw[Gve] (q1)--(q2)--(q3)--(q4)--(q5)--(q6)--(q7);
		\node at (1.5,-1.5) {$Q_3$};
		
	\end{scope}
\end{tikzpicture}
\caption{$H$ is the gray graph on $8$ vertices. $G$ is the $6$-cycle in black. $Q_1,Q_2$ and $Q_3$ are three walks from $\alpha(q)$ to $\beta(q)$. $Q_1$ and $Q_2$ are topologically valid but $Q_3$ is not (Intuitively, $Q_1$ and $Q_2$ pull the cycle of $G$ so that it turns around the right square of $H$, $Q_3$ would pull $G$ so that it becomes stretched around the left square of $H$).}

\label{fig:topval}

\end{figure}
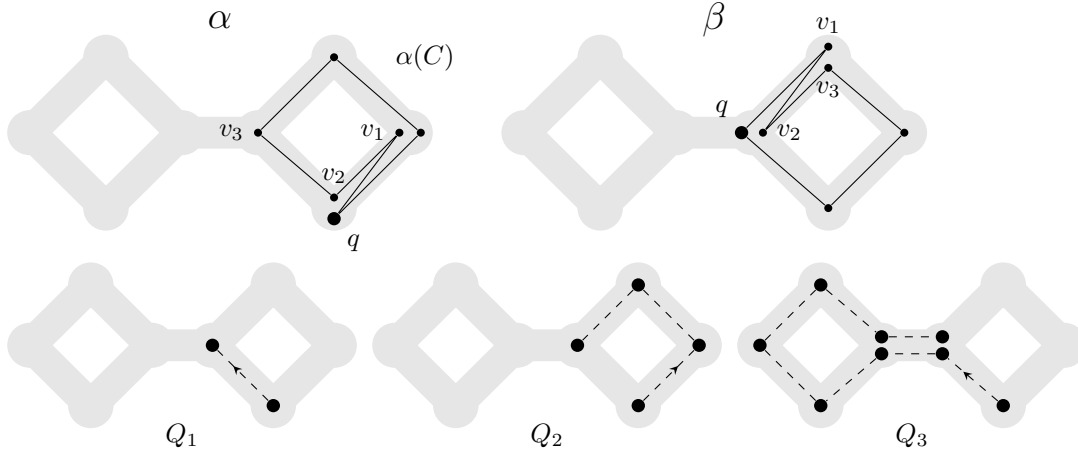

In~\cite{Wrochna:20}, Wrochna gives a classification of all topologically valid walks in the setting of square-free undirected graphs. However, we claim that this classification also works in our more general setting of digraphs, since:
\begin{itemize}
\item Orientation (when $G$ and $H$ are digraphs) plays no role for topological validity.
\item It is very clear in~\cite{Wrochna:20} that the proofs of this classification theorem do not use any structural hypothesis of $H$.
\end{itemize}

Hence, the proof of the next theorem follows, nearly word by word, from the proof of Theorem 7.5 in \cite{Wrochna:20} (also see \ref{rem:nostructure} from \cite{Wrochna:20})

\begin{theorem}[Wrochna,~\cite{Wrochna:20}] 
    \label{thm:T:Wrochna} 
    Let $\Pi$ be the set of all topologically valid walks for $\alpha,\beta,q$. One of the following holds:
	\begin{enumerate}
		\item $\Pi = \emptyset$. \label{itm:T:nowalk}
		\item $\Pi = \lbrace Q \rbrace$ for some $Q \in \pi(H)$. \label{itm:T:onlyQ}
		\item $\Pi = \lbrace R^n P \mid n \in \mathbb{Z} \rbrace$ for some $R,P \in \pi(H)$. \label{itm:T:walksRnP}
		\item $\Pi$ contains all reduced walks from $\alpha(q)$ to $\beta(q)$. \label{itm:T:allwalks}
	\end{enumerate}
	
Furthermore, we can determine in time $O(|V(G)|\cdot|E(G)| + |E(H)|)$ which case holds and output $Q$ or $R,P$ in cases~\ref{itm:T:onlyQ} and~\ref{itm:T:walksRnP} such that $|Q|,|R|,|P|$ are bounded by the total running time $O(|V(G)|\cdot|E(G)| + |E(H)|)$.
Case~\ref{itm:T:allwalks} happens if and only if $\alpha(C) = \beta(C) = \varepsilon$ for all closed walks $C$ in $G$.
\end{theorem}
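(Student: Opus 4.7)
The plan is to reduce topological validity to a finite system of conjugacy-type equations in the fundamental groupoid $\pi(H)$, and then to classify the solution sets using the free-group structure of $\pi(H)$.

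First, I would fix a spanning tree $T$ of $G$ rooted at $q$. Each non-tree edge $e$ of $G$ yields a fundamental cycle $C_e$ based at $q$, and the classes of the $C_e$ freely generate the fundamental group $\pi(G,q)$. Any closed walk $C$ at $q$ therefore decomposes in $\pi(G,q)$ as a product of the $C_e$ and their inverses. Since $\alpha$ and $\beta$ induce group homomorphisms to $\pi(H,\alpha(q))$ and $\pi(H,\beta(q))$, and since the relation $\beta(C) = Q^{-1}\alpha(C)Q$ is preserved by concatenation and inversion, a walk $Q$ from $\alpha(q)$ to $\beta(q)$ is topologically valid if and only if $\alpha(C_e) \cdot Q = Q \cdot \beta(C_e)$ in $\pi(H)$ for every non-tree edge $e$. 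This is a finite set of checks.

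Second, I would analyze a single equation $w \cdot Q = Q \cdot w'$ in $\pi(H)$. If exactly one of $w, w'$ is trivial, no solution exists; if both are trivial the equation is vacuous. Otherwise the equation admits a solution iff $w$ and $w'$ are conjugate in $\pi(H)$, equivalently iff their cyclic reductions coincide. In that case the set of solutions forms a coset $\{R^n P : n \in \mathbb{Z}\}$, where $R$ is (the translate to $\alpha(q)$ of) the primitive root of the cyclic reduction of $w$ and $P$ is a single particular solution. This rests on the classical fact that $\pi(H, \alpha(q))$ is a free group, so the centralizer of any nontrivial element is infinite cyclic and is generated by its primitive root.

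Third, I would intersect the solution sets coming from all fundamental cycles. If some equation is infeasible we land in Case~\ref{itm:T:nowalk}. If every equation is vacuous, which by the previous paragraph is exactly the condition $\alpha(C) = \beta(C) = \varepsilon$ for all closed walks $C$ at $q$, every reduced walk from $\alpha(q)$ to $\beta(q)$ is valid and we are in Case~\ref{itm:T:allwalks}, proving the final assertion of the theorem. Otherwise the intersection of finitely many cyclic cosets in a free group is empty, a single element, or again a cyclic coset, yielding Cases~\ref{itm:T:nowalk}, \ref{itm:T:onlyQ} and~\ref{itm:T:walksRnP}. For the running-time bound, I would compute $T$ in linear time, traverse each non-tree edge to produce $\alpha(C_e)$ and $\beta(C_e)$, and maintain reduced forms together with the current solution coset while scanning through the edges; the total work and the lengths of the returned walks $Q$, $R$, $P$ fit within $O(|V(G)| \cdot |E(G)| + |E(H)|)$.

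The main obstacle is the coset-intersection step: after fixing a cyclic coset from one equation, testing compatibility with the next requires recognising when two walks are powers of a common element of $\pi(H)$, and it is here that the rigidity of free groups — namely that commuting elements share a common root — does the real work. Once this is handled correctly, the four cases arise mechanically from the dichotomy between a trivial and a nontrivial centralizer.
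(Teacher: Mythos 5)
Your outline is correct and reconstructs essentially the argument of Wrochna's Theorem~7.5 in \cite{Wrochna:20}, which is precisely what the paper invokes here: topological validity is reduced to the conjugacy system $\alpha(C_e)\,Q = Q\,\beta(C_e)$ over a spanning-tree basis $\{C_e\}$ of $\pi(G,q)$, and the resulting solution sets are classified using the free-group fact that the centralizer of a nontrivial element is infinite cyclic, generated by its primitive root, so that finite intersections of the corresponding cyclic cosets are empty, singletons, or again a cyclic coset. Two small polish points: conjugate loops in $\pi(H)$ share a cyclic reduction only up to cyclic rotation (not literal coincidence), and since $G$ is connected you should observe that ``all equations vacuous'' is equivalent to $\alpha(C)=\beta(C)=\varepsilon$ for \emph{all} closed walks $C$ in $G$, not only those based at $q$, which is what the final sentence of the theorem asserts.
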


Let $q \in V(G)$ and $Q$ a walk from $\alpha(q)$ to $\beta(q)$. For all $v \in V(G)$, choose a walk $W_v$ from $q$ to $v$ (such a walk exists because $G$ is assumed to be connected). We say that $Q$ and the system of walks $(W_v)_{v \in V(G)}$ \emph{generates} the walks $S_v:= \alpha(W_v)^{-1} \cdot Q \cdot \beta(W_v)$. If $Q$ is topologically valid for $\alpha$, $\beta$, and $q$, then the following lemma ensures that none of the vertex walks $S_v$ depends on the choice of $W_v$.

\begin{lemma}[Lemma 4.4 in \cite{Wrochna:20}] 
	\label{lem:equigen} 
	If $Q$ is topologically valid for $\alpha$, $\beta$, and $q$ then, for
	any vertex $v \in V(G)$ and any two walks $W_1,W_2$ from $q$ to $v$ in
	$G$, we have $\alpha(W_1)^{-1} \cdot Q \cdot \beta(W_1) = \alpha(W_2)^{-1} \cdot Q\cdot
	\beta(W_2)$ in $\pi(H)$.
	
	In other words, the walks $S_v$ generated by $Q$ and a system of walks $(W_v)_{v \in V(G)}$ do not depend of the choice of the walks $W_v$ when $Q$ is topologically valid.
\end{lemma}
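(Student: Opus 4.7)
The plan is to reduce the claim to a single application of the topological validity condition applied to the closed walk obtained by concatenating $W_1$ with the reverse of $W_2$. First I would consider $C := W_1 \cdot W_2^{-1}$, which is a closed walk from $q$ to $q$ in $G$ since both $W_1$ and $W_2$ end at $v$. Because $Q$ is topologically valid for $\alpha, \beta, q$, the definition gives the identity $\beta(C) = Q^{-1} \cdot \alpha(C) \cdot Q$ in $\pi(H)$.

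Next I would unfold this identity using the fact that $\alpha$ and $\beta$ are graph homomorphisms, so they commute with the groupoid operations: $\alpha(C) = \alpha(W_1) \cdot \alpha(W_2)^{-1}$ and $\beta(C) = \beta(W_1) \cdot \beta(W_2)^{-1}$ in $\pi(H)$ (here I use that the image of a reversed walk is the reverse image, which holds for the reductions defining $\pi(H)$). Substituting these into the topological validity identity yields
\[
	\beta(W_1) \cdot \beta(W_2)^{-1} \;=\; Q^{-1} \cdot \alpha(W_1) \cdot \alpha(W_2)^{-1} \cdot Q \quad \text{in } \pi(H).
\]
A routine rearrangement in the groupoid $\pi(H)$, multiplying by $\alpha(W_2)^{-1} \cdot Q$ on the right and by $Q \cdot \alpha(W_1)^{-1}$ on the left (using that $\pi(H)$ is a groupoid so inverses exist), produces
\[
	\alpha(W_1)^{-1} \cdot Q \cdot \beta(W_1) \;=\; \alpha(W_2)^{-1} \cdot Q \cdot \beta(W_2) \quad \text{in } \pi(H),
\]
which is exactly the desired equality.

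The argument is purely formal once the correct closed walk $C$ is chosen, so there is no conceptual obstacle; the only point requiring a moment of care is verifying that the homomorphism $\alpha$ (resp.\ $\beta$) is well-defined on $\pi(G) \to \pi(H)$, i.e., that applying $\alpha$ to a reducible walk yields a walk which reduces compatibly. This is immediate from the two reduction operations: a back-and-forth subwalk $(v_i v_{i+1})(v_{i+1} v_i)$ is mapped to $(\alpha(v_i) \alpha(v_{i+1}))(\alpha(v_{i+1}) \alpha(v_i))$ which reduces, and a loop edge $(v_i v_i)$ is mapped to $(\alpha(v_i) \alpha(v_i))$ which also reduces. Hence the final rearrangement in $\pi(H)$ is unambiguous, completing the proof sketch.
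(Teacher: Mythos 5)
Your proof follows the same route as the paper's: form the closed walk $C = W_1 W_2^{-1}$, apply topological validity to get $\beta(W_1)\beta(W_2)^{-1} = Q^{-1}\alpha(W_1)\alpha(W_2)^{-1}Q$, then rearrange. The conclusion and the strategy are correct, and the remark that $\alpha,\beta$ induce well-defined maps on reduced walks is a sensible point to record. However, the multiplicands you name in the rearrangement step are not right, and in fact are not even composable in $\pi(H)$: $Q \cdot \alpha(W_1)^{-1}$ would require $Q$ to end at $\alpha(v)$, whereas $Q$ ends at $\beta(q)$; similarly $\beta(W_2)^{-1} \cdot \alpha(W_2)^{-1}$ does not compose, so right-multiplication by $\alpha(W_2)^{-1} Q$ is not defined. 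What works (and is the paper's computation) is multiplying from the left by $\alpha(W_1)^{-1} \cdot Q$ and from the right by $\beta(W_2)$: the left factor goes from $\alpha(v)$ to $\beta(q)$ and meets the closed walk at $\beta(q)$, and the cancellations $Q Q^{-1}$, $\alpha(W_1)^{-1}\alpha(W_1)$, and $\beta(W_2)^{-1}\beta(W_2)$ immediately produce $\alpha(W_1)^{-1} Q \beta(W_1) = \alpha(W_2)^{-1} Q \beta(W_2)$. So the slip is local and easily repaired, but in a groupoid (where composition is partial) you should make sure the elements you multiply by actually compose with both sides.
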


The proof in~\cite{Wrochna:20} perfectly applies to our context:

\begin{proof} 
	Note that $W_1W_2^{-1}$ is a closed walk in $G$ starting and ending at $q$, so by the topological validity of $Q$,
\[ 
\beta(W_1W_2^{-1}) = Q^{-1} \cdot \alpha(W_1W_2^{-1}) \cdot Q \; \text{in } \pi(H) \enspace.  
\] 
Therefore,
\[ 
	\beta(W_1)\beta(W_2)^{-1} = Q^{-1} \alpha(W_1)\alpha(W_2)^{-1} \cdot Q\enspace.
\]
Multiplying from the left by $\alpha(W_1)^{-1} \cdot Q$ and from the right by $\beta(W_2)$ gives the desired result.
\end{proof}

\begin{lemma} \label{lem:symwalks} Let $V' \subset V(G)$ be any nonempty set of vertices of $G$.  Let $q \in V'$ and for each $v \in V'$, let $W_v$ be a walk from $q$ to $v$ of length $|W_v| \leq |V(G)|$. Then the set of walks from $\alpha(q)$ to $\beta(q)$ that generate symmetric vertex walks with the system $(W_v)_{v \in V'}$ on $V'$ is one of the following:
\begin{enumerate}
    \item $\emptyset$. \label{itm:sym:empty}
    \item $\lbrace Q \rbrace$ for some symmetric walk $Q$ of length at most $2|V(G)|$. \label{itm:sym:onlyQ}
    \item All symmetric walks from $\alpha(q)$ to $\beta(q)$. \label{itm:sym:all}
\end{enumerate}
Furthermore, we can decide in time $O(|V(G)|\cdot|E(G)|)$ which case holds and output $Q$ in Case~\ref{itm:sym:onlyQ}.
\end{lemma}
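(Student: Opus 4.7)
The plan is to mirror the structure of Wrochna's proof of Theorem~\ref{thm:T:Wrochna}, adapted to the symmetric-walk setting. First observe that without loss of generality we may take $W_q = \varepsilon$, so $S_q = Q$, and hence any valid $Q$ must itself be a symmetric walk from $\alpha(q)$ to $\beta(q)$ in $H$. A necessary condition for the set to be nonempty is therefore that for every $v \in V'$, the vertices $\alpha(v)$ and $\beta(v)$ lie in the same connected component of the symmetric subgraph $\SYM(H)$ of $H$ (the spanning subgraph formed by the pairs of oppositely oriented arcs of $H$). This can be verified by BFS in $\SYM(H)$ from each $\alpha(v)$; if any check fails, output Case~\ref{itm:sym:empty}.

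Assuming those necessary conditions hold, the next step is to compute a candidate walk $Q_0$ by BFS in $\SYM(H)$ from $\alpha(q)$ to $\beta(q)$, which produces a reduced symmetric walk of length bounded by $2|V(G)|$. Then I would verify whether $Q_0$ is valid by reducing $\alpha(W_v)^{-1} \cdot Q_0 \cdot \beta(W_v)$ in $\pi(H)$ for each $v \in V'$ and checking that the resulting reduced walk lies in $\SYM(H)$. Each such check takes $O(|V(G)|)$ time because $|W_v| \leq |V(G)|$ and $|Q_0| \leq 2|V(G)|$, so the total work fits within the claimed $O(|V(G)| \cdot |E(G)|)$ budget.

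The trichotomy then follows from the following structural observation. If $Q_1$ and $Q_2$ are both valid, their difference $C := Q_1^{-1} Q_2$ is a symmetric closed walk at $\beta(q)$ with the property that $\beta(W_v)^{-1} \cdot C \cdot \beta(W_v)$ reduces to a symmetric closed walk at $\beta(v)$ for every $v \in V'$. Hence the set of valid walks (when nonempty) is a coset $Q_0 \cdot K$ of a subgroup $K$ of the group of symmetric closed walks at $\beta(q)$ in $\pi(H)$. The main technical step, which I expect to mirror the case analysis in Wrochna's proof of Theorem~\ref{thm:T:Wrochna} applied to the symmetric subgraph $\SYM(H)$, is to show that $K$ is either trivial---giving Case~\ref{itm:sym:onlyQ} with the single walk $Q_0$---or coincides with the full group of symmetric closed walks at $\beta(q)$---giving Case~\ref{itm:sym:all}---with no intermediate case arising under the constraints imposed by the $W_v$. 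This dichotomy is the hardest part of the argument, since it requires ruling out "partial" subgroups of symmetric cycles.

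Algorithmically, distinguishing Case~\ref{itm:sym:onlyQ} from Case~\ref{itm:sym:all} reduces to testing the condition on the finite generating set of symmetric closed walks at $\beta(q)$ obtained from a spanning tree of the component of $\beta(q)$ in $\SYM(H)$. Each generator test reduces and checks a walk of length $O(|V(G)| + |V(H)|)$, so the whole procedure still runs in $O(|V(G)| \cdot |E(G)|)$ time. In the failure case of the candidate check in the second paragraph, the same structural observation implies that no coset representative can work, giving Case~\ref{itm:sym:empty}.
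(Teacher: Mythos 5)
Your proposal has the right shape at a high level---the set of valid $Q$ must consist of symmetric walks, and if nonempty it is a coset $Q_0 \cdot K$ of a subgroup $K$ of symmetric closed walks at $\beta(q)$ because the $\alpha(W_v)$ factor cancels in $Q_1^{-1}Q_2$---but there are two concrete gaps that would break the argument and the algorithm.

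First, the candidate walk $Q_0$ computed by BFS in the symmetric subgraph $\SYM(H)$ need not be a valid walk even when the set is nonempty. The paper's proof shows that if some $\alpha(W_v)$ has non-symmetric edges $e_1,\dots,e_p$, so that $\alpha(W_v) = A_1e_1\cdots A_p e_p A_{p+1}$ and $\beta(W_v) = B_1e_1\cdots B_p e_p B_{p+1}$ with the $A_i,B_i$ symmetric, then for the non-symmetric edges to cancel in $\alpha(W_v)^{-1}Q\beta(W_v)$ one is forced to have $Q = A_1B_1^{-1}$ in $\pi(H)$. This is an explicit formula in terms of the decomposition of the specific walks $W_v$, and it has no reason to coincide with an arbitrary shortest symmetric path from $\alpha(q)$ to $\beta(q)$. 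Consequently, your final claim---that if $Q_0$ fails the verification then the set is empty---is false: $Q_0$ can fail while the unique correct walk $A_1B_1^{-1}$ succeeds, so your algorithm would incorrectly report Case~\ref{itm:sym:empty} in a genuine Case~\ref{itm:sym:onlyQ} instance.

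Second, the dichotomy ``$K$ trivial or full'' is asserted but your proposed route---mirroring Wrochna's case analysis of Theorem~\ref{thm:T:Wrochna} on $\SYM(H)$---does not yield it; Wrochna's trichotomy includes the intermediate case $\lbrace R^nP\rbrace$, which you must rule out. What actually rules it out is the same decomposition insight: if some $\beta(W_v)$ contains a non-symmetric edge then conjugating $C$ by $\beta(W_v)$ cannot stay symmetric unless $C=\varepsilon$, so $K$ is trivial; if all $\beta(W_v)$ are symmetric then conjugation preserves symmetry trivially, so $K$ is everything. You need this non-symmetric-edge argument (applied to all of $\alpha(W_v)$ and $\beta(W_v)$, since $\alpha(W_v)$ having non-symmetric edges with $\beta(W_v)$ symmetric already forces $\emptyset$), not a transplantation of Wrochna's proof. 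Once it is in place, the rest of the running-time accounting is fine, but the paper's direct decomposition is simpler than the coset formulation and also gives the correct candidate $Q = A_1B_1^{-1}$ for free.
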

\begin{proof} Assume that $Q$ and $(W_v)_{v \in V(G)}$ generate symmetric vertex walks. Then, in particular, $Q$ is symmetric. Note that $\alpha(W_v)$ and $\beta(W_v)$ may have non symmetric edges, but since $Q$ generates only symmetric walks and $S_v$ is the reduced walk obtained from $\alpha(W_v)^{-1} \cdot Q \cdot \beta(W_v)$, we have that the non-symmetric edges of $\alpha(W_v)$ and of $\beta(W_v)$ must be the same and appear in the same order. Suppose that $e_1, e_2, \ldots, e_p$ are the non-symmetric edges of $\alpha(W_v)$ and $\beta(W_v)$. Then we can write $\alpha(W_v) = A_1 e_1 A_2 \ldots A_pe_pA_{p+1}$ and $\beta(W_v) = B_1 e_1 B_2 \ldots B_p e_p B_{p+1}$, where $A_1, \ldots, A_{p+1}$ and $B_1, \ldots, B_{p+1}$ are symmetric. Note that if $\alpha(W_v)$ and $\beta(W_v)$ are symmetric then $\alpha(W_v) = A_1$ and $\beta(W_v) = B_1$. Since all non-symmetric edges cancel in $S_v$, we obtain
\[ 
    A_p^{-1} \cdots A_2^{-1} e_1 A_1^{-1} Q B_1 e_1 B_2 \cdots B_p = \varepsilon \enspace. 
\]
So in particular, $Q = A_1 B_1^{-1}$ (so $|Q| \leq 2|V(G)|$) and $A_p^{-1} \ldots A_2^{-1} B_2 \ldots B_p = \varepsilon$.

It remains to show that there is an algorithm that decides in time $O(|V(G)|\cdot|E(G)|)$ which case holds and outputs $Q$ in Case~\ref{itm:sym:onlyQ}. The algorithm repeats the following for each vertex $v \in V'$:
\begin{itemize}
	\item Compute $\alpha(W_v)$ and $\beta(W_v)$ and search for non
		symmetric edges. If all edges in $\alpha(W_v)$ and $\beta(W_v)$
		are symmetric, continue with the next vertex. Else do the
		next steps.
	\item Check that $\alpha(W_v)$ and $\beta(W_v)$ have the same
		non-symmetric edges appearing in the same order. If not, there is no symmetric walk $Q$ that generates a symmetric walk $S_v$ and we may conclude that Case~\ref{itm:sym:empty} holds.
		Otherwise, let $e_1$ and $e_2$ be respectively the first and
		the last non-symmetric edge of $\alpha(W_v)$ and $\beta(W_v)$.
		Decompose $\alpha(W_v) = A_1 e_1 A_2 e_2 A_3$ and $\beta(W_v) =
		B_1 e_1 B_2 e_2 B_3$.
	\item If in some previous iteration a reduced walk $Q$ has  been fixed,
	%\item If for some previous vertex a necessary walk $Q$ has been fixed, 
		check that $Q = A_1B_1^{-1}$. If not, then report Case~\ref{itm:sym:empty}. If no walk $Q$ has been fixed in an earlier iteration, fix $Q$ to be the reduced walk $A_1B_1^{-1} \in \pi(H)$.
	\item Finally, check that  $A_p^{-1} \ldots A_2^{-1} B_2 \ldots B_p = \varepsilon$. If this equality doesn't hold, report Case~\ref{itm:sym:empty}.
\end{itemize}

Then, report case~\ref{itm:sym:onlyQ} and $Q$ if some walk $Q$ has been fixed. Otherwise, report case~\ref{itm:sym:all}.

Each step runs in time $O(|E(G)|)$ and will repeat at most $|V(G)|$ times, so this algorithm runs in time $O(|V(G)| \cdot |E(G)|)$.
\end{proof}

\section{Loopless digraphs} 
\label{sec:monochromatic}

In this section we assume that $G$ and $H$ are loopless weakly connected digraphs.
We extend the polynomial-time algorithm from~\cite{Wrochna:20} for \recol{H} for the case that $H$ is symmetric and square-free to the case where $H$ is any digraph that contains no 4-cycle of algebraic girth 0.
Observe that since $H \subseteq \bar H$, any $H$-recoloring sequence is also a $\bar H$-recoloring sequence. The main idea is to apply Wrochna's algorithm on the symmetric graphs $\bar G$ and $\bar H$ and then use a classification of the possible $\bar H$-recoloring sequences to check whether one of these sequences is compatible with $H$. 
The motivation for excluding 4-cycles of algebraic girth 0 as subgraphs of $H$ is that, for such digraphs $H$, any $H$-recoloring sequence satisfies the monochromatic neighborhood property. 
To see this, consider a step of any $H$-recoloring sequence where the color a vertex $u \in V(G)$ changes from $a$ to $b$ ($a, b \in V(H)$). Let $v$ be any neighbor of $u$ and let $h$ be the current color of $v$. If a color different from $h$ appears in the neighborhood of $u$ then $H$ contains a cycle of algebraic girth 0, that is, one of the two orientations of the $4$-cycle shown in Figure~\ref{fig:zerogirth}.

We say that a \undir{H}-recoloring sequence is \emph{orientation compatible} if
it induces a $H$-recoloring sequence. That is, the homomorphisms of the
sequence are compatible with the orientation of the arcs of $G$ and $H$. For
$\alpha,\beta \colon G \to H$ and $q \in V(G)$, we denote by $\Pi_q$ the set of
walks from $\alpha(q)$ to $\beta(q)$ that are $H$-realizable.  We will
introduce a simple condition, the \emph{zigzag} condition, which allows us to
describe the set of $H$-realizable walks as follows.

\begin{theorem} 
	\label{thm:c4:classification}
	Let $\alpha,\beta \colon G \to H$. We can find in time $O(|V(G)|)$ a vertex $q \in V(G)$ such that one of the following holds:
	\begin{enumerate}
		\item $\Pi_q = \emptyset$. \label{itm:main1:empty}
		\item $\Pi_q = \lbrace Q \rbrace$ for some $Q \in \pi(H)$. \label{itm:main1:Q}
		\item $\Pi_q = \lbrace R^nP \rbrace$ for some $R,P \in \pi(H)$. \label{itm:main1:RnP}
		\item $\Pi_q$ contains all reduced walks of even length from $\alpha(q)$ to $\beta(q)$ that satisfy the zigzag condition. \label{itm:main1:all}
	\end{enumerate}
	Moreover, we can determine in time $O(|V(G)| \cdot |E(G)| + |E(H)|)$ which case holds and output $Q$ or $R,P$ in cases~\ref{itm:main1:Q} or~\ref{itm:main1:RnP}.
\end{theorem}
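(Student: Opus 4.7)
The plan is to reduce the classification of $H$-realizable walks to Wrochna's classification of topologically valid walks (Theorem~\ref{thm:T:Wrochna}) applied to the underlying undirected graphs $\undir{G}$ and $\undir{H}$, and then filter the resulting set $\Pi$ by orientation compatibility. The key justification is the monochromatic neighborhood property, which is recalled in the paragraph just before the theorem: since $H$ contains no 4-cycle of algebraic girth 0, whenever a vertex changes color in an $H$-recoloring sequence, all of its neighbors share a common color. Consequently, every $H$-realizable walk from $\alpha(q)$ to $\beta(q)$ is in particular topologically valid for $\undir{\alpha}, \undir{\beta}, q$ in $\undir{H}$, so $\Pi_q \subseteq \Pi$.

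First I would select $q \in V(G)$ by a simple $O(|V(G)|)$ scan, picking a vertex incident to at least one arc (which exists because $G$ is connected and loopless with at least two vertices). Then I would invoke Wrochna's algorithm on $\undir{G}, \undir{H}, \undir{\alpha}, \undir{\beta}, q$ in time $O(|V(G)|\cdot|E(G)| + |E(H)|)$, obtaining one of its four cases. Wrochna's Case~\ref{itm:T:nowalk} immediately gives our Case~\ref{itm:main1:empty}. In Case~\ref{itm:T:onlyQ}, the single candidate $Q$ is checked for $H$-realizability by simulating the generated vertex walks (via Lemma~\ref{lem:equigen}) and verifying at each elementary step that the arc orientations of $G$ are compatible with those of $H$; this resolves into Case~\ref{itm:main1:Q} or Case~\ref{itm:main1:empty}. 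In Case~\ref{itm:T:walksRnP}, a parity and orientation analysis of $R$ and $P$ identifies which elements of $\{R^{n}P \mid n \in \mathbb{Z}\}$ are $H$-realizable; the subset obtained either again has the form $\{R^{n}P\}$ for a (possibly different) pair of walks and yields Case~\ref{itm:main1:RnP}, or collapses into one of the earlier cases.

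The main technical step is Wrochna's Case~\ref{itm:T:allwalks}. Here I would show that among all reduced walks from $\alpha(q)$ to $\beta(q)$, the $H$-realizable ones are precisely the reduced walks of even length that satisfy the zigzag condition. The direction \emph{$H$-realizable $\Rightarrow$ zigzag and even length} is obtained by a local analysis at each elementary recoloring step: when the color of $q$ switches from $a$ to $b$, the common color $h$ forced on the neighbors of $q$ must be such that the arcs $a \to h$ and $b \to h$ (or their reverses, depending on the orientation in $G$) both exist in $H$; concatenating these constraints along the walk forces the alternating ``zigzag'' pattern of arc orientations, and the parity of the walk length is controlled by the orientation of any arc incident to $q$. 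The converse, \emph{zigzag $\Rightarrow$ $H$-realizable}, is proved by explicitly constructing the recoloring sequence from such a walk, propagating color changes from $q$ to the rest of $G$ via the generator formula $S_v = \alpha(W_v)^{-1} \cdot Q \cdot \beta(W_v)$; the absence of a 4-cycle of algebraic girth 0 ensures that the intermediate neighborhood colors are uniquely determined and remain globally consistent.

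The main obstacle is this last equivalence: the zigzag condition must be formulated so that both implications go through cleanly, and for the converse direction the color changes have to be scheduled across $G$ in an order that respects orientation while keeping every intermediate map a valid $H$-homomorphism. Once this is settled, all four cases fit within the claimed $O(|V(G)|\cdot|E(G)| + |E(H)|)$ runtime, since each orientation-compatibility check reduces to a single pass over the walks returned by Wrochna's algorithm, whose lengths are bounded by this same quantity.
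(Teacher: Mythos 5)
Your high-level plan (intersect Wrochna's classification of $\undir{H}$-realizable walks with an orientation-compatibility filter) is the right idea, but the proposal has two genuine gaps.

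\textbf{The choice of $q$ is not free.} You pick $q$ as any non-isolated vertex, and then claim in Wrochna's Case~\ref{itm:T:allwalks} that the $H$-realizable walks are exactly the even reduced walks from $\alpha(q)$ to $\beta(q)$ satisfying the zigzag condition. This equivalence is \emph{false} for an arbitrary $q$. Orientation compatibility (and thus, via Lemma~\ref{lem:roc}, $H$-realizability) requires that \emph{every} generated walk $S_v = \alpha(W_v)^{-1} Q \beta(W_v)$ satisfies the zigzag condition, and a vertex $v$ with both an in- and an out-neighbor (type \SYM in the paper's terminology) forces $S_v$ to be a symmetric walk. Lemma~\ref{lem:compSYM} shows that checking $Q$ alone suffices only when either $G$ has no \SYM vertex (part 1) or $q$ itself is a \SYM vertex and the symmetric-walk condition has been verified on all \SYM vertices (part 2, combined with Lemma~\ref{lem:symwalks}). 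The paper therefore determines $q$ \emph{through} Theorem~\ref{thm:oriwalks}: it first computes the set $V'$ of \SYM vertices and, if $V' \neq \emptyset$, sets $q$ to be some vertex of $V'$, and runs Lemma~\ref{lem:symwalks} to describe all $Q$ that generate symmetric walks on $V'$. This structure is what makes case~\ref{itm:main1:all} of the theorem true as stated; with an arbitrary $q$ it would not hold.

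\textbf{The $R^nP$ case needs an actual finiteness argument.} Saying ``a parity and orientation analysis of $R$ and $P$ identifies which elements'' does not resolve the difficulty: $\{R^nP \mid n \in \mathbb{Z}\}$ is infinite, so one cannot simply ``filter.'' The paper's proof distinguishes two subcases. If $R$ satisfies the zigzag condition, then every $R^nP$ satisfies it if and only if $P$ does (the key observation being that any badly oriented edge of $P$ would have to cancel against a badly oriented edge of $R^n$, of which there are none). If $R$ does \emph{not} satisfy the zigzag condition, the paper decomposes $R = A R_0 A^{-1}$ with $R_0$ cyclically reduced, invokes Lemma~\ref{lem:Rcancel} to find the unique $n_0$ for which $R_0^{\pm 1}$ does not fully cancel in $R_0^{n_0} A^{-1}P$, and shows that at most one of $R^{n_0-1}P, R^{n_0}P, R^{n_0+1}P$ can satisfy the zigzag condition (since every edge of $R_0$ appears uncancelled in at least two of the three, and $R_0$ contains a bad edge). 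Your proposal gives no mechanism replacing this argument, so the runtime bound and the correctness of case~\ref{itm:main1:RnP} are not established.

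A secondary point: the paper applies the orientation filter \emph{first} (Theorem~\ref{thm:oriwalks}), and only when the answer is ``all zigzag walks'' does it invoke Wrochna's classification; you invoke Wrochna first and filter afterwards. The two orders are not interchangeable for free, precisely because the first step is what determines the correct $q$.
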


Furthermore, we show that, given an $H$-realizable walk $Q$, a corresponding
$H$-recoloring sequence can be found in polynomial time in the size of $G$ and
$Q$ (see Lemma~\ref{lem:consdir}). Combining Lemma~\ref{lem:consdir} with
Theorem~\ref{thm:c4:classification} yields the following corollary, which
immediately  Theorem~\ref{thm:c4free}.
\begin{corollary} 
	For any simple digraph $H$ without $4$-cycle of algebraic girth $0$, the problem \recol{H} can be solved in time $O(|E(G)| \cdot |V(G)| + |E(H)|)$.
\end{corollary}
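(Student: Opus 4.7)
The plan is to combine Theorem~\ref{thm:c4:classification} with Lemma~\ref{lem:consdir}. Given an input $(G,\alpha,\beta)$ for \recol{H}, I first run the $O(|V(G)|)$ procedure of Theorem~\ref{thm:c4:classification} to choose a vertex $q$, then spend $O(|V(G)|\cdot|E(G)| + |E(H)|)$ time to determine which of the four cases for $\Pi_q$ holds, obtaining in Cases~\ref{itm:main1:Q} and~\ref{itm:main1:RnP} explicit walks $Q$ or $R, P$ of length bounded by that same quantity. Since a pair $(\alpha,\beta)$ admits an $H$-recoloring sequence if and only if $\Pi_q$ is non-empty, the task reduces to detecting non-emptiness of $\Pi_q$ in each case and, when non-empty, handing a short $H$-realizable walk to Lemma~\ref{lem:consdir} to produce a concrete recoloring sequence.

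Case~\ref{itm:main1:empty} yields a \no answer directly. In Case~\ref{itm:main1:Q} we pass $Q$ to Lemma~\ref{lem:consdir}. In Case~\ref{itm:main1:RnP} we take $n = 0$ and pass $P$ (already of polynomially bounded length) to Lemma~\ref{lem:consdir}. In Case~\ref{itm:main1:all}, we find any reduced walk of even length from $\alpha(q)$ to $\beta(q)$ in $\undir{H}$ that satisfies the zigzag condition, for example by a breadth-first search in an augmented graph whose states track the orientation of the last arc traversed; this runs in $O(|E(H)|)$ and returns a walk of length $O(|V(H)|)$. Since Cases~\ref{itm:main1:empty} and~\ref{itm:main1:all} are mutually exclusive, at least one such walk must exist whenever we are in Case~\ref{itm:main1:all}.

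The main obstacle is Case~\ref{itm:main1:all}: we must handle the zigzag condition and the even-length parity constraint simultaneously. An augmented BFS should suffice, and when the parity is wrong we adjust by appending a short closed walk of the appropriate algebraic type at $\beta(q)$. Altogether the total running time is dominated by the classification step and the invocation of Lemma~\ref{lem:consdir} on a walk of polynomial length, both of which fit into $O(|V(G)|\cdot|E(G)| + |E(H)|)$. Correctness follows from Theorem~\ref{thm:c4:classification} together with the monochromatic-neighborhood discussion at the start of Section~\ref{sec:monochromatic}, which equates $H$-realizability of a walk with the existence of an $H$-recoloring sequence from $\alpha$ to $\beta$.
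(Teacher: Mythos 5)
Your overall structure matches the paper's intent: apply Theorem~\ref{thm:c4:classification} to classify $\Pi_q$, then either answer \no or extract a short $H$-realizable walk and feed it to Lemma~\ref{lem:consdir}. Cases~\ref{itm:main1:empty}--\ref{itm:main1:RnP} are handled correctly (taking $n=0$ in Case~\ref{itm:main1:RnP} is fine, since $\Pi_q = \{R^nP \mid n\in\mathbb{Z}\}$ puts $P\in\Pi_q$ outright). But your treatment of Case~\ref{itm:main1:all} contains a genuine error: Cases~\ref{itm:main1:empty} and~\ref{itm:main1:all} are \emph{not} mutually exclusive. Case~\ref{itm:main1:all} only asserts that $\Pi_q$ \emph{contains} all reduced walks of even length from $\alpha(q)$ to $\beta(q)$ satisfying the zigzag condition; if no such walk exists, this statement holds vacuously while $\Pi_q = \emptyset$, so both cases hold simultaneously. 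A concrete instance: $G = u \to v$, $H = a\to b\to c$, $\alpha = (u\mapsto a, v\mapsto b)$, $\beta = (u\mapsto b, v\mapsto c)$. Here $G$ has no closed walks, so Wrochna's Case~\ref{itm:allwalks} applies and the algorithm reports Case~\ref{itm:main1:all}; yet no even-length zigzag walk joins $b$ to $c$ and indeed $\alpha,\beta$ admit no $H$-recoloring sequence. So in Case~\ref{itm:main1:all} your augmented BFS may legitimately find nothing, and the algorithm must then output \no rather than assume success.

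Two smaller points. First, the ``adjust parity by appending a short closed walk at $\beta(q)$'' idea does not work as stated: appending a closed walk generally breaks both the zigzag condition and reducedness, so the parity and orientation constraints must be baked into the BFS from the start (which, as the example above shows, may then correctly report failure). Second, your runtime claim overshoots: Lemma~\ref{lem:consdir} costs $O(|V(G)|^2 + |V(G)|\cdot|Q|)$ and $|Q|$ in Cases~\ref{itm:main1:Q}/\ref{itm:main1:RnP} is only bounded by $O(|V(G)|\cdot|E(G)| + |E(H)|)$, which would push the construction cost a factor of $|V(G)|$ beyond the stated bound. The corollary concerns the \emph{decision} problem, for which the classification of Theorem~\ref{thm:c4:classification} (plus the BFS existence check in Case~\ref{itm:main1:all}) already suffices in time $O(|E(G)|\cdot|V(G)| + |E(H)|)$; the construction via Lemma~\ref{lem:consdir} is polynomial but need not fit the same bound when $|Q|$ is large.
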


In the next subsection we will recall several tools from~\cite{Wrochna:20}.
Among them is a description of \undir{H}-realizable walks that is analogous to
Theorem~\ref{thm:c4:classification} for undirected graphs. We will then
formally introduce the zigzag condition and prove
Theorem~\ref{thm:c4:classification} in Section~\ref{sec:orient}.

\subsection{Realizable Walks in $\undir{H}$}

In this section we recall some results from~\cite{Wrochna:20} that we will need later on. We will state these results in a slightly different manner for the sake of better integration in our more general setting of digraph homomorphisms. Largely the same proofs apply however, as pointed out in Remark~\ref{rem:nostructure}.

Let $S = \sigma_0, \ldots, \sigma_\ell$ be a $\undir{H}$-reconfiguration
sequence satisfying the monochromatic neighborhood property and let $v \in
V(G)$. For each $0 \leq i < \ell$, let $S_i(v)$ be given by
\[
	S_i(v) =
	\begin{cases}
		\epsilon	&	\text{if $\sigma_i(v) = \sigma_{i+1}(v)$, and}\\
		(\sigma_i(v)\,h)(h\, \sigma_{i+1}(v))	& \text{otherwise,}
	\end{cases}
\]
where $h$ be the unique color of the neighbors of $v$ with respect to
$\sigma_i(v)$ and $\sigma_{i+1}(v)$.
We associate with $S$ and $v$ the walk $S(v) := S_0(v)S_1(v) \cdots
S_\ell(v)$.  Suppose that $S(v) =
(a_1\,a_2)(a_2\,a_3)\cdots(a_{n-2}\,a_{n-1})(a_{n-1}\,a_n)$. Then according to
$S$ the vertex $v$ changes its color from $a_1$ to $a_3$ while its neighbors
have color $a_2$, then it changes color from $a_3$ to $a_5$ while its neighbors
all have color $a_4$ (so all the neighbors must change their color from $a_2$
to $a_4$ before), and so on until $v$ changes its color from $a_{n-2}$ to $a_n$
while its neighbors all have color $a_{n-1}$.

In~\cite{Wrochna:20}, Wrochna showed that the monochromatic neighborhood
property implies that for any two vertices $u, v \in V(G)$ the reduction of
$S(v)$ arises from $S(u)$ by conjucation as follows. By combining~\cite[Lemma
4.1]{Wrochna:20} and Remark~\ref{rem:nostructure} the following lemma is
immediate.
\begin{lemma}[\cite{Wrochna:20}]
	\label{lem:gen}
	Let $S = \sigma_0, \ldots, \sigma_\ell$ be an $\undir{H}$-recoloring
	sequence from $\alpha = \sigma_0$ to $\beta = \sigma_\ell$ satisfying
	the monochromatic neighborhood property and let $W$ be any walk in $G$
	connecting two vertices $u$ and $v$. Then $S(v) = \alpha(W)^{-1} \cdot
	S(u) \cdot \beta(W)$ in $\pi(H)$.  
\end{lemma}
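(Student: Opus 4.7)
The plan is to prove the identity by a two-step induction: first on the length of $W$, reducing to the case where $W$ consists of a single edge of $G$, then on the length $\ell$ of the recoloring sequence. For the first reduction, if $W = W' \cdot (x\,v)$ with $W'$ a walk from $u$ to $x$, then
\[
\alpha(W)^{-1} \cdot S(u) \cdot \beta(W) = (\alpha(x)\alpha(v))^{-1} \cdot \bigl(\alpha(W')^{-1} \cdot S(u) \cdot \beta(W')\bigr) \cdot (\beta(x)\beta(v)),
\]
and applying the inductive hypothesis for $W'$ replaces the middle factor by $S(x)$. Hence it suffices to prove the identity in the case $W = (u\,v)$ for a single edge $uv$ of $G$.

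For the single-edge case I induct on $\ell$. When $\ell = 0$, we have $\alpha = \beta$, $S(u) = S(v) = \varepsilon$, and $(\alpha(v)\alpha(u))(\alpha(u)\alpha(v))$ reduces to $\varepsilon$ in $\pi(H)$, so both sides agree. For the inductive step, let $w$ be the unique vertex whose color changes in the last step, set $a = \sigma_{\ell-1}(w)$, $b = \sigma_\ell(w)$, and let $h$ be the common color of the neighbors of $w$ provided by the monochromatic neighborhood property. Write $S'$ for the truncated sequence $\sigma_0, \ldots, \sigma_{\ell-1}$ with endpoints $\alpha$ and $\beta' = \sigma_{\ell-1}$. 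If $w \notin \{u,v\}$, then $S$ and $S'$ agree on both endpoints, $\beta$ coincides with $\beta'$ on $\{u,v\}$, and the identity is inherited directly from the inductive hypothesis for $S'$.

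The nontrivial case is $w \in \{u,v\}$; assume $w = u$ (the case $w = v$ is symmetric). Since $v$ is a neighbor of $u = w$, the monochromatic neighborhood property forces $\beta(v) = \beta'(v) = h$, while $\beta(u) = b$, $S(v) = S'(v)$, and $S(u) = S'(u) \cdot (a\,h)(h\,b)$. The right-hand side of the target identity then becomes
\[
(\alpha(v)\alpha(u)) \cdot S'(u) \cdot (a\,h)(h\,b)(b\,h),
\]
and the trailing $(h\,b)(b\,h)$ collapses to $\varepsilon$ in $\pi(H)$, leaving $(\alpha(v)\alpha(u)) \cdot S'(u) \cdot (\beta'(u)\beta'(v))$, which by the inductive hypothesis for $S'$ equals $S'(v) = S(v)$, as desired.

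I expect the main obstacle is precisely the bookkeeping in this last case: the monochromatic neighborhood property is used in an essential way to ensure that the color of $v$ at time $\ell-1$ coincides with the intermediate color $h$ recorded in $S_{\ell-1}(u)$, which is exactly what enables the cancellation $(h\,b)(b\,h) = \varepsilon$. Without this property the conjugation relation between $S(u)$ and $S(v)$ would break, and the lemma would fail; structural assumptions on $H$ play no other role in the argument, consistent with Remark~\ref{rem:nostructure}.
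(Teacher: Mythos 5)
Your proof is correct and uses essentially the same approach as the paper's (commented-out) proof, namely a double induction peeling off one edge of $W$ and one step of $S$ at a time, with the monochromatic neighborhood property supplying the cancellation in the single-edge base case. The only organizational difference is the order of the two inductions — you reduce to $|W|=1$ first and then induct on $\ell$, while the paper's version inducts on $\ell$ with the sub-induction on $|W|$ buried in the $\ell=1$ base case — but the case analysis and the key computation are the same.
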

\if 0
\begin{proof}
	We use induction on the length $\ell$ of $S$. Let $\ell = 1$ and
	suppose $\sigma_0 \neq \sigma_1$, so a vertex $w \in V(G)$ is recolored
	from $\sigma_0(w) = a$ to $\sigma_1(w) = b$ and all neighbors of $w$
	have color $h$. By definition, we have $S(w) = (a\,h)(h\,b)$ and $S(v)
	= \varepsilon$ for $v \in V(G) \setminus \{w\}$. If $W = \varepsilon$
	then $\sigma_0(W) = \sigma_1(W) = \varepsilon$ and $S(u) = S(v)$ since
	$u = v$, so we are done. If $W$ has length one then, without loss of generality, $W = u \to v$. We consider three cases:
	\begin{itemize}
		\item $u \neq w$ and $v \neq w$. Then $S(u) = S(v) = \varepsilon$ and $\sigma_0(W) = \sigma_1(W)$.
		\item $u \neq w$ and $v = w$. Then $S(u) = \varepsilon$, $S(v)
			= (ah)(hb)$. Since $S$ satisfies the monochromatic neighborhood property, all neighbors of $v$, including $u$, have color $h$, so $\sigma_0(W)
			= (h, a)$, $\sigma_1(W) = (h, b)$.
		\item $u = w$ and $v \neq w$. Then $S(u) = (ah)(hb)$ and
			$S(v) = \varepsilon$. Again using the fact that $S$ satisfies the monochromatic neighborhood property, we have
			$\sigma_0(W) = (ah)$ and $\sigma_1(W) = (bh)$.
	\end{itemize}
	In each case we have $S(v) = \sigma_0(W)^{-1} S(u) \sigma_1(W)$. If
	$W$ has length at least two then we may split $W$ inductively into $W =
	W_1W_2$ such that $W_2$ is of length one, so $W_1$ is a walk from $u$
	to $z$ and $W_2$ is a walk from $z$ to $v$. Then we have
	\begin{align*}
		\sigma_1(W)	&= \sigma_1(W_1)\sigma_i(W_2)  \\
				&= \sigma_1(W_1)S(z)^{-1}\sigma_0(W_2)S(v) \\
				&= S(u)^{-1} \sigma_0(W) S(v) \enspace.
	\end{align*}
	If the sequence $S$ has length more than one we use the same idea and
	split $S$ inductively into $S = S_1S_2$ such that $S_2$ has length one. Then for each $v \in V(G)$ we have $S(v) = S_1(v)S_2(v)$ and
	\begin{align*}
		S(v)	&= S_1(v)S_2(v)  \\
			&= S_1(v)\sigma_{\ell-1}(W)^{-1}S_2(u)\sigma_\ell(W) \\
			&= \sigma_0(W)^{-1}S(u)\sigma_\ell(W) \enspace,
	\end{align*}
	which concludes the proof.
\end{proof}

\fi

If we pick any vertex $q \in V(G)$, we can observe (following Wrochna) that
$S(q)$ must be topologically valid. We will say that a reduced walk $Q$ from
$\alpha(q)$ to $\beta(q)$ is $\undir{H}$-\emph{realizable} for $\alpha, \beta,
q$ if there is a $\undir{H}$-recoloring sequence $S$ satisfying the monochromatic
neighborhood property such that $Q = S(q)$. We say that $Q$ is
$H$-\emph{realizable} for $\alpha, \beta, q$ if there is a $H$-recoloring
sequence satisfying the monochromatic neighborhood property $S$ such that $Q =
S(q)$. Clearly, if $Q$ is $H$-realizable then it is $\undir{H}$-realizable\footnote{When $\undir{H}$ is square-free, this definition of $H$-realizabiity is exactly the same that the one introduced in Wrochna's paper. So this definition generalizes Wrochna's such that $\undir{H}$-realizability becomes a necessary condition for $H$-realizability.}.

The main result from~\cite{Wrochna:20} we will use is the following
classification of $H$-realizable walks, which can be exploited algorithmically.

\begin{theorem}[Wrochna's classification, \cite{Wrochna:20}] 
	\label{thm:AlWro} 
	Let $\alpha, \beta \colon G \to H$ and $q \in V(G)$. Let $\undir{\Pi}$
	be the set of all reduced walks that are $\undir{H}$-realizable for
	$\alpha, \beta, q$. One of the following holds:
	\begin{enumerate}
		\item $\undir{\Pi} = \emptyset$. \label{itm:nowalk}
		\item $\undir{\Pi} = \lbrace Q \rbrace$ for some $Q \in \pi(H)$. \label{itm:onlyQ}
		\item $\undir{\Pi} = \lbrace R^n P \mid n \in \mathbb{Z} \rbrace$, for some $R,P \in \pi(H)$. \label{itm:walksRnP}
		\item $\undir{\Pi}$ contains all reduced walks of even length from $\alpha(q)$ to $\beta(q)$. \label{itm:allwalks}
	\end{enumerate}
	Furthermore, there is an algorithm that determines in time
	$O(|V(G)|\cdot|E(G)| + |E(H)|)$ which case holds and outputs $Q$ or
	$R,P$ in cases~\ref{itm:onlyQ}~\ref{itm:walksRnP} such that
	$|Q|,|R|,|P|$ are bounded by the total running time
	$O(|V(G)|\cdot|E(G)| + |E(H)|)$.
\end{theorem}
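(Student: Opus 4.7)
The plan is to reduce Theorem~\ref{thm:AlWro} to the purely topological Theorem~\ref{thm:T:Wrochna} by establishing the following equivalence: a reduced walk $Q$ from $\alpha(q)$ to $\beta(q)$ is $\undir{H}$-realizable if and only if it is topologically valid for $\alpha, \beta, q$ and has even length.

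For the necessary direction, assume $S$ is an $\undir{H}$-recoloring sequence satisfying the monochromatic neighborhood property with $S(q) = Q$. Topological validity is immediate from Lemma~\ref{lem:gen}: applying it with $u = v = q$ and $W = C$ a closed walk from $q$ to itself gives $Q = \alpha(C)^{-1} \cdot Q \cdot \beta(C)$ in $\pi(H)$, which rearranges to $\beta(C) = Q^{-1} \cdot \alpha(C) \cdot Q$. The parity constraint follows because each single recoloring step of the sequence contributes either the empty word (when the recolored vertex differs from $q$) or the length-two factor $(a\,h)(h\,b)$ with $a \neq h \neq b$ (when it equals $q$) to $S(q)$ before reduction; since $H$ is loopless, only the length-two reduction rule applies when computing $Q$, so reductions preserve parity and $|Q|$ is even.

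The sufficiency direction is the technical heart of the argument. Given a topologically valid $Q$ of even length, I would construct a recoloring sequence by induction on $|Q|$, following the approach of~\cite{Wrochna:20}. For each $v \in V(G)$, define the generated walk $S_v := \alpha(W_v)^{-1} \cdot Q \cdot \beta(W_v)$, which by Lemma~\ref{lem:equigen} is independent of the choice of $W_v$; these walks prescribe how each vertex of $G$ must change color. For the inductive step, peel off the initial length-two factor $(a\,h)(h\,b)$ of $Q$ and schedule single-vertex recolorings that first drive every neighbor of $q$ to color $h$, then recolor $q$ from $a$ to $b$ in a single monochromatic step. Verifying that such a schedule exists and preserves the monochromatic neighborhood property throughout, then applying induction to the shorter residual walk, is the main combinatorial obstacle. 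This scheduling argument is the substantive content of Wrochna's original proof; as emphasized in Remark~\ref{rem:nostructure} it uses no structural assumption on $H$, so it transfers verbatim from the undirected square-free setting to our setting of loopless digraphs.

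With the equivalence established, the four-case classification is obtained by applying Theorem~\ref{thm:T:Wrochna} and intersecting each case with the even-length condition. Cases~\ref{itm:nowalk}, \ref{itm:onlyQ}, and~\ref{itm:walksRnP} either survive in the same form (after possibly replacing $R$ by $R^2$ and shifting the integer range, which preserves the form $R^n P$) or collapse into case~\ref{itm:nowalk} via a parity check on $Q$ or on $R, P$; case~\ref{itm:allwalks} gains the qualifier "of even length". The algorithm and the running time bound $O(|V(G)| \cdot |E(G)| + |E(H)|)$, together with the output length bounds on $|Q|, |R|, |P|$, are inherited directly from Theorem~\ref{thm:T:Wrochna}, since the additional parity filter is inexpensive.
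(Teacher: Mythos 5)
Your plan hinges on the intermediate claim that a reduced walk $Q$ from $\alpha(q)$ to $\beta(q)$ is $\undir{H}$-realizable if and only if $Q$ is topologically valid and $|Q|$ is even. The necessary direction is fine, but the sufficiency direction is \emph{false}, and this breaks the whole reduction to Theorem~\ref{thm:T:Wrochna}.

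Here is a concrete counterexample. Take $G = H = K_3$ with vertices $u,v,w$ and colors $a,b,c$, let $\alpha = \beta$ be the identity, and set $q=u$. The fundamental group of $K_3$ at $a$ is infinite cyclic, so it is abelian, and therefore every closed reduced walk at $a$ is topologically valid: the set of topologically valid walks is $\{R^n : n \in \mathbb{Z}\}$ with $R = (ab)(bc)(ca)$, a case-\ref{itm:T:walksRnP} set. The even-length ones form the family $\{R^{2n}\}_{n\in\mathbb{Z}}$. However, \emph{none} of these is realizable except the empty walk: since $u,v,w$ pairwise have distinct colors and each has two neighbors with different colors, no single vertex of $G$ can ever change color under the monochromatic neighborhood property, so the only recoloring sequence is the trivial one and $\undir{\Pi} = \{\varepsilon\}$ — a case-\ref{itm:onlyQ} set. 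Your reduction would incorrectly output a case-\ref{itm:walksRnP} classification here. Note that $K_3$ is square-free, so this is not an artifact of dropping the square-freeness hypothesis.

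The missing piece is precisely the obstruction from closed walks $C$ in $G$ whose image $\alpha(C)$ is cyclically reduced in $H$: vertices lying on such walks are frozen under the monochromatic neighborhood property, which pins down $Q$ uniquely (or rules it out) independently of the topological-validity analysis. This is exactly the phenomenon this paper develops for the reflexive case under the name ``$\alpha$-tight'' walks in Section~\ref{sec:R:tightclosed} and Theorem~\ref{thm:consref}; Wrochna's original proof handles the loopless analogue of this condition, and it cannot be dispensed with. As a side remark, the paper itself does not prove Theorem~\ref{thm:AlWro}; it imports the statement from~\cite{Wrochna:20} with the observation (Remark~\ref{rem:nostructure}) that Wrochna's argument never uses the square-freeness of $H$, so there is no in-paper proof to compare against — but the blind proof you propose has the gap described above.
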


In order to check whether a given $\undir{H}$-realizable walk corresponds to a
$\undir{H}$-recoloring sequence that is compatible with the orientation of the
edges of $H$ we need some more ingredients from~\cite{Wrochna:20}.  There,
Wrochna showed that a $\undir{H}$-recoloring sequence satisfying the
monochromatic neighborhood property can be computed from an
$\undir{H}$-realizable walk in polynomial time. Furthermore, there is a
polynomial-time algorithm that decides whether a given walk from $\alpha(q)$ to
$\beta(q)$ is $\undir{H}$-realizable.
\begin{lemma}[{\cite[see Theorem~6.1]{Wrochna:20}}] 
	\label{thm:conswro} 
	Given an $\undir{H}$-realizable walk $Q$ we can construct an associated $\undir{H}$-recoloring sequence in time $O(|V(G)|^2 + |V(G)|\cdot|Q|)$.
\end{lemma}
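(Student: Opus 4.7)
The plan is to follow Wrochna's construction: from $Q$ we first read off the entire color trajectory of every vertex of $G$, and then we schedule the single-vertex color changes by a greedy ``ready-to-move'' rule. Build a BFS tree of $G$ rooted at $q$, and for each $v \in V(G)$ let $W_v$ be the tree path from $q$ to $v$, so $|W_v| \leq |V(G)|$. Compute the reduced form of $S_v := \alpha(W_v)^{-1} \cdot Q \cdot \beta(W_v)$ in $\pi(\undir{H})$. Because $Q$ is $\undir{H}$-realizable (and hence topologically valid), Lemma~\ref{lem:equigen} implies that $S_v$ is independent of the chosen walk $W_v$. Each $S_v$ has even length and can be written $S_v = (c^v_0 c^v_1)(c^v_1 c^v_2) \cdots (c^v_{2k_v - 1} c^v_{2k_v})$, with $c^v_0 = \alpha(v)$ and $c^v_{2k_v} = \beta(v)$; the even-indexed entries are the successive colors $v$ must visit during the recoloring.

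I would then build the sequence by maintaining a progress index $j_v \in \{0, \ldots, k_v\}$ per vertex (initially $0$) together with a current coloring $\sigma$ such that $\sigma(v) = c^v_{2 j_v}$; initially $\sigma = \alpha$. Call $v$ \emph{advanceable} if $j_v < k_v$ and every neighbor $w$ of $v$ satisfies $\sigma(w) = c^v_{2 j_v + 1}$. While some vertex is advanceable, pick one such $v$, append the single-vertex move $\sigma(v) \mapsto c^v_{2 j_v + 2}$ to the output, and increment $j_v$. Validity of each step is immediate: advanceability forces the neighborhood of $v$ to be monochromatically colored with a vertex adjacent to both $c^v_{2 j_v}$ and $c^v_{2 j_v + 2}$ in $\undir{H}$, so each intermediate coloring is a homomorphism differing from its predecessor at exactly one vertex.

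The main obstacle is to argue that the greedy process never deadlocks, i.e., that as long as some $j_v < k_v$ an advanceable vertex exists, and that the process terminates with $\sigma = \beta$. Here we exploit the assumption that $Q$ is $\undir{H}$-realizable: by definition some $\undir{H}$-recoloring sequence $S^*$ with the monochromatic neighborhood property satisfies $S^*(q) = Q$, and Lemma~\ref{lem:gen} then forces $S^*(v) = S_v$ for every $v \in V(G)$. A diamond-style swap argument shows inductively that the coloring produced by the greedy after any number of moves is reachable from $\alpha$ by some reordering of a prefix of $S^*$: two moves of $S^*$ on non-adjacent vertices may be commuted freely, while adjacent moves cannot occur consecutively in $S^*$ because of the monochromatic neighborhood property. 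It follows that whenever the greedy reaches a state with some $j_v < k_v$, at least one advanceable vertex remains, and the greedy halts after exactly $\sum_v k_v$ moves in the coloring $\beta$.

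For the running time, $|S_v| \leq |\alpha(W_v)| + |Q| + |\beta(W_v)| \leq 2|V(G)| + |Q|$, so all reduced walks $S_v$ can be computed in total time $O(|V(G)|^2 + |V(G)| \cdot |Q|)$ by reducing $S_v$ incrementally along the BFS tree from its parent's walk. The total number of greedy moves is $\sum_v k_v = O(|V(G)|^2 + |V(G)| \cdot |Q|)$. By maintaining, for each vertex $v$, a counter of the number of neighbors currently colored $c^v_{2 j_v + 1}$ together with a queue of advanceable vertices, each move is processed in time proportional to the degree of the recolored vertex, and a routine amortized analysis yields the claimed $O(|V(G)|^2 + |V(G)| \cdot |Q|)$ bound.
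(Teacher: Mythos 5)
Your overall structure matches what the paper does in its closest analogue (the proof of Theorem~\ref{thm:consref} for the reflexive case) and what Wrochna's Theorem~6.1 does: compute the vertex walks $S_v$ from $Q$, then schedule single-vertex moves via a precedence structure. The running-time analysis is fine. However, the termination argument has a concrete flaw.

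You assert that ``adjacent moves cannot occur consecutively in $S^*$ because of the monochromatic neighborhood property,'' and you use this to argue that the greedy state is always a commuted prefix of $S^*$. This claim is false. Take adjacent $u,v$; the monochromatic neighborhood property allows $S^*$ to recolor $u$ from $a$ to $b$ while every neighbor of $u$ (including $v$) has color $h$, and then in the very next step to recolor $v$ from $h$ to $d$ while every neighbor of $v$ (including $u$) has color $b$. These two consecutive moves \emph{are} on adjacent vertices and moreover \emph{do not commute}: $v$'s move cannot precede $u$'s because it requires $u$ already to be colored $b$. So the ``diamond-style swap'' collapses exactly at the adjacent-vertex case, which is precisely where the precedence constraints live. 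Your argument has silently assumed the constraint graph has no edges, and therefore cannot detect whether it has a cycle.

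The correct route, which the paper follows in Theorem~\ref{thm:consref} and which Wrochna follows in spirit, is to make the precedence explicit: for each edge $uv$ of $G$, compare the first edges of $S_u$ and $S_v$ using the conjugation relation $S_v = (\alpha(v)\,\alpha(u))\cdot S_u\cdot(\beta(u)\,\beta(v))$ from Lemma~\ref{lem:gen}, and add an arc $u\to v$ or $v\to u$ to an auxiliary digraph $D$ on $V(G)$ whenever one side must move first. One then proves that $D$ is acyclic -- in the paper's proof of Theorem~\ref{thm:consref} this comes from the $\alpha$-tight-walk condition, and in Wrochna's setting it comes from topological validity -- and processes vertices in topological order. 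That acyclicity proof is the genuine content of the construction lemma, and it is exactly the step your proposal defers to an incorrect commutation claim.
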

Wrochna's proof of the previous construction algorithm contains the following lemma:
\begin{lemma}[\cite{Wrochna:20}] 
	\label{lem:testwro}
	Given any walk $Q$ from $\alpha(q)$ to $\beta(q)$, we can decide in
	time $O(|E(G)| \cdot (|Q| + |V(G)|))$ if $Q$ is $\undir{H}$-realizable.
\end{lemma}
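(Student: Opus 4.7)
The plan is to exploit Lemma~\ref{lem:gen} as a necessary condition that pins down the behavior of any realizing sequence at every vertex. Specifically, if an $\undir{H}$-recoloring sequence $S$ with $S(q) = Q$ exists, then for every vertex $v \in V(G)$ the reduced walk $S(v)$ is forced in $\pi(H)$ to equal $S_v := \alpha(W_v)^{-1} \cdot Q \cdot \beta(W_v)$, for any choice of walk $W_v$ from $q$ to $v$ in $G$. So the algorithm first computes these candidate walks $S_v$ at every vertex, and then verifies that they can jointly be realized by some recoloring sequence satisfying the monochromatic neighborhood property.

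First I would run BFS in $G$ from $q$ to obtain, for each vertex $v$, a walk $W_v$ from $q$ to $v$ of length at most $|V(G)|$; this takes $O(|V(G)| + |E(G)|)$ time. Next, for each $v$ I compute the reduced form of $\alpha(W_v)^{-1} \cdot Q \cdot \beta(W_v)$ using standard stack-based walk reduction. Since the unreduced walk has length $O(|V(G)| + |Q|)$, the per-vertex cost is $O(|V(G)| + |Q|)$, so the total cost of this phase is $O(|V(G)| \cdot (|V(G)| + |Q|)) = O(|E(G)| \cdot (|V(G)| + |Q|))$, using that $G$ is connected and so $|E(G)| \geq |V(G)| - 1$.

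With the walks $S_v$ in hand, I would perform two kinds of local checks. First, for each vertex $v$, check that $S_v$ has even length; this is necessary because each color change of $v$ contributes a length-$2$ segment to $S(v)$ before reduction, and reductions remove an even number of edges. Second, for each edge $uv \in E(G)$, verify that $S_u$ and $S_v$ are synchronizable in the sense that their color changes can be interleaved so that the current colors of $u$ and $v$ remain adjacent in $H$ throughout; this is a linear scan of the two walks, facilitated by the fact that $S_u$ and $S_v$ are conjugates via $\alpha(uv)$ and $\beta(uv)$ by Lemma~\ref{lem:gen}. If every check succeeds, then by Lemma~\ref{thm:conswro} one can actually assemble an $\undir{H}$-recoloring sequence realizing $Q$; otherwise $Q$ is not realizable. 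Each edge check costs $O(|V(G)| + |Q|)$, giving the claimed total bound $O(|E(G)| \cdot (|V(G)| + |Q|))$.

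The main obstacle is formulating the per-edge synchronization condition precisely and proving that it is both necessary and sufficient for realizability under the monochromatic neighborhood property. This is the heart of Wrochna's original argument for the undirected square-free case; by Remark~\ref{rem:nostructure} the argument makes no use of structural hypotheses on $H$, so it transfers to our more general setting without modification, and we may invoke his analysis here essentially unchanged.
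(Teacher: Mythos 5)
Your proposal is essentially the same as the paper's treatment: the paper simply cites Wrochna for Lemma~\ref{lem:testwro} (noting it is contained in the proof of the construction algorithm), and you likewise defer to Wrochna's argument via Remark~\ref{rem:nostructure} for the core correctness claim, while additionally sketching the algorithm (BFS, stack-based reduction, local per-vertex and per-edge checks) and the complexity count, which is correct. One small imprecision worth flagging: you invoke Lemma~\ref{lem:gen} to assert that $S_u$ and $S_v$ are conjugate via $\alpha(uv)$ and $\beta(uv)$, but that lemma applies to walks arising from an \emph{actual} recoloring sequence. For the candidate walks $S_v$ computed from $Q$ and a fixed BFS tree, this conjugation must instead be \emph{verified} on non-tree edges — indeed this is exactly the topological validity of $Q$, and a $Q$ that fails it will produce inconsistent candidates that the per-edge check must catch. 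As long as the synchronization test you delegate to Wrochna's analysis includes that verification (it does, in Wrochna's paper), your argument is sound.
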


\subsection{Orientation compatibility}
\label{sec:orient}

In this section we characterize the $\undir{H}$-recoloring sequences that satisfy the monochromatic neighborhood property and that are orientation compatible. To
this end we give a simple condition, the \emph{zigzag condition}, on the
reduced walks of the vertices of $G$ obtained from a given $H$-recoloring
sequence.  We show that it suffices to check the zigzag condition for all vertices of $G$ in order to check whether a \undir{H}-recoloring
sequence is orientation compatible. We use this insight to obtain a
polynomial-time algorithm that, given two $H$-colorings $\alpha, \beta : G \to
H$ and a vertex $v$ of $G$, finds a walk from $\alpha(v)$ to $\beta(v)$ in $H$
that is compatible with the orientation of $H$ or reports correctly that no
such walk exists. 
In the following, $G,H$ are any simple, (weakly) connected digraph with an arc and $\alpha,\beta \colon G \to H$.

\subsubsection{The zigzag condition}

Let $v \in V(G)$ and $S_v = (a_1 a_2) \ldots (a_{n-1} a_n)$ be a walk of even length from $\alpha(v)$ to $\beta(v)$ (think of $S_v$ as the walk of $v$ under a $\undir{H}$-reconfiguration $S$ that satisfies the monochromatic neighborhood property; in particular, $S$ and $S_v$ could have been obtained by combining Theorem~\ref{thm:AlWro} and Lemma~\ref{thm:conswro}).
We say that $S_v$ satisfies the \emph{zigzag condition} if $a_1 \leftarrow a_2
\rightarrow a_3 \leftarrow \ldots \leftarrow a_{n-1} \rightarrow a_n$ is a path
in $H$ whenever $N^-_G(v) \neq \emptyset$ and $a_1 \rightarrow a_2 \leftarrow
a_3 \rightarrow \ldots \rightarrow a_{n-1} \leftarrow a_n$ is a path of $H$
whenever $N^+_G(v) \neq \emptyset$. Note that at least one of $N^-_G(v)$ and $N^+_G(v)$ is non-empty since $G$ is weakly connected. Also observe that if $S_v$ is not reduced and satisfies the zigzag condition, then it will still satisfy that condition after reduction.

Using Lemma~\ref{lem:gen}, we can state a more general and useful definition of
orientation compatibility for walks: given $\alpha, \beta \colon G \to H$, $q
\in V(G)$ and a system of walks $(W_v)$ from $q$ to $v$. We say that a reduced
walk of even length $Q \in \pi(H)$ from $\alpha(q)$ to $\beta(q)$ is
\emph{orientation compatible} for the system $(W_v)_{v \in V(G)}$ if for each
vertex $v \in V(G)$ and any walk $W_v$ from $q$ to $v$, the walk
$\alpha(W_v)^{-1} \cdot Q \cdot \beta(W_v) \in \pi(H)$ satisfies the zigzag condition. See figure~\ref{fig:zigzag} for illustrations of the zigzag condition and of orientation compatibility.
Observe that by Lemma~\ref{lem:equigen}, if $Q$ is topologically valid (in particular, if $Q$ is $\undir{H}$-realizable), then the orientation compatibility of $Q$ does not
depend of the choice of the walks $(W_v)_{v \in V(G)}$.

\begin{figure}
\begin{tikzpicture}[scale=0.7]
    \begin{scope}
        \dstardiamgraph;
        \node[G,label=72:$v_0$] (v0) at ($(ha1)$){};
        \node[G,label=-144:$v_1$] (v1) at ($(ha3)$){};
        \node[G,label=-72:$v_2$] (v2) at ($(ha4)$){};
        \node[G] (v3) at ($(ha0)$){};
        \node[G,label=90:$v_3$] (v4) at ($(hb0)$){};
        \node[G,label=-120:$q$] (v5) at ($(hb3)$){};
        \draw[Ge] (v0)--(v1)--(v2)--(v0)--(v3)--(v4)--(v5);
        \node at (3,-1.3) {\Large$\alpha$};
    \end{scope}
    \begin{scope}[shift={(12,0)}]
        \dstardiamgraph;
        \node[G,label=72:$v_0$] (v0) at ($(ha1)$){};
        \node[G,label=-144:$v_1$] (v1) at ($(ha3) - (0:6pt)$){};
        \node[G,label=-144:$v_2$] (v2) at ($(ha4) - (0:6pt)$){};
        \node[G] (v3) at ($(ha4) + (0:6pt)$){};
        \node[G,label=72:$v_3$] (v4) at ($(ha0)$){};
        \node[G,label=72:$q$] (v5) at ($(ha3) + (0:6pt)$){};
        \draw[Ge] (v0)--(v1)--(v2)--(v0)--(v3)--(v4)--(v5);
        \node at (3,-1.3) {\Large$\beta$};
    \end{scope}
\end{tikzpicture}

\begin{tikzpicture}[scale=0.7]
    \begin{scope}
        \dstardiamgraph;
        \node[G1] (v0) at ($(hb0)$){};
        \node[G] (v1) at ($(ha0) + (90:5pt)$){};
        \node[G] (v2) at ($(ha1)$){};
        \node[G] (v3) at ($(ha4)$){};
        \node[G1] (v4) at ($(ha0) - (90:5pt)$){};
        \draw[Ge] (v0)--(v1) node[sloped,pos=0.55,allow upside down]{\arrowInG}; ;
        \draw[Ge] (v1)--(v2) node[sloped,pos=0.55,allow upside down]{\arrowInG}; ;
        \draw[Ge] (v2)--(v3) node[sloped,pos=0.55,allow upside down]{\arrowInG}; ;
        \draw[Ge] (v3)--(v4) node[sloped,pos=0.55,allow upside down]{\arrowInG}; ;
        \node at (3,-1.3) {\Large$S(v_3)$};
    \end{scope}
    \begin{scope}[shift={(12,0)}]
        \dstardiamgraph;
        \node[G1] (v0) at ($(hb3)$){};
        \node[G] (v1) at ($(hb0)$){};
        \node[G] (v2) at ($(ha0) + (90:5pt)$){};
        \node[G] (v3) at ($(ha1)$){};
        \node[G] (v4) at ($(ha4)$){};
        \node[G] (v5) at ($(ha0) - (90:5pt)$){};
        \node[G1] (v6) at ($(ha3)$){};
        \draw[Ge] (v0)--(v1) node[sloped,pos=0.55,allow upside down]{\arrowInG}; ;
        \draw[Ge] (v1)--(v2) node[sloped,pos=0.55,allow upside down]{\arrowInG}; ;
        \draw[Ge] (v2)--(v3) node[sloped,pos=0.55,allow upside down]{\arrowInG}; ;
        \draw[Ge] (v3)--(v4) node[sloped,pos=0.55,allow upside down]{\arrowInG}; ;
        \draw[Ge] (v4)--(v5) node[sloped,pos=0.55,allow upside down]{\arrowInG}; ;
        \draw[Ge] (v5)--(v6) node[sloped,pos=0.55,allow upside down]{\arrowInG}; ;
        \node at (3,-1.3) {\Large$Q=S(q)$};
    \end{scope}
\end{tikzpicture}

\caption{The walk $Q=S(q)$ satisfies the zigzag condition and generates walks that also satisfy the zigzag condition like $S(v_3)$, so it is orientation compatible. We will see soon that the vertices of the triangle $v_0 \, v_1 \, v_2$ must have symmetric vertex walks, so it follows that they are frozen (since there is no symmetric edge in $H$ here).}
\label{fig:zigzag}
\end{figure}
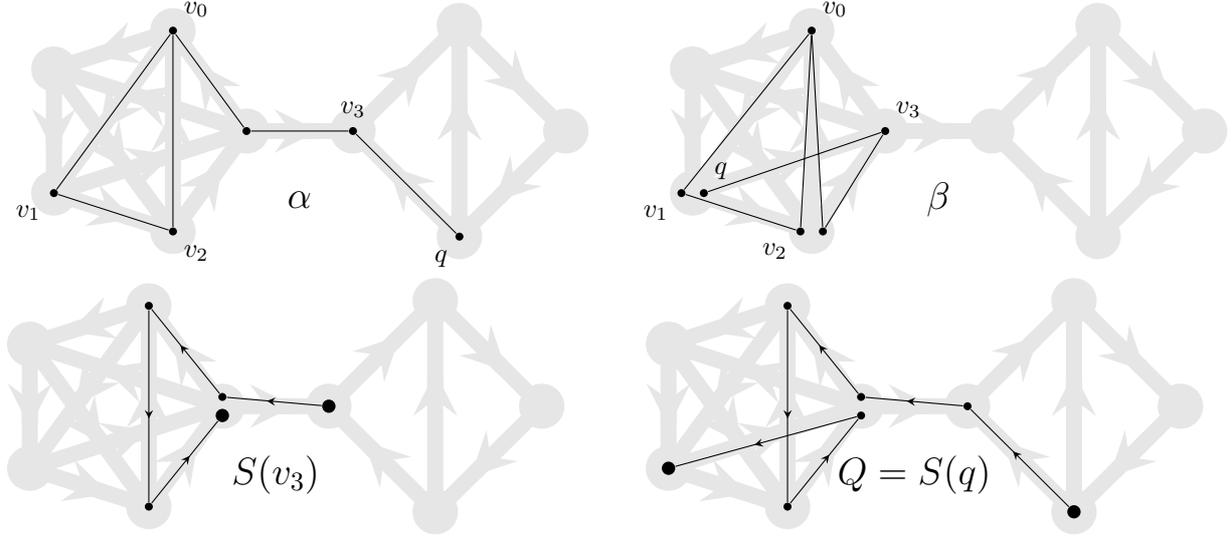

\begin{lemma}
	\label{lem:roc} 
	Let $\alpha, \beta \colon G \to H$ and $q \in V(G)$. Let $Q$ be an $\undir{H}$-realizable walk for $\alpha, \beta, q$. Then $Q$ is $H$-realizable if and only if it is orientation compatible.
\end{lemma}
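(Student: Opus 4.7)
The proof will handle each implication separately, comparing the un-reduced vertex walk $S(v)$ of a recoloring sequence to its reduction $\alpha(W_v)^{-1} \cdot Q \cdot \beta(W_v) \in \pi(H)$.

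For the forward direction, I would assume $Q$ is $H$-realizable, fix an $H$-recoloring sequence $S = \sigma_0, \ldots, \sigma_\ell$ with $S(q) = Q$, and examine the un-reduced walk $S(v) = S_0(v) \cdots S_{\ell-1}(v)$ for an arbitrary $v \in V(G)$. Each non-empty piece $S_i(v) = (a_i h_i)(h_i b_i)$ corresponds to a step at which $v$ is recolored from $a_i$ to $b_i$ with its neighbors at the common color $h_i$. The homomorphism condition of $\sigma_i$ and $\sigma_{i+1}$ into $H$ forces, for any in-neighbor of $v$, the arcs $h_i \to a_i$ and $h_i \to b_i$, giving locally $a_i \leftarrow h_i \to b_i$; an out-neighbor yields the mirror pattern $a_i \to h_i \leftarrow b_i$. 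Hence the un-reduced $S(v)$ satisfies the zigzag condition. Since $H$ is loopless, the only reductions in $\pi(H)$ are cancellations of pairs $(xy)(yx)$, which preserve the parity of remaining positions and therefore preserve the zigzag pattern. Combined with Lemma~\ref{lem:gen}, the reduced walk $\alpha(W_v)^{-1} \cdot Q \cdot \beta(W_v)$ also zigzags at every $v$, so $Q$ is orientation compatible.

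For the backward direction I would invoke Lemma~\ref{thm:conswro} on $Q$ to obtain an $\undir{H}$-recoloring sequence $S$ satisfying the monochromatic neighborhood property with $S(q) = Q$, and argue that $S$ can be chosen to be an $H$-recoloring sequence. The plan is to run the construction greedily: at each iteration, maintain for every $v$ the reduced remaining walk from the current color $\sigma(v)$ to $\beta(v)$, and always recolor a vertex $w$ by consuming the first two edges of its current reduced walk. Under this choice no cancellation is ever introduced, so the un-reduced $S(v)$ agrees edge-by-edge with $\alpha(W_v)^{-1} \cdot Q \cdot \beta(W_v)$. Orientation compatibility then guarantees that every pair $(ah)(hb)$ executed at a step of $S$ locally zigzags, which is exactly the statement that the arcs $h \to a, h \to b$ (respectively $a \to h, b \to h$, depending on whether $w$ has in- or out-neighbors) belong to $H$. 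Consequently every step of $S$ is a valid $H$-move and $Q$ is $H$-realizable.

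The main obstacle is controlling the reduction behavior in the backward direction: zigzag on $\alpha(W_v)^{-1} \cdot Q \cdot \beta(W_v)$ does not automatically imply zigzag on the un-reduced walk of an arbitrary $\undir{H}$-recoloring sequence realizing $Q$, because cancellations of back-and-forth segments could mask non-zigzag local pieces and make isolated steps fail in $H$. What has to be verified carefully in the algorithm of Lemma~\ref{thm:conswro} is that at each iteration one can pick a vertex $w$ whose remaining reduced walk has a nonempty prefix genuinely realizable by a single recoloring move, and that performing such a move shortens $S_w$ by exactly its first two edges while preserving the reduced form of the remaining walks at all other vertices; once this is set up the zigzag hypothesis transfers step by step from the reduced to the un-reduced walks, and the backward direction follows.
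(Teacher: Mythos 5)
Your proposal is correct and follows essentially the same route as the paper: both directions go through the fact that each individual recoloring step $(a\,h)(h\,b)$ must locally zigzag by arc-preservation, and that reducing a walk cannot destroy the zigzag property. The one point you flag as an ``obstacle'' in the backward direction --- ensuring that the sequence produced from $Q$ never backtracks, so the un-reduced vertex walks agree with the reduced ones --- is in fact already built into Wrochna's construction cited as Lemma~\ref{thm:conswro} (its output has reduced vertex walks, as the analogous Theorem~\ref{thm:consref} states explicitly in the reflexive case), so you may use that property as given rather than re-deriving it with a greedy argument.
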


\begin{proof} 
        Let $Q$ be an $\undir{H}$-realizable walk. First assume that $Q$ is $H$-realizable. Then let $S$ be a $H$-recoloring sequence from $\alpha$ to $\beta$ such that $Q = S(q)$. Let $v \in V(G)$ and $W_v$ a walk from $q$ to $v$ in $G$. Then $\alpha(W_v)^{-1} Q \beta(W_v) = S(v)$ and we can write $S(v)$ as a reduced walk $(a_1 a_2) (a_2 a_3) \ldots (a_{n-1} a_n)$. If $N^-_G(v) \neq \emptyset$ then there is an arc $w \rightarrow v$ for some $w \in V(G)$. At each color change of $v$ from $a_i$ to $a_{i+2}$, the vertex $w$ must have color $a_{i+1}$ because $S$ satisfies the monochromatric neighborhood property. Since $S$ is a $H$-recoloring sequence, its homomorphisms all preserve the arc $w \rightarrow v$.  Hence $a_1 \leftarrow a_2
        \rightarrow \ldots \leftarrow a_{n-1} \rightarrow a_n$ is a path in $H$.  Similarly, if $N^+_G(v) \neq \emptyset$ then we have the same path with all arcs reversed. As this holds for each vertex $v \in V(G)$, $Q$ is orientation compatible.\\
        Conversely, if $Q$ is orientation compatible, then let $S$ be the $\undir{H}$-recoloring sequence constructed via Lemma~\ref{thm:conswro}. Consider any step $\sigma_i, \sigma_{i+1}$ of $S$, say when a vertex $v$ changes color from $a$ to $b$ while its neighbor have color $c$. This color change is recorded in $S(u)$ which satisfies the zigzag condition so if $\sigma_i$ induces a homomorphism $G \to H$, then $\sigma_{i+1}$ too. By induction, each homomorphism of $S$ induces a homomorphism $G \to H$ so we eventually have a $H$-recoloring sequence.
\end{proof}

Due to the second part of the proof of Lemma~\ref{lem:roc}, the following generalization of 
Lemma~\ref{thm:conswro} is immediate.

\begin{lemma}
	\label{lem:consdir} 
	Given an $H$-realizable walk $Q$, we can construct an associated
	$H$-recoloring sequence in time $O(|V(G)|^2 + |V(G)|\cdot |Q|)$.
\end{lemma}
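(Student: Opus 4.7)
The plan is to invoke Lemma~\ref{thm:conswro} essentially as a black box, and then observe that no additional work is needed beyond what has already been established in the second half of the proof of Lemma~\ref{lem:roc}. The key point is that the construction from Wrochna is deterministic, producing a single $\undir{H}$-recoloring sequence from $Q$, and orientation compatibility is precisely the property needed to upgrade this sequence to an $H$-recoloring sequence.

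First, I would observe that an $H$-realizable walk $Q$ for $\alpha, \beta, q$ is, by definition, also $\undir{H}$-realizable. Therefore Lemma~\ref{thm:conswro} applies and yields, within time $O(|V(G)|^2 + |V(G)| \cdot |Q|)$, a $\undir{H}$-recoloring sequence $S = \sigma_0, \ldots, \sigma_\ell$ from $\alpha$ to $\beta$ that satisfies the monochromatic neighborhood property and such that $S(q) = Q$.

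Next, I would argue that this same $S$ is already an $H$-recoloring sequence, with no additional processing required. By Lemma~\ref{lem:roc}, since $Q$ is $H$-realizable it is in particular orientation compatible, so for every $v \in V(G)$ the reduced walk $S(v) = \alpha(W_v)^{-1} \cdot Q \cdot \beta(W_v)$ satisfies the zigzag condition. Proceeding by induction on the step index $i$, with base case $\sigma_0 = \alpha$ a homomorphism $G \to H$ by hypothesis, each step recolors exactly one vertex $v$ from some color $a$ to some color $b$ while (by the monochromatic neighborhood property) all neighbors of $v$ have a common color $c$. The consecutive pair $(a\,c)(c\,b)$ appears in $S(v)$, and the zigzag condition then guarantees that each arc incident to $v$, whether entering or leaving, is mapped to an arc of $H$ under $\sigma_{i+1}$ given that it was under $\sigma_i$. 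Hence $\sigma_{i+1}$ is a homomorphism $G \to H$ as well.

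There is no real obstacle here: the technical heart of the argument is already contained in the second half of the proof of Lemma~\ref{lem:roc}, and the running time is simply that of Lemma~\ref{thm:conswro}, since verifying the zigzag condition is not needed at construction time once $Q$ is assumed to be $H$-realizable.
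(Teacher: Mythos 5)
Your proof is correct and follows the same route the paper takes: apply Lemma~\ref{thm:conswro} to produce the $\undir{H}$-recoloring sequence, then use the induction argument from the second half of the proof of Lemma~\ref{lem:roc} (orientation compatibility, i.e.\ the zigzag condition on every $S(v)$) to conclude that the same sequence is already an $H$-recoloring sequence. The paper states this tersely as ``due to the second part of the proof of Lemma~\ref{lem:roc}, the following generalization of Lemma~\ref{thm:conswro} is immediate,'' and you have simply spelled that out.
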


Furthermore, the zigzag condition can be exploited in order to decide
efficiently whether a given walk from $\alpha(q)$ to $\beta(q)$ is
$H$-realizable.

\begin{lemma} 
	\label{lem:test} 
	There is a polynomial-time algorithm that, given a walk $Q$ from $\alpha(q)$ to $\beta(q)$, decides in time $O(|E(G)| \cdot (|Q| + |V(G)|))$ if $Q$ is $H$-realizable for $\alpha, \beta, q$.
\end{lemma}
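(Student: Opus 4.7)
The plan is to reduce $H$-realizability to $\bar H$-realizability together with the zigzag condition, using Lemma~\ref{lem:roc}. First, invoke Lemma~\ref{lem:testwro} to decide in time $O(|E(G)|\cdot(|Q|+|V(G)|))$ whether $Q$ is $\bar H$-realizable. If not, then $Q$ is certainly not $H$-realizable and we may reject. Otherwise $Q$ is $\bar H$-realizable, hence arises as $S(q)$ for some $\bar H$-recoloring sequence $S$ satisfying the monochromatic neighborhood property; applying Lemma~\ref{lem:gen} to $u=v=q$ and to an arbitrary closed walk $C$ from $q$ to $q$ shows that $Q$ is topologically valid, so the hypothesis of Lemma~\ref{lem:equigen} is met.

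By Lemma~\ref{lem:roc}, it remains to check orientation compatibility, that is, for every vertex $v \in V(G)$ the (reduced) walk $S(v) := \alpha(W_v)^{-1}\cdot Q \cdot \beta(W_v)$ must satisfy the zigzag condition dictated by the presence of in- or out-neighbors of $v$ in $G$. By Lemma~\ref{lem:equigen} the choice of $W_v$ is immaterial, so I would fix a BFS (or DFS) tree $T$ of $G$ rooted at $q$, computed in time $O(|V(G)|+|E(G)|)$, and take $W_v$ to be the unique path from $q$ to $v$ in $T$; in particular $|W_v|\leq |V(G)|-1$.

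For each vertex $v$, form the concatenation $\alpha(W_v)^{-1}\cdot Q \cdot \beta(W_v)$, a walk in $H$ of length at most $|Q|+2|V(G)|$; reduce it to obtain $S(v)$; then walk through the reduced walk once to check the zigzag condition (which only requires, for each consecutive pair $(a_i\,a_{i+1})(a_{i+1}\,a_{i+2})$, that the two arcs have the appropriate orientation in $H$, determined once and for all by whether $N^-_G(v)$ or $N^+_G(v)$ is non-empty). All of this takes $O(|Q|+|V(G)|)$ per vertex, and therefore $O(|V(G)|\cdot(|Q|+|V(G)|))$ in total. Accept iff every vertex passes the test; correctness is immediate from Lemma~\ref{lem:roc} applied to $\bar H$-realizability plus orientation compatibility.

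The main obstacle is simply the bookkeeping to keep the total running time within $O(|E(G)|\cdot(|Q|+|V(G)|))$: one must not recompute $\alpha(W_v)$ from scratch for every $v$ if one wants to be tidy, but traversing $T$ once and maintaining incremental prefixes suffices. Since $G$ is connected with at least two vertices, $|E(G)|\geq |V(G)|-1$, so the orientation-compatibility phase costs $O(|V(G)|\cdot(|Q|+|V(G)|)) = O(|E(G)|\cdot(|Q|+|V(G)|))$, which matches the cost of the Lemma~\ref{lem:testwro} call and yields the claimed bound.
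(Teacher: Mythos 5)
Your proposal is correct and follows essentially the same route as the paper: first invoke Lemma~\ref{lem:testwro} to test $\bar H$-realizability, then, for each vertex $v$, build a short walk $W_v$ from $q$ to $v$, compute and reduce $\alpha(W_v)^{-1}\cdot Q\cdot\beta(W_v)$, check the zigzag condition, and conclude via Lemma~\ref{lem:roc}. Your extra remark that $\bar H$-realizability implies topological validity (so that Lemma~\ref{lem:equigen} applies and the choice of $W_v$ is immaterial) is a small but worthwhile clarification that the paper leaves implicit.
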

\begin{proof} 
	Use Lemma~\ref{lem:testwro} to decide in time $O(|E(G)| \cdot (|Q| +
	|V(G)|))$ whether $Q$ is $\undir{H}$-realizable for $\alpha,\beta,q$. 
	If not then $Q$ is not $H$-realizable for $\alpha, \beta, q$. Otherwise, 
	for each vertex $v \in V(G)$, use breadth first search to find a
	shortest walk $W_v$ from $q$ to $v$ in $G$ in time $O(|E(G)|)$, then
	compute $S(v) := \alpha(W_v)^{-1} \cdot Q \cdot \beta(W_v)$, reduce the
	resulting walk and check the zigzag condition in time $O(|Q| +
	|V(G)|)$. By Lemma~\ref{lem:roc}, the zigzag condition is satisfied for
	each vertex $v \in V(G)$ if and only if $Q$ is $H$-realizable for
	$\alpha, \beta, q$. 
	In total, for each vertex $v \in V(G)$ the computations can be
	performed in time $O(|Q| + |E(G)|)$. 
\end{proof}

\subsubsection{A description of orientation-compatible walks}

Let $Q$ be an even walk from $\alpha(q)$ to $\beta(q)$. Furthermore, let $v \in
V(G)$ and let $W_v$ be a walk from $q$ to $v$. Suppose that $S_v := \alpha(W_v)
\cdot Q \cdot \beta(W_v) = (a_1\,a_2)\ldots(a_{n-1}\,a_n)$ is the reduced walk
associated to $v$ by $Q$ and $W_v$. We say that $v$ is of type \IN if $N^-_G(v)
\neq \emptyset$ and it is of type \OUT if $N^+_G(v) \neq \emptyset$.
Furthermore, we say that $S_v$ is \emph{\IN-compatible} (resp.,
\emph{\OUT-compatible}) if $v$ is of type \IN and additionally $a_1 \leftarrow
a_2 \rightarrow a_3 \leftarrow \ldots \leftarrow a_{n-1} \rightarrow a_n$ is a
path in $H$ (resp., $v$ is of type \OUT  and additionally $a_1 \rightarrow a_2
\leftarrow a_3 \rightarrow \ldots \rightarrow a_{n-1} \leftarrow a_n$ is a path
of $H$). Finally, if $v$ is of type \IN and of type \OUT, we say that $v$ is of
type \SYM and that $S_v$ is \SYM-compatible if it is \IN-compatible and
\OUT-compatible, this means in particular that $S_v$ has only symmetric edges.

By Lemma~\ref{lem:roc}, we have that an even walk $Q$ is orientation compatible
for the system $(W_v)_{v \in V(G)}$ if and only if for every vertex $v$, if $v$
is of type \IN, then $S_v$ is \IN-compatible and if $v$ is of type \OUT, then
$S_v$ is \OUT-compatible.

\begin{lemma} 
	\label{lem:inout}
	For any arc $u \rightarrow v$ of $G$, the walk $S_u$ is \OUT-compatible if and only if $S_v$ is \IN-compatible.
\end{lemma}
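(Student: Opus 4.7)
The plan is to exploit the freedom in the choice of walks $W_u, W_v$ from $q$ to $u, v$: taking $W_v := W_u \cdot (u,v)$ gives the identity
\[
S_v = (\alpha(v),\alpha(u)) \cdot S_u \cdot (\beta(u),\beta(v)) \quad \text{in } \pi(H).
\]
By Lemma~\ref{lem:equigen}, when $Q$ is topologically valid $S_v$ does not depend on the choice of $W_v$, so this substitution is harmless in the intended setting. Since $u \to v$ is an arc of $G$ and $\alpha, \beta$ are digraph homomorphisms, both $\alpha(u) \to \alpha(v)$ and $\beta(u) \to \beta(v)$ are arcs of $H$. Consequently, the prepended edge $(\alpha(v), \alpha(u))$ is oriented $\leftarrow$ in $H$, while the appended edge $(\beta(u), \beta(v))$ is oriented $\to$.

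Assume now that $S_u$ is \OUT-compatible, so its arrow sequence in $H$ reads $\to \leftarrow \to \leftarrow \cdots \to \leftarrow$. Prepending $\leftarrow$ and appending $\to$ yields $\leftarrow \to \leftarrow \cdots \leftarrow \to$, which is exactly the \IN-zigzag pattern for $S_v$ \emph{before} reduction. The central step is to verify that reduction preserves this alternating pattern. Each elementary reduction removes two consecutive edges $(c,d)(d,c)$, and in an alternating sequence such a pair must carry arrows $\leftarrow \to$ or $\to \leftarrow$; after removal, the arrows at positions $i-1$ and $i+2$ become adjacent, and since alternating arrows two positions apart have equal orientation, the surrounding sequence continues to alternate. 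When the reduction touches an end of the walk, the new boundary arrow is the one previously in position $3$ (respectively second from last), which in an alternating pattern carries the same orientation as the original boundary; hence the starting $\leftarrow$ and ending $\to$ are preserved. It follows by induction on the number of reductions that the reduced walk $S_v$ is \IN-compatible.

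For the converse I would rewrite the same identity as
\[
S_u = (\alpha(u), \alpha(v)) \cdot S_v \cdot (\beta(v), \beta(u)) \quad \text{in } \pi(H),
\]
where now the prepended edge contributes $\to$ and the appended edge $\leftarrow$. Starting from an \IN-compatible $S_v$ with pattern $\leftarrow \to \cdots \leftarrow \to$, we obtain the sequence $\to \leftarrow \cdots \to \leftarrow$, which is the \OUT-zigzag pattern; the very same reduction-invariance argument then gives \OUT-compatibility of $S_u$. The principal obstacle is precisely this reduction step: showing that cascading cancellations can never destroy the alternation nor flip the boundary orientations. Once the invariance is in place, the rest of the proof is bookkeeping of the orientations contributed by the two extra edges corresponding to the arc $u \to v$.
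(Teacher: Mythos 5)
Your proof is correct and follows essentially the same route as the paper: both prepend $(\alpha(v),\alpha(u))$ and append $(\beta(u),\beta(v))$ to $S_u$, use the homomorphism property to deduce that these two extra edges contribute the $\leftarrow$ and $\to$ that turn an \OUT-pattern into an \IN-pattern, and observe that reduction preserves the zigzag. The only difference is that you spell out the reduction-invariance argument in detail, whereas the paper invokes it implicitly, having stated it once and for all just after the zigzag condition is defined (``if $S_v$ is not reduced and satisfies the zigzag condition, then it will still satisfy that condition after reduction'').
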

\begin{proof} 
	Note that $u$ is of type \OUT and $v$ is of type \IN. By the
	monochromatic neighborhood property, if $S_u$ is \OUT-compatible then since $\alpha$ and $\beta$ are homomorphisms, $(\alpha(v)\, \alpha(u)) S_u (\beta(u)\, \beta(v))$, is \IN-compatible. As $S_v$ is precisely this walk after reduction, it is then \IN-compatible. Similarly, if $S_v$ is \IN-compatible, then $S_u$ is \OUT-compatible.
\end{proof}

\begin{lemma}
    \label{lem:compSYM}
    \mbox{}
	\begin{enumerate}
		\item If $G$ has no vertex of type \SYM and $Q$ satisfies the zigzag condition then $S_v$ satisfies the zigzag condition for all $v \in V(G)$.
		\item Let $\{v_1, \ldots, v_k\} \subseteq V(G)$ be the subset of vertices of $G$ of type \SYM. If $S_{v_i}$ satisfies the zigzag condition for $1 \leq i \leq k$ then $S_v$ satisfies the zigzag condition for all $v \in V(G)$.
	\end{enumerate}
\end{lemma}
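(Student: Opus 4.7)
The common engine for both parts is Lemma~\ref{lem:inout}: for any arc $u \to v$ of $G$, the walk $S_u$ is \OUT-compatible if and only if $S_v$ is \IN-compatible. This lets us propagate the appropriate compatibility of the walks $S_v$ along arcs of $G$ starting from a well-chosen anchor vertex. The structural fact driving the induction is that a non-\SYM vertex forces the direction of every incident arc: an \IN-only vertex has only incoming arcs and an \OUT-only vertex has only outgoing arcs. Consequently, whenever propagation reaches a non-\SYM vertex $w$, the compatibility delivered by Lemma~\ref{lem:inout} is exactly the type that $w$'s zigzag condition requires, and $w$ in turn forces the direction of the next arc along any walk.

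For Part~1, the anchor is $q$. Since no vertex is \SYM, $q$ is either \IN-only or \OUT-only, and the hypothesis that $Q = S_q$ satisfies the zigzag condition gives, respectively, \IN- or \OUT-compatibility of $S_q$. For any $v \in V(G)$ I would choose a walk $q = w_0, w_1, \ldots, w_m = v$ in $\undir{G}$, which exists by weak connectivity of $G$, and prove by induction on $i$ that $S_{w_i}$ has the compatibility matching the type of $w_i$. The inductive step splits on the direction of the arc between $w_{i-1}$ and $w_i$; in both subcases the absence of \SYM vertices forces the types of both endpoints, so one application of Lemma~\ref{lem:inout} delivers the required compatibility for $S_{w_i}$.

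For Part~2, I would assume $k \geq 1$ (otherwise Part~1 is the relevant statement) and fix an arbitrary \SYM vertex $v_1$ as the anchor. By hypothesis, $S_{v_1}$ is \SYM-compatible, i.e.\ both \IN- and \OUT-compatible, so propagation may start in either direction. The same induction along a walk $v_1 = w_0, \ldots, w_m = v$ in $\undir{G}$ then goes through with one adjustment: if an intermediate vertex $w_j$ happens to be itself \SYM, I invoke the hypothesis directly rather than the propagated information to obtain both \IN- and \OUT-compatibility at $w_j$, which allows propagation to continue in whichever direction the next arc of the walk points. The main obstacle is precisely this bookkeeping of compatibility types at intermediate \SYM vertices in Part~2; once it is in place, both parts reduce to a routine induction along a walk in the connected graph $\undir{G}$.
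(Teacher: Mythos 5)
Your proposal is correct and takes essentially the same approach as the paper: propagate compatibility along a walk in $\undir{G}$ via Lemma~\ref{lem:inout}, using the fact that a non-\SYM vertex has incident arcs of only one direction so its type (and the next arc's direction) is forced. The only cosmetic difference is in Part~2: the paper takes a \emph{shortest} path from the \SYM set $X$ so that no intermediate vertex is \SYM, whereas you allow an arbitrary walk and fall back on the hypothesis at intermediate \SYM vertices; both are valid, and you also make explicit the implicit assumption $k \geq 1$.
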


\begin{proof}
	We prove the first statement.  Assume there is no vertex of type \SYM
	and that $Q$ satisfies the zigzag condition. Without loss of
	generality, we may assume that $q$ is of type \IN, so $Q$ is
	\IN-compatible. Let $v$ be any other vertex of $G$ and $P$ a path from
	$q$ to $v$ in $G$. Since there is no vertex of type \SYM we deduce that
	$P$ is alternating between vertices of type \IN and vertices of type
	\OUT. By Lemma~\ref{lem:inout} $S_w$ satisfies the zigzag condition for
	each vertex $w$ in $P$. In particular, $S_w$ does.
	\label{itm:zigzag:nosym}
	
	It remains to prove the second statement.
	Let $X = \{v_1, \ldots, v_k\} \subseteq V(G)$ be the vertices of $G$ of
	type \SYM and suppose that $S_{v_i}$ satisfies the zigzag condition for
	$1 \leq i \leq k$.  Let $v \in V(G)$ be any vertex that is not of type
	\SYM and let $P$ be a shortest path from $X$ to $v$. Again, $P$ is
	alternating between vertices of type \IN and vertices of type \OUT and
	hence for each vertex $w$ of $P$, we obtain that $S_w$ satisfies the
	zigzag condition by Lemma~\ref{lem:inout}. \label{itm:zigzag:sym}
\end{proof}

We obtain a description of all orientation compatible walks. 

\begin{theorem} 
	\label{thm:oriwalks} 
	Let $(W_v)$ be a system of walks from $q_0$ to all vertices $v \in
	V(G)$. There is some vertex $q \in V(G)$ such that the set of
	orientation compatible walks for $q$ and the system $(W_q^{-1}W_v)_{v
	\in V(G)}$ is one of the followings:
	\begin{enumerate}
		\item $\emptyset$. \label{itm:ori:nowalk}
		\item $\lbrace Q \rbrace$ for some reduced walk $Q$ of even length $|Q| = O(|V(G)|)$. \label{itm:ori:Q}
		\item The set of all reduced walks of even length from $\alpha(q)$ to $\beta(q)$ that satisfy the zigzag condition. \label{itm:ori:all}
	\end{enumerate}
	Furthermore, we can determine in time $O(|V(G)| \cdot |E(G)|)$ which
	case holds, and output $Q$ in Case~\ref{itm:ori:Q}.
\end{theorem}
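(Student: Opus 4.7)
The plan is to distinguish cases based on whether $G$ contains a vertex of type \SYM, a check that can be performed in $O(|V(G)|)$ time by scanning in- and out-neighborhoods.

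If $G$ has no vertex of type \SYM, I would pick $q$ to be any vertex of $V(G)$; then $W_q^{-1}W_q = \varepsilon$ and so $S_q = Q$. By Lemma~\ref{lem:compSYM}(1), if $Q$ satisfies the zigzag condition then the induced walks $S_v$ also satisfy the zigzag condition for every $v \in V(G)$, and hence $Q$ is orientation compatible. Conversely, orientation compatibility of $Q$ forces $S_q = Q$ to satisfy the zigzag condition at $q$. Therefore the set of orientation compatible walks equals the set of reduced walks of even length from $\alpha(q)$ to $\beta(q)$ satisfying the zigzag condition, placing us in Case~\ref{itm:ori:all} (or Case~\ref{itm:ori:nowalk} if this set turns out to be empty, which I would report accordingly).

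If $G$ contains a vertex of type \SYM, I would pick $q$ to be such a vertex and let $V'$ be the set of all \SYM vertices. By Lemma~\ref{lem:compSYM}(2), $Q$ is orientation compatible if and only if $S_v$ is \SYM-compatible for every $v \in V'$. The key observation is that \SYM-compatibility of $S_v$ is equivalent to $S_v$ being a symmetric walk: being simultaneously \IN- and \OUT-compatible forces every edge of $S_v$ to be bidirectional in $H$, and conversely any symmetric walk trivially satisfies both orientation conditions. Thus the set of orientation compatible walks coincides with the set of walks that generate symmetric vertex walks on $V'$. I would then apply Lemma~\ref{lem:symwalks} to $V'$ and the system $(W_q^{-1}W_v)_{v \in V'}$, which in time $O(|V(G)| \cdot |E(G)|)$ returns one of three possibilities corresponding directly to Cases~\ref{itm:ori:nowalk}, \ref{itm:ori:Q}, and \ref{itm:ori:all} of the theorem, with $|Q| \leq 2|V(G)|$ in Case~\ref{itm:ori:Q}.

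The main step requiring care is identifying Lemma~\ref{lem:symwalks}'s third case (all symmetric walks from $\alpha(q)$ to $\beta(q)$) with the theorem's Case~\ref{itm:ori:all} (all reduced walks of even length satisfying the zigzag condition). Since $q$ is \SYM, the zigzag condition at $q$ is precisely the symmetry condition on $Q$, by the same equivalence used above; and Lemma~\ref{lem:compSYM}(2) then propagates the zigzag condition from $V'$ to every vertex of $G$, so the two descriptions coincide exactly. The total running time is dominated by the application of Lemma~\ref{lem:symwalks}, giving the claimed $O(|V(G)| \cdot |E(G)|)$.
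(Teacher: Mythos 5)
Your proposal follows the same two-case structure as the paper's proof (no \SYM vertex vs.\ at least one \SYM vertex) and relies on the same key lemmas, Lemma~\ref{lem:compSYM} and Lemma~\ref{lem:symwalks}, so the argument is essentially identical to the paper's. One small but real detail is missing: when Lemma~\ref{lem:symwalks} returns a singleton $\{Q\}$, the paper additionally checks whether $Q$ has even length, reporting Case~\ref{itm:ori:nowalk} if it does not. This check is needed because orientation compatibility is, by definition, a property of reduced walks of \emph{even} length, and nothing in Lemma~\ref{lem:symwalks} forces the unique symmetric walk it produces to have even length. Your claim that the three outcomes of Lemma~\ref{lem:symwalks} ``correspond directly'' to the three cases of the theorem glosses over this parity check. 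A minor secondary imprecision: determining which vertices are of type \SYM requires inspecting the arcs of $G$, so it takes $O(|E(G)|)$ rather than $O(|V(G)|)$ time, though this does not change the stated overall running time of $O(|V(G)|\cdot|E(G)|)$.
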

\begin{proof} 
	Let $V' \subseteq V(G)$ be the set of vertices of type \SYM. Notice
	that $V'$ can be computed in time $O(|E(G)|)$.
	Suppose first, suppose that $V' = \emptyset$. Then, by
	statement~\ref{itm:zigzag:nosym} the orientation-compatible walks are
	precisely those of even lenth that satisfy the zigzag-condition. We can
	therefore indicate Case~\ref{itm:ori:all} with $q := q_0$. Now suppose
	that $V' \neq \emptyset$. Let $q \in V'$ and apply
	Lemma~\ref{lem:symwalks} to determine in time $O(|V(G)| \cdot |E(G)|)$
	all walks from $\alpha(q)$ to $\beta(q)$ that generate symmetric walks
	for the system $(W_v)_{v \in V'}$ on all vertices of $V'$. Invoke the
	second statement of Lemma~\ref{lem:compSYM} to deduce that:
	\begin{enumerate}
		\item In Case~\ref{itm:sym:empty} of Lemma~\ref{lem:symwalks} we report Case~\ref{itm:ori:nowalk}, i.e., there is no orientation-compatible walk.
		\item In Case~\ref{itm:sym:onlyQ} of Lemma~\ref{lem:symwalks} we report Case~\ref{itm:ori:Q} and output $Q$ if $Q$  has even length and Case~\ref{itm:ori:nowalk} otherwise.
		\item In Case~\ref{itm:sym:all} of Lemma~\ref{lem:symwalks} we report Case~\ref{itm:ori:all}.
	\end{enumerate}
	The total runtime is dominated by the algorithm of Lemma~\ref{lem:symwalks}, hence $O(|V(G)|\cdot|E(G)|)$ as claimed.
\end{proof}

\subsection{Proof of Theorem~\ref{thm:c4:classification}}

\begin{lemma} 
	\label{lem:Rcancel}
    Let $R_0$ be cyclically reduced walk and let $P$ be a reduced walk starting at the base point of $R_0$.  We can find in time $O(|R_0| + |P|)$ an
	integer $n_0$ such that none of $R_0$ and $R_0^{-1}$ entirely cancels
	with $R^{n_0}P$.
\end{lemma}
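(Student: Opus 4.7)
The plan is to determine the unique way $P$ ``aligns'' with copies of $R_0$ starting from its base point, and then choose $n_0$ to neutralize this alignment so that $R^{n_0}P$ reduces to a walk whose first edge is incompatible with both $R_0$ and $R_0^{-1}$.

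First, I would record a key structural consequence of cyclic reducedness: the first edge of $R_0$ must differ from the first edge of $R_0^{-1}$. Otherwise the last edge of $R_0$ would be the inverse of its first edge, contradicting the definition of cyclic reducedness. In particular, $P$ cannot simultaneously admit $R_0$ and $R_0^{-1}$ as a prefix, so a single linear scan through $P$ suffices to determine which (if either) of the two it starts with.

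Next, I would compute the unique decomposition $P = R_0^{j}\,P'$ in $\pi(H)$, where $j \in \mathbb{Z}$ and $P'$ has neither $R_0$ nor $R_0^{-1}$ as a prefix. Concretely: compare the first edge of $P$ with the first edges of $R_0$ and $R_0^{-1}$; if one matches, walk a cursor through $P$ against copies of $R_0$ (respectively $R_0^{-1}$), incrementing (respectively decrementing) a counter $j$ each time a full block of $|R_0|$ edges matches, and stopping as soon as a mismatch occurs inside a would-be block (or $P$ is exhausted). Since each edge of $P$ is examined only once and each comparison is $O(1)$, this runs in time $O(|R_0|+|P|)$. I would then set $n_0 := -j$, so that $R^{n_0}P = R_0^{-j}\cdot R_0^{j}P'$ reduces to $P'$ in $\pi(H)$.

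It remains to argue that $n_0 = -j$ has the required property. Because $P'$ does not start with $R_0^{-1}$, concatenating $R_0$ with $R^{n_0}P = P'$ produces a walk whose reduction retains at least the first edge of that prepended $R_0$, hence $R_0$ does not entirely cancel. Symmetrically, $P'$ does not start with $R_0$, so $R_0^{-1}$ does not entirely cancel either. The main technical subtlety lies in the scan of step~2 and in dealing cleanly with boundary cases, notably when $P$ is shorter than $R_0$ (in which case $j=0$ and $P' = P$) or when $P$ ends in the middle of a would-be block of $R_0^{\pm 1}$. Cyclic reducedness of $R_0$ is the decisive ingredient throughout: it both ensures that the decomposition $P = R_0^{j}P'$ is unique, and lets the algorithm commit to peeling $R_0$ or $R_0^{-1}$ without any backtracking.
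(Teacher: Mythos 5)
Your proposal is correct and is essentially the same argument as the paper's: both peel off the maximal prefix of $P$ consisting of consecutive copies of $R_0$ or $R_0^{-1}$ (relying on cyclic reducedness of $R_0$ to make this peeling unambiguous and to bound the work by $O(|R_0|+|P|)$), and both set $n_0$ to exactly undo that prefix. The paper phrases this as an iterative loop that adjusts $n$ one step at a time, while you phrase it as computing the unique decomposition $P = R_0^j P'$ in a single scan and setting $n_0 = -j$; these are the same algorithm.
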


\begin{proof}
	Start with $n=0$. Check if $R_0$ entirely reduces with $R_0^nP$ and if so then replace $n$ by $n+1$. Similarly, if $R_0^{-1}$ reduces with $R_0^nP$, then replace $n$ by $n-1$. Repeat until none of $R_0^{-1}$ or $R_0$ entirely reduces with $R_0^nP$. Each step is done in $|R_0|$ and reduces $|R_0^nP|$ by $|R_0|$, so we deduce that this process terminate in time $O(|R_0| + |P|)$. Also observe that $|n_0|$ is polynomial in $|V(G)|$ and $|V(H)|$.
\end{proof}

Recall that for $\alpha,\beta \colon G \to H$ and $q \in V(G)$, we denote by
$\Pi_q$ the set of all walks from $\alpha(q)$ to $\beta(q)$ that are
$H$-realizable. We are now ready to prove Theorem~\ref{thm:c4:classification}.

\begin{proof}[Proof of Theorem~\ref{thm:c4:classification}] 
	Fix any $q_0 \in V(G)$ and use breadth first search to compute shortest
	walks $W_v$ from $q_0$ to $v$ for all $v \in V(G)$ in time $O(|V(G)|
	\cdot |E(G)|)$.  We invoke Theorem~\ref{thm:oriwalks} for $q_0$ and
	$(W_v)_{v \in V}$ to obtain in time $O(|V(G)| \cdot |E(G)|)$ a vertex
	$q \in V(G)$ and a description of the set of orientation compatible
	walks for $q$ and $(W_v)_{v \in V(G)}$. We distinguish the possible
	outcomes: 
	\begin{description}
		\item[Case 1] There is no orientation-compatible walk. Then report that there is no $H$-realizable walk.
		\item[Case 2] There ia a unique orientation-compatible reduced walk $Q$ of even length. By Lemma~\ref{lem:test}, we can decide in time $O(|E(G)| \cdot (|Q| + |V(G)|) = O(|V(G)| \cdot |E(G)|)$ if $|Q|$ is $H$-realizable as $|Q| = O(|V(G)|)$.
		\item[Case 3] All reduced walks of even length from $\alpha(q)$
			to $\beta(q)$ that satisfy the zigzag condition and are
			orientation compatible. Invoke Theorem~\ref{thm:AlWro}
			to get in time $O(|V(G)|\cdot |E(G)| + |E(H)|)$ a description of the set $\undir{\Pi}_q$ of all
			$\undir{H}$-realizable walks from $\alpha(q)$ to
			$\beta(q)$. We again distinguish the four possible outcomes.
			\begin{enumerate}
				\item $\undir\Pi_q = \emptyset$. There is no $\undir{H}$-realizable walk for $\alpha,\beta,q$, so there is no $H$-realizable walk.
				\item $\undir\Pi_q = \{Q \}$. There is a unique reduced walk $Q$ that is $\undir{H}$-realizable. Then we can check in time $O(|Q|) = O(|E(G)| \cdot |V(G)| + |E(H)|)$ whether it satisfies the zigzag condition and hence is orientation compatible. 
				\item $\undir{\Pi}_q = \lbrace R^nP \mid n \in \mathbb{Z} \rbrace$ with $R,P \in \pi(H)$ and $R$ closed and of even length. If $R$ does not satisfy the zigzag condition then the following claim allows us to conclude.
					\begin{myclaim}
						Suppose that $R$ does not satisfy the zigzag condition. Then at most one of the $\undir{H}$-realizable walks $R^nP$, $n \in \mathbb{Z}$, satisfies the zigzag condition. Furthermore, we can find such a walk in time $O(|V(G)| \cdot |E(G)| + |E(H)|)$ or conclude there is none.
					\end{myclaim}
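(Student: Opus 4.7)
The plan has two parts. First, I would establish uniqueness by an algebraic concatenation argument. Second, I would find the unique candidate (if it exists) by searching a bounded window of integers near one produced by Lemma~\ref{lem:Rcancel}.

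For uniqueness, suppose $R^{n_1}P$ and $R^{n_2}P$ both satisfy the zigzag condition with $n_1 \neq n_2$; I aim to derive the contradictory conclusion that $R$ also satisfies it. Assume WLOG that $q$ is of type \IN (the cases of type \OUT and \SYM are analogous, with \SYM amounting to both). The concatenation $(R^{n_1}P)\cdot(R^{n_2}P)^{-1}$ equals $R^{n_1-n_2}$ in $\pi(H)$, and it is \IN-compatible \emph{before reduction}: both factors are \IN-compatible and the shift in edge-positions between $(R^{n_2}P)^{-1}$ and its image in the concatenation is $|R^{n_1}P|$, which is even. The \IN-compatibility is preserved by each reduction step: a deleted pair $(ab)(ba)$ has one forced orientation in $H$ (determined by the parity of its position) that is consistent for both of its edges, and its removal shifts all subsequent edge-positions by $2$. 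Writing $R = A R_0 A^{-1}$ with $R_0$ cyclically reduced, an easy induction shows that $R^{n_1-n_2}$ reduces to $AR_0^{n_1-n_2}A^{-1}$. Since $|R_0|$ is even, each edge of $R$ occurs at a position of matching parity in this reduced walk, and so \IN-compatibility of $AR_0^{n_1-n_2}A^{-1}$ forces $R$ to be \IN-compatible, a contradiction.

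For the algorithmic part, I apply Lemma~\ref{lem:Rcancel} to $R_0$ and the reduced form of $A^{-1}P$ to obtain, in time $O(|R|+|P|)$, an integer $n_0$ such that neither $R_0$ nor $R_0^{-1}$ fully cancels with $R_0^{n_0}(A^{-1}P)$. A short induction then shows that for $n \ge n_0+2$ (resp.\ $n \le n_0-2$) the reduced form of $R^nP$ contains $R_0$ (resp.\ $R_0^{-1}$) as a subwalk. Since the zigzag violation of $R$ is carried by the cyclic core $R_0$ up to the fixed parity offset introduced by $A$, this violation propagates to the entire reduced walk, so $R^nP$ cannot satisfy the zigzag condition. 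Hence only $n \in \{n_0-1,n_0,n_0+1\}$ needs to be checked, and for each we call Lemma~\ref{lem:test} to decide in time $O(|V(G)|\cdot|E(G)|)$ whether $R^nP$ is $H$-realizable, yielding an overall runtime of $O(|V(G)|\cdot|E(G)| + |E(H)|)$. The main obstacle is the parity bookkeeping: one must carefully verify both that reduction preserves \IN-compatibility and that a full copy of $R_0$ appearing in the reduced form of $R^nP$ forces a zigzag violation. The evenness of $|R|$ (and hence of $|R_0|$) is essential throughout, as it guarantees that both concatenation of copies of $R_0$ and deletion of a reducible pair shift subsequent edge positions by even amounts, preserving the parities on which the zigzag condition depends.
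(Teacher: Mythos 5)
Your uniqueness argument is a genuinely different and attractive route: rather than the paper's case analysis on $A$ plus an edge-counting observation, you show directly that if $R^{n_1}P$ and $R^{n_2}P$ both satisfied the zigzag condition then $R$ would too. The key steps — both factors of $(R^{n_1}P)(R^{n_2}P)^{-1}$ are zigzag-compatible before reduction, the concatenation offset $|R^{n_1}P|$ is even, reduction preserves the condition (because deletable pairs have one forced consistent orientation and deletion shifts positions by $2$), and the even length of $R_0$ lets you read off the compatibility of $R=AR_0A^{-1}$ from that of $AR_0^{n_1-n_2}A^{-1}$ — all hold up, including when $n_1-n_2<0$.

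The algorithmic part, though, has a gap. You assert that ``the zigzag violation of $R$ is carried by the cyclic core $R_0$ up to the fixed parity offset introduced by $A$,'' but this need not be the case: the violation can sit entirely inside $A$ (or $A^{-1}$), with $R_0$ zigzag-compatible at offset $|A|$. In that situation, the presence of a full copy of $R_0$ in the reduced form of $R^nP$ does not produce a violation, so your argument does not rule out candidates with $|n-n_0|\ge 2$. The paper closes this by a case split: if $A$ does not satisfy the zigzag condition, then $A$ survives as the length-$|A|$ prefix of the reduced form of $R^nP$ for every $n\neq n_0$ (because $AR_0$, $AR_0^{-1}$ and $R_0A^{-1}$ are all reduced and neither $R_0$ nor $R_0^{-1}$ cancels fully against $R_0^{n_0}A^{-1}P$), forcing a violation at positions $1,\ldots,|A|$; hence only $n=n_0$ can possibly work. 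Your final algorithm happens to be correct (the window $\{n_0-1,n_0,n_0+1\}$ contains $n_0$), but the correctness argument as written does not establish that no candidate exists outside the window, so you need to add the $A$-carries-the-violation case explicitly. Two smaller points: Lemma~\ref{lem:Rcancel} is stated in the paper under the additional hypothesis that the reduced forms of $R_0^{n}P$ shrink as $n$ moves toward $n_0$, which you are implicitly using; and for the claim as stated you only need to test the zigzag condition on $R^nP$ (a linear scan), not full $H$-realizability via Lemma~\ref{lem:test} — though the latter is of course what you ultimately want in the surrounding proof of Theorem~\ref{thm:c4:classification}.
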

					\begin{proof}
						Decompose $R = A R_0 A^{-1}$ with all walks minimal and $R_0$ cyclically reduced. Apply Lemma~\ref{lem:Rcancel} with the walks $R_0$ and $A^{-1}P$ and obtain in time $O(|R_0| + |P|) = O(|V(G)| \cdot |E(G)| + |E(H)|)$ an integer $n_0 \in \mathbb{Z}$ such that none of $R_0^{-1}$ and $R_0$ entirely cancels with $R_0^{n_0} A^{-1}P$.
						
						If $A$ does not satisfy the zigzag condition, then none of $R^nP$ do for $n \neq n_0$, so only $R^{n_0}P$ can possibly satisfy the zigzag condition, which can be checked in time $O(|R^{n_0}P|) = O(|V(G)| \cdot |E(G)| + |E(H)|)$.

						Otherwise, if $A$ satisfies the zigzag condition, then so do the edges of $A^{-1}$, so $R_0$ does not satisfy the zigzag condition (since $R$ does not). For $n > n_0 +1$, $R^nP \in \pi(H)$ contains an entire $R_0$ that does not reduce, so it cannot satisfy the zigzag condition. Similarly if $n < n_0 -1$, then $R^nP$ does not satisfy the zigzag condition since it contains an entire $R_0^{-1}$. Eventually we only need to test $R^{n_0-1}P$, $R^{n_0}P$ and $R^{n_0+1}P$, which can be done in time $O(|P|) = O(|V(G)| \cdot |E(G)| + |E(H)|)$. Observe that each edge of $R_0$ belongs to precisely two of those three walks, so as it is the case for the edges of $R_0$ that do not fit the zigzag condition, we deduce that at most one of $R^{n_0-1}P$, $R^{n_0}P$ and $R^{n_0+1}P$ satisfy the zigzag condition.
					\end{proof}
					On the other hand, if $R$ satisfies the zigzag condition then $P$ satisfies the zigzag condition if and only all walks $R^nP$ do. To see this, notice that  ''badly oriented'' edges of $P$ must reduce with ''badly oriented'' edges of $R^n$, but there is none in $R^n$ since it is orientation-compatible. So we can again distinguish between Case~\ref{itm:main1:empty} and Case~\ref{itm:main1:RnP} in time $O(|E(G)| \cdot |V(G)| + |E(H)|)$ and report the result.
				\item $\undir\Pi_q$ contains all reduced walks of even length from $\alpha(q)$ to $\beta(q)$. We report Case~\ref{itm:main1:all}.
			\end{enumerate}
	\end{description}
\end{proof}

\section{Reflexive graphs}
\label{sec:push-pull}

In this section we prove Theorem~\ref{thm:reflexive}. To this end we show that there is a polynomial-time algorithm for \recolun{H} for any reflexive digraph $H$ that neither contains a triangle of algebraic girth 1 nor a 4-cycle of algebraic girth 0. Lemma~\ref{lem:equirecol} then implies that \recol{H} admits such an algorithm as well.
We first consider the case where $H$ is undirected and reprove the following result by Lee et al.~\cite{Lee:21}.

\begin{theorem}[{\cite[Theorem 1.1]{Lee:21}}]
    \label{thm:triangle-free}
Let $H$ be an undirected, reflexive, triangle-free graph. Then \recolun{H} admits a polynomial-time algorithm for reflexive instances.
\end{theorem}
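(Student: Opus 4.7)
The plan is to adapt Wrochna's topological framework from~\cite{Wrochna:20}, replacing the monochromatic neighborhood property by the push-or-pull (\pop) property. First I would establish the key structural fact that, when $H$ is undirected, reflexive and triangle-free and the instance is reflexive, every step of a \recolun{H}-sequence is automatically a \pop step. Indeed, if a vertex $u$ is recolored from $a$ to $b$, then because $uu \in E(G)$ the adjacency in $\Hom_1(G,H)$ forces $\phi(u)\psi(u) = ab \in E(H)$; and for any non-loop neighbor $v$ of $u$ with common color $h = \phi(v) = \psi(v)$, the conditions $ah, bh \in E(H)$ make $h$ a common neighbor of the adjacent pair $a,b$ in $H$. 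Triangle-freeness of $H$ then forces $h \in \{a,b\}$, since any other common neighbor of $a$ and $b$ would complete a triangle on $\{a,b,h\}$.

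Second, I would set up the walk framework analogous to Section~\ref{sec:monochromatic}. For a \pop-respecting recoloring sequence $S=\sigma_0,\ldots,\sigma_\ell$ and a vertex $v$, set $S_i(v) := (\sigma_i(v)\,\sigma_{i+1}(v))$ whenever $\sigma_i(v)\neq \sigma_{i+1}(v)$ (a single edge of $H$ by the \pop property) and $S_i(v) := \varepsilon$ otherwise, and let $S(v) := S_0(v)\cdots S_{\ell-1}(v)$. The central technical lemma is the push-or-pull analog of Lemma~\ref{lem:gen}: for any walk $W$ from $u$ to $v$ in $G$, $S(v) = \alpha(W)^{-1} \cdot S(u) \cdot \beta(W)$ in $\pi(H)$. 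I would prove this by induction on $|S|+|W|$; the one-step base case reduces to the identity $(h\,a)(a\,b)(b\,h) = \varepsilon$ in $\pi(H)$, which holds thanks to the loops of $H$: if $h=a$ it becomes $(a\,a)(a\,b)(b\,a)$ and reduces via the loop $(a\,a)$ and the backtrack $(a\,b)(b\,a)$, and symmetrically if $h=b$.

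Third, fixing a base vertex $q \in V(G)$ and calling a walk $Q$ from $\alpha(q)$ to $\beta(q)$ \emph{\pop-realizable} if $Q = S(q)$ for some \pop-respecting sequence from $\alpha$ to $\beta$, I would prove a classification theorem of the same shape as Theorem~\ref{thm:AlWro}: the set of \pop-realizable walks is either empty, a singleton $\{Q\}$, of the form $\{R^n P : n \in \mathbb{Z}\}$, or it consists of all reduced walks from $\alpha(q)$ to $\beta(q)$. Necessity of topological validity of any \pop-realizable $Q$ is immediate from the generating formula applied to closed walks in $G$, and the classification itself can then be imported from Theorem~\ref{thm:T:Wrochna} on topologically valid walks, noting as in Remark~\ref{rem:nostructure} that this theorem uses no structural hypothesis on $H$.

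The main obstacle is the converse direction: the explicit construction of a \pop-respecting recoloring sequence from a given topologically valid walk $Q$, and hence the sufficiency of topological validity. In Wrochna's argument (Lemma~\ref{thm:conswro}) the sequence is built by processing one edge of $Q$ at a time and scheduling the induced color change at each vertex of $G$ consistently; here I would adapt that scheduling so that, before a vertex $v$ is recolored from $c$ to $d$, every neighbor of $v$ already has color in $\{c,d\}$, which is possible because the walks $S(v)$ obtained from $Q$ via the generating formula record exactly the order in which each neighbor has to traverse the edge $cd$ of $H$. Once this construction is in place, testing \pop-realizability and producing a sequence can be carried out in polynomial time analogously to Lemmas~\ref{thm:conswro} and~\ref{lem:testwro}, yielding the polynomial-time algorithm for \recolun{H} on reflexive instances asserted in Theorem~\ref{thm:triangle-free}.
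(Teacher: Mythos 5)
The first two-thirds of your plan align with the paper: triangle-freeness of the reflexive template forces the \pop property by exactly the argument you give, the vertex walks $S(v)$ are defined edge-by-edge, and the generating identity $S(v) = \alpha(W)^{-1}\cdot S(u)\cdot\beta(W)$ in $\pi(H)$ is proved by the same one-step base case you sketch (the paper's Lemma~\ref{lem:fundtri}). Those parts are correct.

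The gap is in the claim that topological validity alone is sufficient for \pop-realizability, and that the classification can simply be imported from Theorem~\ref{thm:T:Wrochna}. In the reflexive setting there is a second, genuinely new obstruction that does not occur in Wrochna's loopless square-free case. Call a closed walk $C$ in $G$ $\alpha$-tight if $\alpha(C)$ is cyclically reduced. The paper's Lemma~\ref{lem:cons} shows that every vertex lying on an $\alpha$-tight closed walk is \emph{frozen}: its color can never change in any \pop-respecting recoloring sequence. Accordingly, Theorem~\ref{thm:consref} characterizes realizable walks as those $Q$ that are (i) topologically valid \emph{and} (ii) satisfy $Q = \alpha(W)^{-1}\beta(W)$ for every vertex $v$ on an $\alpha$-tight walk and every walk $W$ from $v$ to $q$. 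Condition (ii) is strictly restrictive: take $G$ and $H$ both a reflexive $4$-cycle with $\alpha = \beta$ the identity map. Then the $4$-cycle $C$ of $G$ is $\alpha$-tight, so the only realizable walk at any base vertex is $\varepsilon$, yet the topologically valid walks form the infinite family $\{\alpha(C)^n : n \in \mathbb{Z}\}$. Your scheduling construction, handed such a $Q = \alpha(C)^n$ with $n \neq 0$, would deadlock: the precedence relation it induces on $V(G)$ contains a directed cycle along $C$, so there is no topological order and no vertex can be moved first. The paper's proof of Theorem~\ref{thm:consref} builds exactly this precedence digraph and proves it acyclic, and that acyclicity argument hinges on condition (ii). To repair your proposal you would need the notion of $\alpha$-tight closed walks, the freezing lemma, the polynomial-time tight-cycle detector of Lemma~\ref{lem:findtight}, and the corresponding case split in the proof of Theorem~\ref{thm:R:realwalks}.
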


Our proof has the following advantages over the one given in~\cite{Lee:21}. First, it is simpler in the sense that the result follows from an adaptation of the tools introduced by Wrochna in \cite{Wrochna:20} to the reflexive triangle-free case. In particular, we obtain a characterization of all walks that are $H$-realizable for undirected reflexive triangle-free graphs $H$, that is, walks in $H$ that correspond to the color changes of a vertex that are induced by an $H$-recoloring sequence. Based on this characterization we can prove an extension of Theorem~\ref{thm:triangle-free} to directed graphs as follows.

\begin{theorem} 
\label{thm:triangledir}
Let $H$ be a reflexive digraph that does not contain a triangle of algebraic girth $1$. Then \recolun{H} can be solved in polynomial time for reflexive instances.
\end{theorem}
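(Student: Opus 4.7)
The plan is to mirror the strategy used in Section~\ref{sec:monochromatic} for Theorem~\ref{thm:c4free}, with Theorem~\ref{thm:R:realwalks} playing the role of Theorem~\ref{thm:AlWro}. First, I would observe that on a reflexive instance every single-vertex step of a \recolun{H}-sequence automatically satisfies the \pop property. Suppose a step changes the color of $v$ from $a$ to $b$ with $a \neq b$ while some neighbor $u$ retains a color $c \notin \{a,b\}$. Applying the adjacency condition of $\Hom_1(G,H)$ to the loop $v \to v$ of the reflexive graph $G$ forces both $a \to b$ and $b \to a$ in $H$, and applying it to whichever arc of $G$ connects $u$ and $v$ forces the two arcs between $c$ and $a$ and between $c$ and $b$ with matching orientation. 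In either orientation, the subgraph of $H$ on $\{a,b,c\}$ contains a transitive triangle, i.e.\ a triangle of algebraic girth $1$, contradicting the hypothesis on $H$.

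Second, I would invoke Theorem~\ref{thm:R:realwalks} on the undirected reflexive graph $\undir H$ to obtain in polynomial time a classification of the reduced walks from $\alpha(q)$ to $\beta(q)$ in $\undir H$ that are $\undir H$-realizable by a \pop sequence. Every walk realized by an actual \recolun{H}-sequence on the reflexive instance belongs to this family by the previous paragraph, so the problem reduces to deciding which of these walks are orientation-compatible with $H$. The compatibility condition is particularly simple in the \pop setting: a single step $v\colon a \to b$ is possible in $\Hom_1(G,H)$ on a reflexive instance if and only if both arcs $a \to b$ and $b \to a$ lie in $H$. Letting $H_{\mathrm{sym}}$ denote the spanning subgraph of $\undir H$ whose edges are precisely the symmetric arcs of $H$, a \pop walk then corresponds to an $H$-realizable sequence exactly when every one of its edges belongs to $H_{\mathrm{sym}}$.

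Third, I would go through the four possible outcomes of Theorem~\ref{thm:R:realwalks}. The empty case gives a no-instance. A singleton $\{Q\}$ is settled by a linear scan checking that every edge of $Q$ lies in $H_{\mathrm{sym}}$. The case where all reduced walks between $\alpha(q)$ and $\beta(q)$ are $\undir H$-realizable is settled by a breadth-first search for any walk between these endpoints inside $H_{\mathrm{sym}}$. The delicate case is $\{R^n P : n \in \mathbb{Z}\}$: if $R$ contains a non-symmetric edge, then decomposing $R = A R_0 A^{-1}$ with $R_0$ cyclically reduced as in Lemma~\ref{lem:Rcancel}, only an $O(1)$ range of exponents $n$ around the distinguished value produced by Lemma~\ref{lem:Rcancel} can possibly avoid exposing a non-symmetric edge in the reduction of $R^n P$, since further powers of $R$ cannot internally cancel such an edge; each candidate is tested in linear time. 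If $R$ itself lies in $H_{\mathrm{sym}}$, then the question reduces to choosing $n$ so that the residual of $P$ after cancellation with $R^n$ uses only symmetric edges, which is again bounded via Lemma~\ref{lem:Rcancel}. Once an orientation-compatible walk $Q$ has been obtained, the construction lemma accompanying Theorem~\ref{thm:R:realwalks} (the \pop analog of Lemma~\ref{thm:conswro}) returns a \pop-recoloring sequence realizing $Q$; the symmetric-edge condition on $Q$ then guarantees that each single-vertex step is legal in $\Hom_1(G,H)$, so the output is a genuine \recolun{H}-sequence.

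The main obstacle is the careful analysis of the $\{R^n P\}$ outcome: one must argue that a non-symmetric edge appearing inside $R$ cannot be cancelled away by repeating $R$, so only a bounded window of exponents $n$ can yield an orientation-compatible $R^n P$, and one must separately dispose of the purely symmetric subcase where $R$ itself lies in $H_{\mathrm{sym}}$ but $P$ does not.
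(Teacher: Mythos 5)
Your observation that, on a reflexive instance, every single-vertex step of a \recolun{H}-sequence recolors along a symmetric edge of $H$ is sound, and it matches what the paper's move-forward algorithm (Lemma~\ref{lem:moveforward}) does: for the loop on each vertex, conditions 2 and 3 with $u=v$ block a move from $a_i$ to $a_{i+1}$ unless $a_i\rightleftarrows a_{i+1}$. The paper's reduction via the \pop property and Theorem~\ref{thm:R:realwalks} is also what you invoke. The problem lies in how you translate the symmetry constraint into a decision procedure.

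The constraint for $Q$ to be $H$-realizable is \emph{not} simply that the edges of $Q$ lie in $H_{\mathrm{sym}}$. It is that the generated vertex walks $S_v = \alpha(W_v)^{-1}\cdot Q\cdot\beta(W_v)$ (after reduction) lie in $H_{\mathrm{sym}}$ for \emph{every} $v$. Even when $Q$ is symmetric, the conjugating words $\alpha(W_v)$ and $\beta(W_v)$ can contribute non-symmetric edges to $S_v$ that fail to cancel. Concretely, take $H$ on $\{a,b,c,d\}$ with all loops and arcs $a\rightleftarrows b$, $b\rightleftarrows c$, $c\to d$, $a\rightleftarrows d$ (triangle-free, so \pop holds); take $G$ on $\{q,v\}$ with loops and arc $q\to v$; set $\alpha(q)=c$, $\alpha(v)=d$, $\beta(q)=b$, $\beta(v)=c$. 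This is Case~\ref{itm:R:allwalks} of Theorem~\ref{thm:R:realwalks}. The walk $Q=(c\,b)$ lies in $H_{\mathrm{sym}}$, so your BFS in $H_{\mathrm{sym}}$ finds it and answers \yes. But $S_v = \alpha(q\,v)^{-1}\cdot Q\cdot\beta(q\,v)$ reduces to $(d\,c)$, which is non-symmetric, so $Q$ is in fact not $H$-realizable; and a direct check shows $\alpha$ is isolated in $\Hom_1(G,H)$ (neither $q$ nor $v$ has any legal move), so the correct answer is \no. The same issue infects your treatments of Case~\ref{itm:R:onlyQ} (``linear scan checking that every edge of $Q$ lies in $H_{\mathrm{sym}}$'') and Case~\ref{itm:R:walksRnP}.

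The paper closes exactly this gap. In Case~\ref{itm:R:allwalks} it does not do a BFS in $H_{\mathrm{sym}}$; it invokes Lemma~\ref{lem:symwalks}, which characterizes the set of $Q$ that \emph{generate} symmetric vertex walks on a given vertex set (this forces the non-symmetric edges of $\alpha(W_v)$ and $\beta(W_v)$ to match and cancel, a condition strictly stronger than $Q\subseteq H_{\mathrm{sym}}$). In Cases~\ref{itm:R:onlyQ} and~\ref{itm:R:walksRnP} it runs the move-forward algorithm (Lemmas~\ref{lem:moveforward}, \ref{lem:caseRnP}, \ref{lem:R:case3}), which tests the actual step-by-step consistency of all vertex walks, not just $S_q=Q$. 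You would need to replace your linear scan/BFS checks with something that accounts for the conjugation by $\alpha(W_v)$ and $\beta(W_v)$ --- essentially reproducing Lemma~\ref{lem:symwalks} or the move-forward test --- before the argument is complete.
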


As observed in~\cite{Lee:21} (in the setting of undirected graphs), adding loops does not change the reconfiguration graph. By combining this observation with Lemma~\ref{lem:equirecol}, we obtain the following result.

\begin{corollary}
Let $H$ be a reflexive digraph that does not contain a triangle of algebraic girth $1$ or a $4$-cycle of algebraic girth $0$. Then \recolun{H} admits a polynomial-time algorithm.
\end{corollary}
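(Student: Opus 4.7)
The plan is to combine Lemma~\ref{lem:equirecol} with Theorem~\ref{thm:triangledir}, using as a bridge the observation from~\cite{Lee:21} that adding loops to the input graph $G$ does not change the reconfiguration graph $\mathcal{R}_H(G)$ when $H$ is reflexive. Given an instance $(G, \alpha, \beta)$ of \recolun{H} under the hypotheses of the corollary, I would first invoke Lemma~\ref{lem:equirecol}---which applies because $H$ is reflexive and contains no $4$-cycle of algebraic girth $0$---to conclude that deciding the \recolun{H} instance is polynomially equivalent to deciding the \recol{H} instance on the same triple $(G, \alpha, \beta)$.

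Next, I would form the reflexive digraph $G'$ obtained from $G$ by adding a loop at each vertex. Since $H$ is reflexive, every extra arc $vv$ of $G'$ is automatically respected by any map $V(G) \to V(H)$ that is already a homomorphism $G \to H$, so the homomorphism sets $G \to H$ and $G' \to H$ coincide. Moreover, the adjacency relation of $\mathcal{R}_H(\cdot)$ depends only on this homomorphism set together with the vertex set of the source, so $\mathcal{R}_H(G) = \mathcal{R}_H(G')$; in particular the \recol{H} instance $(G', \alpha, \beta)$ has the same answer as $(G, \alpha, \beta)$.

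Since $G'$ is reflexive, a second application of Lemma~\ref{lem:equirecol} then yields that the \recol{H} instance $(G', \alpha, \beta)$ is polynomially equivalent to the \recolun{H} instance $(G', \alpha, \beta)$, which is now a \emph{reflexive} instance in the sense of Theorem~\ref{thm:triangledir}. Because $H$ contains no triangle of algebraic girth~$1$, that theorem decides this reflexive instance in polynomial time, and unwinding the chain of equivalences produces a polynomial-time algorithm for the original question.

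The result is essentially a direct assembly of the two preceding results via the ``add loops'' trick, so no genuinely new difficulty appears; the only point that requires care is the second step, where one must verify explicitly that both the homomorphism set and the edges of the reconfiguration graph are preserved when passing from $G$ to $G'$, and this is where the reflexivity of $H$ is used in an essential way.
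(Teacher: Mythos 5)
Your proof is correct and follows essentially the same route as the paper: add loops to $G$ to obtain a reflexive instance, note that this does not change $\mathcal R_H$, and use Lemma~\ref{lem:equirecol} on both $G$ and $G'$ to transfer the equivalence from $\recol{H}$ to $\recolun{H}$, then apply Theorem~\ref{thm:triangledir}. You spell out the intermediate equivalences more explicitly than the paper does, but the argument and the ingredients are identical.
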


\begin{proof} 
Let $(G,\alpha,\beta)$ be an instance of \recolun{H}. Furthermore, let $G'$ be a copy of $G$ with a loop added to each vertex. Observe that the instances $(G, \alpha, \beta)$ and $(G', \alpha, \beta)$ are equivalent for $\recol{H}$. So, by Lemma~\ref{lem:equirecol}, they are equivalent for \recolun{H} too.
\end{proof}

We will focus on solving instances of \recolun{H}. Since $G$ and $H$ are reflexive, this means we consider reconfiguration sequences such that whenever a vertex of $G$ changes its color from $a$ to $b$, then $b$ is a neighbor of $a$.
So any $H$-recoloring sequence $S$ can be associated with a set $\{ S(u) \}_{u \in V(G)}$ of walks in $H$, where $S(u)$ is the walk in $H$ corresponding to the successive colors of $u$ according to $S$.

\subsection{Undirected reflexive graphs}
\label{sec:unreflexive}

In this section, we give an alternative proof of Theorem~\ref{thm:triangle-free} based on Wrochna's algorithm.
As observed in~\cite{Lee:21}, this theorem implies the following result.

\begin{corollary} 
    \label{cor:girth5}
Let $H$ be a reflexive undirected graph of girth at least $5$. Then \recolun{H} can be solved in polynomial time for all instances\footnote{The reader may notice that Lemma~\ref{lem:equirecol} also holds when $H$ is an undirected, reflexive, square-free graph without isolated vertex. So Corollary~\ref{cor:girth5} implies that \recol{H} too can be solved in polynomial time for all instances under those hypothesis on $H$.}.
\end{corollary}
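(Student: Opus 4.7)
The plan is to reduce the problem on an arbitrary instance to a reflexive one so that Theorem~\ref{thm:triangle-free} applies. Given an instance $(G, \alpha, \beta)$ of \recolun{H}, I would form $G'$ by adding a loop at every vertex of $G$. The key observation is that, because $H$ is reflexive, the added loops are automatically preserved (they land on the corresponding loops of $H$), so $\Hom(G, H) = \Hom(G', H)$. Moreover, the edge relation of $\mathcal{R}_H(\cdot)$ depends only on whether two homomorphisms differ on exactly one vertex of $V(G) = V(G')$ and not on the underlying edge set of $G$; consequently $\mathcal{R}_H(G) = \mathcal{R}_H(G')$ as graphs, and so $(G, \alpha, \beta)$ and $(G', \alpha, \beta)$ are equivalent instances of \recol{H}.

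The next step is to transfer this equivalence from \recol{H} to \recolun{H}. For this I would invoke Lemma~\ref{lem:equirecol}: since $H$ has girth at least $5$ it is in particular square-free, so viewed as a symmetric digraph it contains no $4$-cycle of algebraic girth $0$, and the hypotheses of the lemma are met. The lemma then tells us that \recol{H} and \recolun{H} are equivalent as decision problems on any instance. Applying this equivalence to both $(G, \alpha, \beta)$ and $(G', \alpha, \beta)$, and combining with the identity $\mathcal{R}_H(G) = \mathcal{R}_H(G')$ from the previous paragraph, I conclude that $(G, \alpha, \beta)$ is a \yes-instance of \recolun{H} if and only if $(G', \alpha, \beta)$ is.

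Finally, $G'$ is reflexive and $H$ is reflexive and triangle-free (girth at least $5$ excludes triangles), so Theorem~\ref{thm:triangle-free} decides the reduced instance $(G', \alpha, \beta)$ in polynomial time, which suffices. I do not anticipate a genuine obstacle: each step is either a direct check from the definitions or a black-box application of an earlier result. The only point that merits a moment's scrutiny is the identity $\mathcal{R}_H(G) = \mathcal{R}_H(G')$, which reduces to the observation that reflexivity of $H$ trivializes every loop constraint introduced in passing from $G$ to $G'$, together with the fact that adjacency in $\mathcal{R}_H(\cdot)$ is a property of the pair of maps rather than of the host graph.
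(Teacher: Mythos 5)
Your proof is correct and follows exactly the route the paper takes (stated explicitly for the directed analogue of this corollary just below Theorem~\ref{thm:triangledir}, and attributed to~\cite{Lee:21} for the undirected case): add loops to $G$ to get a reflexive $G'$, observe $\mathcal{R}_H(G)=\mathcal{R}_H(G')$ since $H$ is reflexive and the adjacency in $\mathcal{R}_H(\cdot)$ ignores the edge set of the source graph, transfer the equivalence between \recol{H} and \recolun{H} via Lemma~\ref{lem:equirecol} (applicable since girth $\ge 5$ rules out any $4$-cycle in the symmetric digraph $H$), and finally apply Theorem~\ref{thm:triangle-free} to the reflexive instance $(G',\alpha,\beta)$.
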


For now we assume that $G$ and $H$ are undirected. It turns out that the proofs given by Wrochna in~\cite{Wrochna:20} only need to be changed slightly in order to show Theorem~\ref{thm:triangle-free}.

We introduce the \pop property, which can be thought of as an analogue of the monochromatic neighborhood property for triangle-free graphs.

\begin{definition}
    Let $\alpha, \beta : G \to H$ and let $S$ be a $H$-recoloring sequence from $\alpha$ to $\beta$. Then $S$ has the \emph{\pop} property if whenever a vertex $u \in V(G)$ changes its color from $a$ to $b$ then any neighbor $v \in V(G)$ of $u$ has color $a$ or $b$.
\end{definition}

Notice that if $H$ is triangle-free then any $H$-recoloring sequence has the \pop property: if a neighbor $v$ of $u$ has a color different from $a$ and $b$ then $H$ contains a triangle.
So in this case, for any $H$-recoloring sequence, we have that the color change of some vertex $u$ must be, given any neighbor of $u$, either a ``push'' from the color of that neighbor or a ``pull'' to the color of that neighbor. 

Let $\alpha, \beta : G \to H$ and let $S$ be an $H$-recoloring sequence from $\alpha$ to $\beta$. For each vertex $v \in V(G)$, let $S(v)$ be the vertex walk in $H$ associated to $v$. Given a vertex $q$ and a walk $Q$ from $\alpha(q)$ to $\beta(q)$. We say that $Q$ is \emph{$H$-realizable} for $\alpha,\beta,q$ if there is an $H$-recoloring sequence $S$ from $\alpha$ to $\beta$ such that $Q = S(q)$ and $S$ satisfies the \pop property.
By the next lemma we have that for any vertex $v \in V(G)$ the corresponding walk $S(v)$ generates the walk $S(u)$ of any other vertex by conjugation. Notice that the same holds for $H$-recoloring sequences that satisfy the monochromatic neighborhood property (\cite[Lemma 4.1]{Wrochna:20}, Lemma~\ref{lem:gen}).
Also observe that the next lemma implies in particular that $H$-realizable walks are topologically valid.

\begin{lemma} 
    \label{lem:fundtri} 
    Let $S$ be a $H$-recoloring sequence from $\alpha$ to $\beta$ satisfying the \pop property. Then for any $u,v \in V(G)$ and any $u$-$v$ walk $W$, we have $S(v) = \alpha(W)^{-1} S(u) \beta(W)$ in $\pi(H)$.
\end{lemma}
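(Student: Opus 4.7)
The plan is to prove the lemma by a double induction: first on the length $\ell$ of the $H$-recoloring sequence $S$, and then, in the base case, on the length of the walk $W$. This is the natural analogue of Wrochna's proof of Lemma~\ref{lem:gen}, except that the \pop property replaces the monochromatic neighborhood property; the main twist is that $G$ and $H$ are reflexive, so walks may include loop edges $(x,x)$ which reduce to $\varepsilon$ in $\pi(H)$, and this needs to be tracked carefully.

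For the outer base case $\ell = 0$, we have $\alpha = \beta$ and $S(u) = S(v) = \varepsilon$, and the identity is immediate from $\alpha(W)^{-1}\alpha(W) = \varepsilon$ in $\pi(H)$. For $\ell = 1$, let $w$ be the vertex recolored from $a = \alpha(w)$ to $b = \beta(w)$, so $S(w) = (a\,b)$ and $S(x) = \varepsilon$ for $x \neq w$. The \pop property gives that every neighbor $z$ of $w$ in $G$ satisfies $\alpha(z) = \beta(z) \in \{a,b\}$. I now induct on $|W|$: if $|W| = 0$, then $u = v$ and both sides equal $S(u)$; if $|W| = 1$ so that $W = (u\,v)$ is a single (possibly looped) edge of $G$, I verify the four subcases directly. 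When neither of $u,v$ equals $w$, both $S(u), S(v)$ are empty and $\alpha(W) = \beta(W)$. When $u = w$ and $v \neq w$, setting $c := \alpha(v) = \beta(v) \in \{a,b\}$, the right-hand side is $(a\,c)^{-1}(a\,b)(b\,c)$, which reduces to $\varepsilon = S(v)$ in both subcases $c=a$ and $c=b$. The case $u \neq w, v = w$ is symmetric, and when $u = v = w$ the walk $W = (a\,a)/(b\,b)$ is a loop and both sides reduce to $(a\,b)$. Finally, for $|W| \geq 2$, split $W = W_1 W_2$ with $W_1$ ending at some vertex $z$ and apply the inner inductive hypothesis on the two shorter walks.

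For the outer inductive step $\ell \geq 2$, split $S = S_1 \cdot S_2$ with $S_1$ of length $\ell-1$ and $S_2$ a single step, and let $\gamma$ be the intermediate homomorphism. Since $S(v) = S_1(v) \cdot S_2(v)$ (and similarly for $u$), the inductive hypothesis applied to $(S_1,\alpha,\gamma)$ and to $(S_2,\gamma,\beta)$ yields
\[
S(v) \;=\; \alpha(W)^{-1} S_1(u)\, \gamma(W) \cdot \gamma(W)^{-1} S_2(u)\, \beta(W) \;=\; \alpha(W)^{-1} S(u)\, \beta(W)
\]
in $\pi(H)$, as the middle pair $\gamma(W)\gamma(W)^{-1}$ cancels.

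I expect the only delicate point to be the case analysis in the inner base case ($\ell=1$, $|W|=1$) when $w$ is an endpoint of $W$: one needs to invoke the \pop property at exactly the right moment to force the color of the other endpoint into $\{a,b\}$, and then observe that each of the two resulting subcases reduces the length-three walk $(a\,c)^{-1}(a\,b)(b\,c)$ to the empty walk. Once this base case is settled, the induction on $|W|$ and then on $\ell$ proceeds mechanically exactly as in \cite{Wrochna:20}.
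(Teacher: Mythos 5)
Your proposal is correct and follows essentially the same double-induction structure (first on the length $\ell$ of $S$, then on $|W|$) as the paper's own proof of Lemma~\ref{lem:fundtri}, including the same four-way case analysis in the inner base case. The only substantive differences are cosmetic improvements in rigor: you explicitly handle the outer base case $\ell=0$ and the length-one loop case $u=v=w$, both of which the paper's proof glosses over by writing ``$W = u \to v$'' and implicitly assuming $u\neq v$ when $|W|=1$.
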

\begin{proof}
	We use induction on the length $\ell$ of $S = \sigma_0 \sigma_1 \ldots \sigma_l$. Let $\ell = 1$ and suppose $\sigma_0 \neq \sigma_1$, so a vertex $w \in V(G)$ is recolored
	from $\sigma_0(w) = a$ to $\sigma_1(w) = b$. By definition, we have $S(w) = (ab)$ and $S(v) = \varepsilon$ for $v \in V(G) \setminus \{w\}$. If $W = \varepsilon$
	then $\sigma_0(W) = \sigma_1(W) = \varepsilon$ and $S(u) = S(v)$ since
	$u = v$, so we are done. If $W$ has length one then, without loss of generality, $W = u \to v$. We consider three cases:
	\begin{itemize}
		\item $u \neq w$ and $v \neq w$. Then $S(u) = S(v) = \varepsilon$ and $\sigma_0(W) = \sigma_1(W)$.
		\item $u \neq w$ and $v = w$. Then $S(u) = \varepsilon$, $S(v)
			= (ab)$. Since $S$ satisfies the \pop property, all neighbors of $v$, including $u$, must have color either $a$ or $b$.
			\begin{itemize}
    			\item If $u$ has color $a$, then $\alpha(W) = (aa) = \varepsilon$ and $\beta(W) = (ab)$ so $\alpha(W)^{-1} S(u)\beta(W) = (ab) = S(v)$ in $\pi(H)$. 
    			\item If $u$ has color $b$, then $\alpha(W) = (ba)$ and $\beta(W) = (bb) = \varepsilon$ so $\alpha(W)^{-1} S(u) \beta(W) = (ab) = S(v)$ in $\pi(H)$.
			\end{itemize}
		\item The case when $u = w$ and $v \neq w$ is symmetric. 
	\end{itemize}
	In each case we have $S(v) = \sigma_0(W)^{-1} S(u) \sigma_1(W)$. If
	$W$ has length at least two then we may split $W$ inductively into $W =
	W_1W_2$ such that $W_2$ is of length one, so $W_1$ is a walk from $u$
	to $z$ and $W_2$ is a walk from $z$ to $v$. Then we have
	\begin{align*}
		\sigma_1(W)	&= \sigma_1(W_1)\sigma_1(W_2)  \\
				&= \sigma_1(W_1)S(z)^{-1}\sigma_0(W_2)S(v) \\
				&= S(u)^{-1} \sigma_0(W) S(v) \enspace.
	\end{align*}
	If the sequence $S$ has length more than one we use the same idea and
	split $S$ inductively into $S = S_1S_2$ such that $S_2$ has length one. Then for each $v \in V(G)$ we have $S(v) = S_1(v)S_2(v)$ and
	\begin{align*}
		S(v)	&= S_1(v)S_2(v)  \\
			&= S_1(v)\sigma_{\ell-1}(W)^{-1}S_2(u)\sigma_\ell(W) \\
			&= \sigma_0(W)^{-1}S(u)\sigma_\ell(W) \enspace,
	\end{align*}
	which concludes the proof.
\end{proof}

We will obtain a characterization of $H$-realizable walks (Section~\ref{sec:R:realizable}) that is based on an algorithm that finds vertices of $G$ whose color cannot change (Section~\ref{sec:R:tightclosed}). From this we end up with the following result which immediately implies Theorem~\ref{thm:triangle-free}.

\begin{theorem} \label{thm:R:realwalks} Let $G$ and $H$ be reflexive undirected graphs and let $\alpha, \beta \colon G \to H$ and $q \in V(G)$. Let $\undir{\Pi}$ be the set of all walks that are $H$-realizable for $\alpha, \beta, q$ (in particular, the corresponding $H$-recoloring sequences satisfy the \pop property). Then one of the following holds:
	\begin{enumerate}
		\item $\undir{\Pi} = \emptyset$. \label{itm:R:nowalk}
		\item $\undir{\Pi} = \lbrace Q \rbrace$ for some $Q \in \pi(H)$. \label{itm:R:onlyQ}
		\item $\undir{\Pi}= \lbrace R^n P | n \in \mathbb{Z} \rbrace$, for some $R,P \in \pi(H)$. \label{itm:R:walksRnP}
		\item $\undir{\Pi}$ contains all reduced walks from $\alpha(q)$ to $\beta(q)$. \label{itm:R:allwalks}
	\end{enumerate}
    Furthermore, we can determine in time $O(|V(G)|\cdot|E(G)| + |E(H)|)$ which case holds and output $Q$ or $R,P$ in cases~\ref{itm:R:onlyQ} and ~\ref{itm:R:walksRnP} such that $|Q|,|R|,|P|$ are bounded by the total running time $O(|V(G)|\cdot|E(G)| + |E(H)|)$. Case~\ref{itm:R:allwalks} happens when $\alpha(C) = \beta(C) = \varepsilon$ in $\pi(H)$ for all closed walks $C$ in $G$.
\end{theorem}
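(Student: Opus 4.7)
My plan is to prove Theorem~\ref{thm:R:realwalks} by mirroring Wrochna's proof of Theorem~\ref{thm:T:Wrochna}, with the \pop property playing the structural role that the monochromatic neighborhood property plays in the square-free loopless case. The starting point is Lemma~\ref{lem:fundtri}, which gives exactly the fundamental-groupoid identity $S(v) = \alpha(W)^{-1} S(u) \beta(W)$ that drives the classification. Applied to a closed walk $C$ based at $q$, Lemma~\ref{lem:fundtri} gives $\beta(C) = S(q)^{-1} \alpha(C) S(q)$ in $\pi(H)$, so every $H$-realizable walk at $q$ is topologically valid. Hence $\undir{\Pi}$ is contained in the set $\Pi$ of topologically valid walks, which by Theorem~\ref{thm:T:Wrochna} already has the four-case shape claimed for $\undir{\Pi}$.

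The main work is therefore the converse: I must identify exactly which topologically valid walks are actually $H$-realizable under the \pop property. This is the step I expect to be the main obstacle, and I plan to handle it with a construction result analogous to Lemma~\ref{thm:conswro}. Given a topologically valid walk $Q$ from $\alpha(q)$ to $\beta(q)$, define for each $v \in V(G)$ its candidate walk $S(v) := \alpha(W_v)^{-1} \cdot Q \cdot \beta(W_v)$; by the topological validity of $Q$ and the analogue of Lemma~\ref{lem:equigen}, this does not depend on $W_v$. I will then build the sequence by induction on $|Q|$: when the prefix processed so far ends with a candidate step of $q$ from color $a$ to an adjacent color $b$, I first move every neighbor $u$ of $q$ whose current color is neither $a$ nor $b$ by a recursive \pop adjustment — which is possible provided $u$ is not ``pinned'' by its own neighbors — and then recolor $q$. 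The triangle-free reflexive setting guarantees that the local step is well-defined whenever no pinning obstruction appears, because \pop forces neighbors to lie in the two-element set $\{a,b\}$ and never more.

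The pinning obstructions will be captured by the ``tight closed'' subsets developed in Section~\ref{sec:R:tightclosed}: a BFS-style propagation identifies a subset $T \subseteq V(G)$ of vertices whose walks $S(v)$ must be trivial in every \pop reconfiguration, which then filters $\Pi$ to $\undir{\Pi}$. The classification of Theorem~\ref{thm:R:realwalks} follows by reading off the intersection of each case of Theorem~\ref{thm:T:Wrochna} with this tight-closed constraint: in Case~\ref{itm:T:nowalk} or when $T$ rules everything out, we fall in Case~\ref{itm:R:nowalk}; the singleton Case~\ref{itm:T:onlyQ} passes through to Case~\ref{itm:R:onlyQ} or collapses to Case~\ref{itm:R:nowalk}; Case~\ref{itm:T:walksRnP} either survives intact as Case~\ref{itm:R:walksRnP} or collapses to a singleton or to $\emptyset$ depending on whether $R$ and $P$ conjugate through the tight vertices trivially; and Case~\ref{itm:T:allwalks}, where $\alpha(C) = \beta(C) = \varepsilon$ for every closed $C$, forces $T$ to impose no constraint, giving Case~\ref{itm:R:allwalks}.

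For the running time, Theorem~\ref{thm:T:Wrochna} already runs in $O(|V(G)|\cdot|E(G)| + |E(H)|)$ and outputs $Q$, $R$, $P$ of length bounded by the same quantity. The tight-closed set is computed in $O(|V(G)|\cdot|E(G)|)$ by propagation, and checking compatibility of the candidate walk(s) with $T$ costs at most $O(|V(G)|\cdot|E(G)| + |E(H)|)$ by traversing $S(v)$ for each tight $v$. Adding these gives the claimed bound. The length bounds on $Q,R,P$ are inherited directly from Theorem~\ref{thm:T:Wrochna} since the tight-closed filter either preserves or trivializes the walk outputs rather than enlarging them.
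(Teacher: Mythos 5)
Your overall strategy is essentially the paper's: use Lemma~\ref{lem:fundtri} to get topological validity, treat tight closed walks as the pinning obstruction, and identify which topologically valid walks survive. But the crucial middle step is not established, and the way you plan to close the argument is not as sharp as what the paper does.

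The key missing ingredient is a precise characterization of the form: \emph{$Q$ is $H$-realizable if and only if $Q$ is topologically valid and, for every $\alpha$-tight closed walk and every vertex $v$ on it, $Q = \alpha(W)^{-1}\beta(W)$ for any $q$--$v$ walk $W$} (this is the paper's Theorem~\ref{thm:consref}). Your recursive ``\pop adjustment'' sketch does not establish this. In particular, the step ``move every neighbor $u$ of $q$ whose current color is neither $a$ nor $b$ by a recursive \pop adjustment --- which is possible provided $u$ is not pinned by its own neighbors'' is exactly where the substance lives, and as written it could loop or deadlock: you have no argument for why the recursive adjustments terminate or why the only obstruction is a tight walk. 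The paper avoids this by a global argument: it derives from the candidate walks $S_v$ a precedence digraph $A$ on $V(G)$, proves $A$ is acyclic (any cycle would produce an $\alpha$-tight walk whose vertices must have trivial $S_v$, a contradiction), and then schedules all moves along a topological order of $A$. That acyclicity proof is the heart of the construction and is nowhere in your outline.

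The second gap concerns how you close the classification. You propose to intersect each of the four cases of Theorem~\ref{thm:T:Wrochna} with the tight-closed filter, and then argue that Case~\ref{itm:T:walksRnP} ``either survives intact \dots or collapses to a singleton or to $\emptyset$ depending on whether $R$ and $P$ conjugate through the tight vertices trivially.'' That last clause is vague and, as stated, unproven. The paper sidesteps all of this with a cleaner dichotomy that you do not articulate: if $G$ has an $\alpha$-tight closed walk, then by Theorem~\ref{thm:consref} there is \emph{at most one} candidate walk ($Q=\alpha(W)^{-1}\beta(W)$), so only Cases~\ref{itm:R:nowalk} or~\ref{itm:R:onlyQ} can occur, and if $G$ has no $\alpha$-tight closed walk then the tight-walk condition is vacuous, $\undir{\Pi}$ equals the full set of topologically valid walks, and Theorem~\ref{thm:T:Wrochna} gives the classification directly. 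You will want to prove and use this dichotomy rather than intersecting case by case, because without it you cannot, for instance, rule out an intersection of the $R^nP$ family with the tight constraint producing a set that is neither a coset nor a singleton nor empty.

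Finally, on the running-time bound you inherit from Theorem~\ref{thm:T:Wrochna}: note that the candidate walk $Q=\alpha(W)^{-1}\beta(W)$ produced in the tight-walk branch has $|Q|\le 2|V(G)|$, so checking $H$-realizability via the construction algorithm is indeed within budget; but you should say that explicitly rather than asserting generically that the filter ``preserves or trivializes the walk outputs.''
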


\subsubsection{Tight closed walks}
\label{sec:R:tightclosed}

In this section we show how to identify in polynomial time vertices of $G$ whose color cannot change in any $H$-recoloring sequence that satisfies the \pop property.
Let $\alpha, \beta : G \to H$ and let 
$S$ be an $H$-recoloring sequence from $\alpha$ to $\beta$. We say that a vertex $q \in V(G)$ is \emph{frozen} for $S$ if $S(q)$ is the empty walk $\varepsilon$ (in other words, the color of $q$ is never changed by $S$). Recall that a closed walk $C = (v_1\, v_2)\cdots(v_n\,v_1)$ is \emph{cyclically reduced} if $C$ is reduced and additionally $v_n \neq v_2$. We say that a closed walk $C$ in $G$ is $\alpha$-tight if $\alpha(C)$ is cyclically reduced.

\begin{lemma} 
    \label{lem:cons}
    Let $C$ be an $\alpha$-tight closed walk in $G$. Then the vertices of $C$ are frozen for any recoloring sequence $S$ that satisfies the \pop property.
\end{lemma}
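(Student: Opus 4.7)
The plan is to argue by contradiction: assume $S = \sigma_0, \ldots, \sigma_\ell$ is an $H$-recoloring sequence with $\sigma_0 = \alpha$ satisfying the \pop property, in which some vertex of $C$ is \emph{not} frozen. The central observation is that it suffices to focus on the \emph{first} step $i$ at which a vertex of $C$ is recolored; up to that moment every vertex of $C$ still carries its $\alpha$-color. Writing $u$ for the recolored vertex, $a = \sigma_i(u) = \alpha(u)$, and $b = \sigma_{i+1}(u)$, the \pop property then forces every neighbor of $u$ in $G$ to have color in $\{a, b\}$ at time $i$. In particular, for any occurrence of $u$ in $C$, the two cyclic neighbors of $u$ along $C$ are still $\alpha$-colored, and both of their $\alpha$-colors lie in $\{a, b\}$.

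The contradiction then comes directly from the cyclic reducedness of $\alpha(C)$. Fixing one occurrence $v_j = u$ and writing $v_{j-1}, v_{j+1}$ for its two neighbors in $C$ (read cyclically when $j$ is at the endpoint), I would argue in two short steps: $\alpha(v_{j-1}) = a$ would place two consecutive equal vertices in $\alpha(C)$, producing a loop edge and violating reducedness, hence $\alpha(v_{j-1}) = b$; symmetrically $\alpha(v_{j+1}) = b$. But then $\alpha(v_{j-1}) = \alpha(v_{j+1})$, which is a backtrack in $\alpha(C)$ at position $j$. This contradicts reducedness when $j$ is an interior position, and it contradicts exactly the additional cyclic-reducedness condition $v_2 \neq v_{n-1}$ when $v_j$ sits at the base point of $C$.

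The only delicate point, and the reason the definition of $\alpha$-tight asks for \emph{cyclic} reducedness rather than plain reducedness, is the case where the recolored vertex $u$ lies at the base point of $C$; everything else is a one-line check combining the \pop property with the reducedness of $\alpha(C)$. Notably, no appeal to Lemma~\ref{lem:fundtri} or any heavier machinery is needed: the argument is purely local to the first step at which the freezing would fail, so a careful choice of \emph{when} to look inside $S$ is what carries the proof.
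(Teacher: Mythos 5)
Your proof is correct and takes essentially the same approach as the paper: both look at the first vertex of $C$ that changes color, invoke the \pop property to force its two $C$-neighbors' $\alpha$-colors into $\{a,b\}$, and then obtain a contradiction with the (cyclic) reducedness of $\alpha(C)$ by splitting into the loop case and the backtrack case. Your explicit remark that the base-point occurrence is precisely where cyclic reducedness (rather than mere reducedness) is needed is a small clarification the paper leaves implicit, but it is not a different argument.
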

\begin{proof} 
    By contraposition, let $v$ be the first vertex of $C$ that changes its color during some step of a $H$-recoloring sequence $S$ satisfying the \pop property. We will show that $C$ is not $\alpha$-tight. Furthermore, assume that the color of $v$ changes from $a$ to $b$ in this step. Let $u,w$ be the two neighbors of $v$ on $C$. Since $v$ is the first vertex of $C$ that changes color, the colors of $u$ and $w$ must be $\alpha(u)$ and $\alpha(w)$, respectively, during this step and $\alpha(u), \alpha(w) \in \{a, b\}$ by the \pop property. If either of $u$ and $w$ has color $a = \alpha(v)$ then there is an edge $(a\,a)$ in $\alpha(C)$, so $\alpha(C)$ is not cyclically reduced. On the other hand, if both $u$ and $w$ have color $b$ then we have consecutive edges $(b\,a)(a\,b)$ in $\alpha(C)$, which again implies that $\alpha(C)$ is not cyclically reduced. In any of the two cases $C$ is not $\alpha$-tight.
\end{proof}

We even can adapt Wrochna's approach for finding $\alpha$-tight walks to our case: We create a digraph $D$ on at most $2|E|$ vertices, such that there is a one-to-one correspondence between cycles of $D$ and $\alpha$-tight cycles of $G$. It then suffices to apply breadth-first-search to detect an $\alpha$-tight cycle of $G$.

\begin{lemma} 
    \label{lem:findtight} 
    There is an algorithm that, given $G$, $H$, and $\alpha : G \to H$, finds in time $O(|V(G)|\cdot |E(G)|)$ an $\alpha$-tight walk or concludes correctly that there is none.
\end{lemma}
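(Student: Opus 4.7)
The plan is to construct an auxiliary directed graph $D$ whose directed cycles are in one-to-one correspondence with $\alpha$-tight closed walks of $G$, and then to run a standard linear-time directed-cycle detection on $D$. The vertices of $D$ are the ordered pairs $(u,v)$ with $uv \in E(G)$ and $\alpha(u) \neq \alpha(v)$, so $|V(D)| \leq 2|E(G)|$. We place an arc from $(u,v)$ to $(v,w)$ whenever $vw \in E(G)$ and $\alpha(u) \neq \alpha(w)$; note that $\alpha(v) \neq \alpha(w)$ is already guaranteed by the fact that $(v,w) \in V(D)$. Intuitively, a vertex $(u,v)$ of $D$ records the traversal of edge $uv$ in direction $u \to v$, and an arc $(u,v) \to (v,w)$ records a turn at $v$ whose image under $\alpha$ neither collapses to a loop nor backtracks.

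The crux of the argument is the correspondence between directed cycles of $D$ and $\alpha$-tight closed walks of $G$. A directed cycle $(v_1, v_2) \to (v_2, v_3) \to \cdots \to (v_n, v_1) \to (v_1, v_2)$ of $D$ yields a closed walk $C = v_1 v_2 \cdots v_n v_1$ of $G$ whose image $\alpha(C)$ is cyclically reduced: the vertex conditions force $\alpha(v_i) \neq \alpha(v_{i+1})$ for every $i$, ruling out loop-edges in $\alpha(C)$, and each arc condition forces $\alpha(v_{i-1}) \neq \alpha(v_{i+1})$ cyclically, ruling out backtracking at every vertex. Crucially, the wrap-around arc $(v_n, v_1) \to (v_1, v_2)$ supplies the extra condition $\alpha(v_n) \neq \alpha(v_2)$, which is precisely what distinguishes cyclically reduced from merely reduced. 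Conversely, any $\alpha$-tight closed walk of $G$ lifts canonically to a directed cycle of $D$ by reading off its consecutive edge pairs, since each pair verifies the defining conditions of $V(D)$ and each three-vertex subwalk verifies the defining conditions of $A(D)$.

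Once the correspondence is established, the algorithm is almost a black-box application: build $D$ and run a standard directed-cycle detection, for instance Tarjan's strongly connected components algorithm. Observe that $D$ has no self-loops, since an arc $(u,v) \to (u,v)$ would require $\alpha(u) \neq \alpha(u)$, so any non-trivial SCC witnesses a directed cycle, which can then be extracted and translated back into an $\alpha$-tight closed walk of $G$; if no non-trivial SCC exists, we report that no $\alpha$-tight walk exists. For the runtime, $|V(D)| \leq 2|E(G)|$ and $|E(D)| \leq \sum_{v \in V(G)} \deg(v)^2 \leq 2|E(G)| \cdot |V(G)|$, so both the construction of $D$ and the SCC computation run in $O(|V(G)| \cdot |E(G)|)$ time.

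The main obstacle, modest as it is, is the careful bookkeeping at the wrap-around index: one must insist that the directed cycle of $D$ close on the ordered pair $(v_1, v_2)$ rather than merely on the vertex $v_1$ of $G$, since only the former captures the cyclic-reducedness condition $\alpha(v_n) \neq \alpha(v_2)$. This is automatic in the proposed construction because the notion of a directed cycle in $D$ already enforces closure at the level of ordered pairs.
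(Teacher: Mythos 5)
Your proposal is correct and follows essentially the same approach as the paper: both construct the auxiliary digraph $D$ on oriented edges $(u,v)$ with $\alpha(u)\neq\alpha(v)$, place arcs between consecutive oriented edges whenever the image under $\alpha$ neither collapses nor backtracks (your condition $\alpha(u)\neq\alpha(w)$ is equivalent to the paper's $\alpha(\{u,v\})\neq\alpha(\{v,w\})$), identify directed cycles of $D$ with $\alpha$-tight closed walks of $G$, and bound $|V(D)|\leq 2|E(G)|$ and $|E(D)|=O(|V(G)|\cdot|E(G)|)$. The only cosmetic difference is that you invoke Tarjan's SCC algorithm for cycle detection where the paper mentions breadth-first search; your version is arguably the cleaner choice, and your remarks on the wrap-around arc and the absence of self-loops make the correspondence argument more explicit than the paper's.
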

\begin{proof} 
    Let $D$ be the digraph whose vertex set is the set of oriented edges $u \rightarrow v$ of $G$ such that $\alpha(u) \neq \alpha(v)$. Create an arc from $u \rightarrow v$ to $u' \rightarrow v'$ in $D$ if $v = u'$ and $\alpha(\{u, v\}) \neq \alpha(\{u', v'\})$.\\

    Then any directed cycle in $D$ corresponds to an $\alpha$-tight cycle in $G$ and vice versa. The digraph $D$ has at most $2 |E(G)|$ vertices and at most $\displaystyle \sum_{v \in V(G)} 2 \binom{\deg(v)}{2} \leq |V(G)| \sum_{v \in V(G)} \text{deg}(v) = O(|V(G)| \cdot |E(G)|)$ arcs. So we can find a directed cycle in $D$ in time $O(|V(G)| \cdot |E(G)|)$ by breadth-first-search.
\end{proof}

\subsubsection{Characterization of realizable walks}
\label{sec:R:realizable}

Analogously to Theorem~\ref{thm:conswro} from~\cite{Wrochna:20}, we obtain the following characterization of realizable walks and a polynomial-time algorithm that constructs from such a walk a corresponding $H$-recoloring sequence.

\begin{theorem} 
    \label{thm:consref}
    Let $\alpha,\beta$ be two $H$-colorings of $G$. Furthermore, let $q \in V(G)$ be any vertex and let $Q$ be a reduced walk from $\alpha(q)$ to $\beta(q)$. Then $Q$ is realizable for $\alpha,\beta,q$ if and only if
    \begin{enumerate}
        \item $Q$ is topologically valid for $\alpha,\beta,q$. \label{itm:topval}
        \item for every $\alpha$-tight closed walk in $G$ and any vertex $v$ on this walk, for any walk $W$ from $v$ to $q$, $Q = \alpha(W)^{-1} \beta(W)$ in $\pi(H)$. \label{itm:tightwalk}
    \end{enumerate}
    Furthermore, there is an algorithm that, given a reduced walk $Q$, constructs a $H$-recoloring sequence $S$ from $\alpha$ to $\beta$ or certifies that $Q$ cannot satisfy one of the previous conditions. This algorithm runs in time $O(|V(G)|^2 + |V(G)| \cdot |Q|)$. The $H$-recoloring sequence $S$ is such that $S(q) = Q$ and for each $v \in V(G)$ the vertex walk $S(v)$ is reduced.
\end{theorem}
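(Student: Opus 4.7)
The plan is to prove the two implications separately, with the backward direction also supplying the constructive algorithm. For necessity, let $S = \sigma_0, \ldots, \sigma_\ell$ be an $H$-recoloring sequence from $\alpha$ to $\beta$ satisfying the \pop property with $S(q) = Q$. Applied to any closed walk $C$ from $q$ to $q$, Lemma~\ref{lem:fundtri} gives $Q = \alpha(C)^{-1} \cdot Q \cdot \beta(C)$, which rearranges to $\beta(C) = Q^{-1} \cdot \alpha(C) \cdot Q$ in $\pi(H)$; this is precisely topological validity~\ref{itm:topval}. For condition~\ref{itm:tightwalk}, let $C$ be an $\alpha$-tight closed walk and $v$ a vertex on $C$: Lemma~\ref{lem:cons} forces $S(v) = \varepsilon$, and Lemma~\ref{lem:fundtri} applied to any walk $W$ from $v$ to $q$ then yields $Q = \alpha(W)^{-1} \cdot \beta(W)$ in $\pi(H)$.

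For sufficiency, assume $Q$ satisfies both conditions. Use BFS to compute a shortest walk $W_v$ from $q$ to each $v \in V(G)$ in total time $O(|V(G)|^2)$, and set $S(v) := \alpha(W_v)^{-1} \cdot Q \cdot \beta(W_v)$ reduced in $\pi(H)$. By topological validity of $Q$, Lemma~\ref{lem:equigen} ensures that the reduced walk $S(v)$ does not depend on the choice of $W_v$. The algorithm maintains a pointer $p_v$ into each $S(v)$, initialized to $0$, with current color $\sigma(v) := S(v)[p_v]$; it calls $v$ \emph{ready} if $p_v < |S(v)|$ and every neighbor $u$ of $v$ satisfies $\sigma(u) \in \{S(v)[p_v], S(v)[p_v+1]\}$. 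Incrementing $p_v$ for a ready vertex performs a valid single-vertex recoloring step that is an $H$-homomorphism and satisfies the \pop property. Repeat until every $p_v = |S(v)|$; the recorded sequence is the desired output, with each vertex walk reduced by construction.

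The main obstacle is ruling out a premature stall: we must show that if no vertex is ready then every $p_v$ already equals $|S(v)|$. Suppose for contradiction that $U := \{v : p_v < |S(v)|\}$ is nonempty and no $v \in U$ is ready. For each $v \in U$ pick a blocking neighbor $u(v)$ with $\sigma(u(v)) \notin \{S(v)[p_v], S(v)[p_v+1]\}$; iterating $u$ inside the finite graph $G$ produces a cycle $C = v_0 v_1 \cdots v_{k-1} v_0$ with $v_{i+1} = u(v_i)$. The key step is to show, using the blocking conditions at the $v_i$, that the image of $C$ under the current coloring $\sigma$ is cyclically reduced in $H$, and then to transfer this to an $\alpha$-tight closed walk in $G$ (using the walks $W_{v_i}$ and topological validity to move the tightness from $\sigma$ back to $\alpha$). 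Condition~\ref{itm:tightwalk} would then force $Q = \alpha(W)^{-1} \cdot \beta(W)$ in $\pi(H)$ for any walk $W$ from $v_0$ to $q$, giving $S(v_0) = \varepsilon$ and contradicting $v_0 \in U$. For the running time, the BFS preprocessing costs $O(|V(G)|^2)$; each pointer advancement takes $O(\deg(v))$ work to refresh the readiness flags of $v$ and its neighbors, so summing over the $\sum_v |S(v)| = O(|V(G)|^2 + |V(G)| \cdot |Q|)$ advancements yields the claimed total runtime.
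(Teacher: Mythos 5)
Your necessity argument is exactly the paper's: Lemma~\ref{lem:fundtri} applied to closed walks at $q$ gives topological validity, and Lemmas~\ref{lem:cons} and~\ref{lem:fundtri} together give condition~\ref{itm:tightwalk}. That part is fine.

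The sufficiency direction is where you diverge from the paper, and there is a genuine gap. The paper constructs a \emph{static} arc-set $A$ on $V(G)$ using only the first two symbols of each $S_u, S_v$, i.e.\ using only $\alpha$, and proves that any cycle in $(V(G),A)$ is directly an $\alpha$-tight closed walk (the argument $\alpha(v_{i+2}) = S^2_{v_i} \neq S^0_{v_i} = \alpha(v_i)$ is checked arc by arc); condition~\ref{itm:tightwalk} then forces those $S_{v_i}$ to be empty, a contradiction. Because this DAG is built from the initial coloring $\alpha$, no ``transfer'' between colorings is needed. Your greedy scheme instead detects a stall at an arbitrary intermediate coloring $\sigma$, and the obstruction cycle you extract is naturally $\sigma$-tight rather than $\alpha$-tight. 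Condition~\ref{itm:tightwalk} only speaks about $\alpha$-tight closed walks, so it does not directly apply; the sentence ``transfer this to an $\alpha$-tight closed walk \ldots using the walks $W_{v_i}$ and topological validity to move the tightness from $\sigma$ back to $\alpha$'' is the entire crux of the argument, and it is neither proven nor obviously true — a $\sigma$-tight cycle need not be $\alpha$-tight. To fix this one would have to re-derive condition~\ref{itm:tightwalk} for the residual problem $(\sigma,\beta)$ with residual walks $S'(v)$ and $Q' = S'(q)$, which requires extra work you have not done.

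There is also a smaller unstated assumption: you iterate $v \mapsto u(v)$ to find a cycle inside $U$, but you never argue that the chosen blocking neighbor $u(v)$ itself lies in $U$. A priori $u(v)$ could already have $p_{u(v)} = |S(u(v))|$ and color $\beta(u(v))$, in which case the iteration leaves $U$ and produces no cycle. This needs to be ruled out (or handled) before the cycle argument can start. In the paper's formulation this problem does not arise because the arc-set $A$ is only defined between adjacent vertices whose vertex walks are both non-empty, and a stalled vertex whose blocker has an empty walk is shown separately to be impossible via~\eqref{eq:uv}.

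In short: the overall shape of your sufficiency argument (construct $S(v)$ by Lemma~\ref{lem:equigen}, advance colors greedily, show no stall) matches the intent of the paper and of Wrochna's Theorem~6.1, but the no-stall proof is missing its key step, and the paper's device of building the ordering constraint graph from $\alpha$ alone is precisely what makes that step unnecessary.
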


\begin{proof} 
    By Lemma~\ref{lem:fundtri} we have that if $Q$ is realizable then~\ref{itm:topval} holds.
    The other statement follows from lemmas~\ref{lem:fundtri} and~\ref{lem:cons}. 
    
    It remains to prove the ``only if''direction.
    Let $Q$ be a reduced walk satisfying~\ref{itm:topval} and~\ref{itm:tightwalk}. For every vertex $v \in V(G)$, let $W$ be a walk from $q$ to $v$ and let $S_v$ be the reduction of $\alpha(W)^{-1}\cdot Q\cdot \beta(W)$. By Lemma~\ref{lem:equigen}, this definition does not depend on the choice of $W$. We claim that the walks $S_v$ record exactly all steps of an $H$-recoloring sequence $S$ from $\alpha$ to $\beta$. Let write $S_v^j$ for the $j$th vertex of $S_v$.

    For any two adjacent vertices $u,v \in V(G)$ such that the walks $S_u$ and $S_v$ are both non empty, let
    \[
        S_u = (a_0\,a_1)(a_1\,a_2)\ldots(a_{n-1}\,a_n),\text{ and } S_v = (b_0\,b_1)(b_1\,b_2)\ldots(b_{n-1}\,b_n)\enspace.
    \]
    Let $W$ be any walk from $q$ to $u$. Then $W\, \cdot \,(u\,v)$ is a walk from $q$ to $v$ and hence,
    \begin{equation}
        S_v = \alpha(W\,(u\,v))^{-1} \cdot Q \cdot\beta(W\,(u\,v)) = (\alpha(v)\,\alpha(u))\cdot S_u\cdot (\beta(u)\,\beta(v))\enspace \text{ in } \pi(H). \label{eq:uv}
    \end{equation}
    
    We define an arc-set $A$ on $V(G)$ that allows us to determine in which order the vertices of $G$ can be recolored. Let $A$ be defined by applying the following four rules for any pair $(u, v) \in V(G) \times V(G)$:
    \begin{itemize}
        \item[Case 1:] $a_1 = b_0$ and $b_1 = a_0$. Since $S_u$ and $S_v$ are reduced we obtain from~\eqref{eq:uv} that $S_u = (a_0 \, b_0)$ and $S_v = (b_0 \, a_0)$, so $u$ and $v$ swap their colors. The two corresponding color changes can be carried out in any order, so we add no arc to $A$ in this case.
        \item[Case 2:] $a_1 = b_0$ and $b_1 \neq a_0$. We must recolor $u$ before $v$. We add the arc $u \to v$ to $A$.
        \item[Case 3:] $a_1 \neq b_0$ and $b_1 = a_0$. We must recolor $v$ before $u$. We add the arc $v \to u$ to $A$.
        \item[Case 4:] $a_1 \neq b_0$ and $b_1 \neq a_0$. By~\ref{eq:uv}, $(\alpha(v)\,\alpha(u)) = \varepsilon$ so $a_0 = b_0$ and $a_1 = b_1$. There is no constraint on which of $u$ or $v$ we have to recolor first, so we add no arc to $A$ in this case.
    \end{itemize}

    \begin{claim}
        The graph $D := (V(G), A)$ is acyclic.
    \end{claim}
    \begin{proof}
         Assume for a contradiction that $D$ contains a cycle $v_1 \to v_2 \to \ldots \to v_n = v_1$. But then the closed walk $C := (v_1\,v_2)\ldots(v_{n-1}\,v_1)$ is $\alpha$-tight. Indeed:
        \begin{itemize}
            \item Let $1 \leq i \leq n-1$. Since $(v_i, v_{i+1}) \in A$, we have in particular that $\alpha(v_i) \neq \alpha(v_{i+1})$.
            \item Let $v_{n+1} := v_2$ and $ 1 \leq i \leq n-1$. Then $v_i \to v_{i+1} \to v_{i+2}$ implies that
            \[
                \alpha(v_{i+2}) = S_{v_{i+2}}^0 = S_{v_{i+1}}^1 = S_{v_i}^2 \neq S_{v_i}^0 = \alpha(v_i)\enspace.
            \]
        \end{itemize}

        Hence, by Item~\ref{itm:tightwalk}, we have $Q = \alpha(W)^{-1}\cdot\beta(W)$ in $\pi(H)$ for any walk $W$ from $v_i$ to $q$. Thus $S_{v_i} = \alpha(W)\cdot Q \cdot\beta(W)^{-1} = \alpha(W)\cdot \alpha(W)^{-1}\cdot\beta(W)\cdot\beta(W)^{-1} = \varepsilon$. But by the construction of $A$, no vertex $w$ such that $S(w) = \varepsilon$ is incident to an arc in $A$; contradiction.
    \end{proof}

    By the previous claim, the graph $(V(G), A)$ is acyclic, so we may consider its vertices in topological order $v_1 < v_2 < \ldots < v_{|V(G)|}$ on $V(G)$. We obtain a $H$-recoloring sequence from $\alpha$ to $\beta$ that satisfies the \pop property by repeating the following until each vertex has reached its target color. For each $i$ from 1 to $|V(G)|$, if the current color of $v_i$ is not $\beta(v_i)$ then recolor $v_i$ from its current color, say $S_{v_i}^j$, to its next color $S_{v_1}^{j+1}$. The \pop property of the resulting $H$-recoloring sequence follows from the topological ordering of the vertices of $V(G)$.

    To conclude the proof we bound the runtime of the construction of the $H$-recoloring sequence.
    The algorithm first chooses walks to construct $S_v$ for each $v \in V(G)$ by breadth-first search in time $O(|E(G)|)$; this also guarantees that $|S_v| \leq 2|V(G)| + |Q|$.
    Then for each edge $uv \in E(G)$, we check if $A$ contains the arc $u \to v$ or $v \to u$ (or neither), and obtain the topological ordering of $V(G)$ in time $O(|E(G)|)$. The corresponding $H$-recoloring sequence can be computed in time $O(|V(G)| \cdot (|V(G)| + |Q|))$. Observe that this algorithm allows us to decide in time $O(|E(G)| \cdot (|V(G)| + |Q|))$ if a walk $Q$ is $H$-realizable by checking the consecutive colors on each edge. If we want to output the entire $H$-coloring at each step, this increases the size of the output by a factor $|V(G)|$, so this is done in $O(|V(G)|^2 \cdot (|V(G)| + |Q|)$.
\end{proof}

Figures~\ref{fig:carthm} and~\ref{fig:carthm2} illustrate different cases when applying Theorem~\ref{thm:consref}.

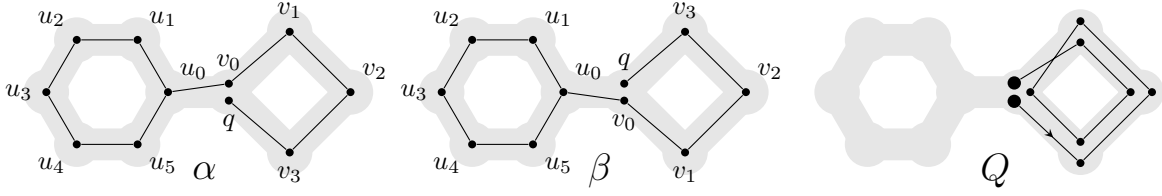
\begin{figure}
\begin{tikzpicture}[scale=0.8]
    \begin{scope}
    
        \dhexasqgraph;
        \node[G,label=60:$u_0$] (v0) at ($(ha0)$){};
        \node[G,label=60:$u_1$] (v1) at ($(ha1)$){};
        \node[G,label=120:$u_2$] (v2) at ($(ha2)$){};
        \node[G,label=180:$u_3$] (v3) at ($(ha3)$){};
        \node[G,label=-120:$u_4$] (v4) at ($(ha4)$){};
        \node[G,label=-60:$u_5$] (v5) at ($(ha5)$){};
        \node[G,label=90:$v_0$] (v6) at ($(hb0) + (90:4pt)$){};
        \node[G,label=90:$v_1$] (v7) at ($(hb1)$){};
        \node[G,label=60:$v_2$] (v8) at ($(hb2)$){};
        \node[G,label=-90:$v_3$] (v9) at ($(hb3)$){};
        \node[G,label=-90:$q$] (v10) at ($(hb0) - (90:4pt)$){};
        \draw[Ge] (v0)--(v1)--(v2)--(v3)--(v4)--(v5)--(v0)--(v6)--(v7)--(v8)--(v9)--(v10);
        \node at (1.6,-1.3) {\Large$\alpha$};
        
    \end{scope}
    \begin{scope}[shift={(6.5,0)}]
        \dhexasqgraph;
        \node[G,label=60:$u_0$] (v0) at ($(ha0)$){};
        \node[G,label=60:$u_1$] (v1) at ($(ha1)$){};
        \node[G,label=120:$u_2$] (v2) at ($(ha2)$){};
        \node[G,label=180:$u_3$] (v3) at ($(ha3)$){};
        \node[G,label=-120:$u_4$] (v4) at ($(ha4)$){};
        \node[G,label=-60:$u_5$] (v5) at ($(ha5)$){};
        \node[G,label=-90:$v_0$] (v6) at ($(hb0) - (90:4pt)$){};
        \node[G,label=-90:$v_1$] (v7) at ($(hb3)$){};
        \node[G,label=60:$v_2$] (v8) at ($(hb2)$){};
        \node[G,label=90:$v_3$] (v9) at ($(hb1)$){};
        \node[G,label=90:$q$] (v10) at ($(hb0) + (90:4pt)$){};
        \draw[Ge] (v0)--(v1)--(v2)--(v3)--(v4)--(v5)--(v0)--(v6)--(v7)--(v8)--(v9)--(v10);
        \node at (1.6,-1.3) {\Large$\beta$};
    \end{scope}
    
    \begin{scope}[shift={(13,0)}]
        \dhexasqgraph;
        \node[G1] (v0) at ($(hb0) + (-120:5pt)$){};
        \node[G] (v1) at ($(hb3) + (-90:5pt)$){};
        \node[G] (v2) at ($(hb2) + (0:5pt)$){};
        \node[G] (v3) at ($(hb1) + (90:5pt)$){};
        \node[G] (v4) at ($(hb0) + (0:5pt)$){};
        \node[G] (v5) at ($(hb3) + (90:5pt)$){};
        \node[G] (v6) at ($(hb2) - (0:5pt)$){};
        \node[G] (v7) at ($(hb1) + (-90:5pt)$){};
        \node[G1] (v8) at ($(hb0) + (120:5pt)$){};
        \draw[Ge] (v0)--(v1) node[sloped,pos=0.55,allow upside down]{\arrowInG}; ;
        \draw[Ge] (v1)--(v2)--(v3)--(v4)--(v5)--(v6)--(v7)--(v8);
        \node at (1.6,-1.3) {\Large$Q$};
        
    \end{scope}
\end{tikzpicture}

\caption{The cycle $C = (u_0 \, u_1)(u_1 \, u_2) (u_2 \, u_3) (u_3 \, u_4) (u_4 \, u_5) (u_5 \, u_0)$ is $\alpha$-tight, so its vertex are frozen, so letting $W = (u_0 \, v_0) (v_0 \, v_1)(v_1 \, v_2) (v_2 \, v_3) (v_3 \, v_4) (v_4 \, q)$, we obtain that $Q = \alpha(W)^{-1} \beta(W)$ is the only possibly $H$-realizable walk for $\alpha,\beta,q$. Then we can check that it is $H$-realizable.}

\label{fig:carthm}
\end{figure}

\begin{figure}
\begin{tikzpicture}[scale=0.8]
    \begin{scope}
        \dhexasqgraph;
        \node[G,label=60:$u_0$] (v0) at ($(ha0)$){};
        \node[G,label=60:$u_1$] (v1) at ($(ha1)$){};
        \node[G,label=120:$u_2$] (v2) at ($(ha2)$){};
        \node[G,label=180:$u_3$] (v3) at ($(ha3)$){};
        \node[G,label=-120:$u_4$] (v4) at ($(ha4)$){};
        \node[G,label=-60:$u_5$] (v5) at ($(ha5)$){};
        \node[G,label=90:$v_0$] (v6) at ($(hb0) + (90:4pt)$){};
        \node[G,label=90:$v_1$] (v7) at ($(hb1)$){};
        \node[G,label=60:$v_2$] (v8) at ($(hb2) + (90:4pt)$){};
        \node[G,label=-60:$v_3$] (v8bis) at ($(hb2) - (90:4pt)$){};
        \node[G,label=-90:$v_4$] (v9) at ($(hb3)$){};
        \node[G,label=-90:$q$] (v10) at ($(hb0) - (90:4pt)$){};
        \draw[Ge] (v0)--(v1)--(v2)--(v3)--(v4)--(v5)--(v0)--(v6)--(v7)--(v8)--(v8bis)--(v9)--(v10)--(v6);
        \node at (1.6,-1.3) {\Large$\alpha$};
        
    \end{scope}
    \begin{scope}[shift={(6.5,0)}]
        \dhexasqgraph;
        \node[G,label=60:$u_0$] (v0) at ($(ha0)$){};
        \node[G,label=60:$u_1$] (v1) at ($(ha1)$){};
        \node[G,label=120:$u_2$] (v2) at ($(ha2)$){};
        \node[G,label=180:$u_3$] (v3) at ($(ha3)$){};
        \node[G,label=-120:$u_4$] (v4) at ($(ha4)$){};
        \node[G,label=-60:$u_5$] (v5) at ($(ha5)$){};
        \node[G,label=-90:$v_0$] (v6) at ($(hb0) - (90:4pt)$){};
        \node[G,label=-90:$v_1$] (v7) at ($(hb3)$){};
        \node[G,label=-60:$v_2$] (v8) at ($(hb2) - (90:4pt)$){};
        \node[G,label=60:$v_3$] (v8bis) at ($(hb2) + (90:4pt)$){};
        \node[G,label=90:$v_4$] (v9) at ($(hb1)$){};
        \node[G,label=90:$q$] (v10) at ($(hb0) + (90:4pt)$){};
        \draw[Ge] (v0)--(v1)--(v2)--(v3)--(v4)--(v5)--(v0)--(v6)--(v7)--(v8)--(v8bis)--(v9)--(v10)--(v6);
        \node at (1.6,-1.3) {\Large$\beta$};
    \end{scope}
    \begin{scope}[shift={(13,0)}]
        \dhexasqgraph;
        \node[G1] (v0) at ($(hb0) + (-120:5pt)$){};
        \node[G] (v1) at ($(hb3) + (-90:5pt)$){};
        \node[G] (v2) at ($(hb2) + (0:5pt)$){};
        \node[G] (v3) at ($(hb1) + (90:5pt)$){};
        \node[G] (v4) at ($(hb0) + (0:5pt)$){};
        \node[G] (v5) at ($(hb3) + (90:5pt)$){};
        \node[G] (v6) at ($(hb2) - (0:5pt)$){};
        \node[G] (v7) at ($(hb1) + (-90:5pt)$){};
        \node[G1] (v8) at ($(hb0) + (120:5pt)$){};
        \draw[Ge] (v0)--(v1) node[sloped,pos=0.55,allow upside down]{\arrowInG}; ;
        \draw[Ge] (v1)--(v2)--(v3)--(v4)--(v5)--(v6)--(v7)--(v8);
        \node at (1.6,-1.3) {\Large$Q$};
    \end{scope}
\end{tikzpicture}

\caption{The same cycle is $\alpha$-tight, so setting the same walk $W = (u_0 \, v_0) (v_0 \, v_1)(v_1 \, v_2) (v_2 \, v_3) (v_3 \, v_4) (v_4 \, q)$, $Q := \alpha(W)^{-1} \beta(W)$ is the only possibly $H$-realizable walk. Here however, $Q$ is not topologically valid since letting $C := (q \, v_0) (v_0 \, v_1)(v_1 \, v_2) (v_2 \, v_3) (v_3 \, v_4) (v_4 \, q)$, we have that $Q = \alpha(C)^{-2}$ in $\pi(H)$ so $Q^{-1} \alpha(C) Q = \alpha(C) \neq \alpha(C)^{-1} = \beta(C)$ in $\pi(H)$. 
}
\label{fig:carthm2}
\end{figure}
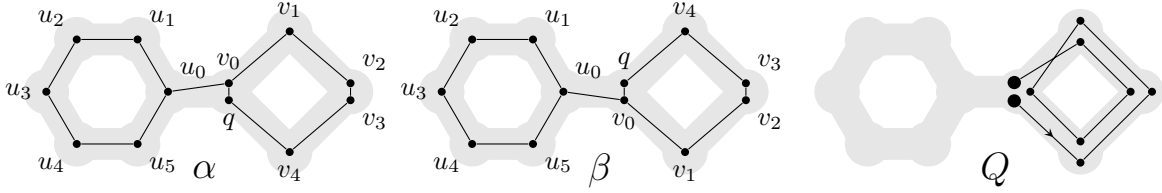

\subsubsection{Proof of Theorem~\ref{thm:R:realwalks}}

\begin{proof}
Recall that $\undir{\Pi}$ is the set of all $H$-realizable walks for $\alpha,\beta,q$.

We give an algorithm that decides which case of Theorem~\ref{thm:R:realwalks} applies.
First, apply Lemma~\ref{lem:findtight}. If it finds an $\alpha$-tight walk, then by Theorem~\ref{thm:consref} there is only one possibly $H$-realizable walk $Q = \alpha(W)^{-1} \beta(W)$ where $W$ is any walk from a vertex of that cycle to $q$; and we can check whether $Q$ is $H$-realizable, so we conclude with $\undir{\Pi} = \emptyset$ or $\undir{\Pi} = Q$  in time $O(|V(G)| \cdot (|V(G)| + |Q|)) = O(|V(G)|\cdot|E(G)|)$ since $|Q| \leq |E(G)|)$. If there is no $\alpha$-tight walk then, by Theorem~\ref{thm:consref}, we have $\undir{\Pi} = \Pi$ where $\Pi$ is the set of topologically valid walks returned by Theorem~\ref{thm:T:Wrochna}.

Reading the classification of Theorem~\ref{thm:T:Wrochna}, we immediately see that case~\ref{itm:R:allwalks} of Theorem~\ref{thm:R:realwalks} occurs when $\alpha(C) = \beta(C) = \varepsilon$ in $\pi(H)$ for all closed walks $C$ in $G$.
\end{proof}

\subsection{Reflexive directed graphs}
\label{sec:R:directed}

For the remainder of this section
let $H$ be a fixed reflexive digraph. We also fix any reflexive instance of \recol{H} by choosing a reflexive digraph $G$. Notice that if $H$ contains no triangle of algebraic girth 1 then any $H$-recoloring sequence satisfies the \pop property: Suppose that in a step of a $H$-recoloring sequence a vertex $u$ changes its color from $a$ to $b$ and let $v$ be a neighbor of $u$. 
If the color of $v$ is not $a$ or $b$ during the change, then $H$ contains the triangle of algebraic girth 1 shown in Figure~\ref{fig:onegirth}. 

Given two $H$-colorings $\alpha,\beta \colon G \to H$ and $q \in V(G)$, we say that a walk $Q$ from $\alpha(q)$ to $\beta(q)$ is \emph{$H$-realizable} if there is a $H$-recoloring sequence $S$ satisfying the \pop property such that $Q = S(q)$. We observed that any  $H$-realizable walk $Q$ is also $\undir{H}$-realizable. Hence, we may apply Theorem~\ref{thm:R:realwalks} to determine \emph{all} $\undir{H}$-realizable walks and then check if one of them is $H$-realizable.
That is, for each of the four classes of $\undir{H}$-realizable walks in  Theorem~\ref{thm:R:realwalks}, we give a polynomial-time algorithm that finds a $H$-realizable walk in the class or indicates correctly that no such walk exists. Clearly, in Case~\ref{itm:R:nowalk} of Theorem~\ref{thm:R:realwalks} there is no $H$-realizable walk (since there is no $\undir{H}$-realizable walk). It remains to treat cases~\ref{itm:R:onlyQ} -- \ref{itm:R:allwalks} of Theorem~\ref{thm:R:realwalks}, which we will do separately in the next subsections.

\subsubsection{Case~\ref{itm:R:onlyQ} of Theorem~\ref{thm:R:realwalks}}

Let $\alpha, \beta : G \to H$ and $q \in V(G)$. We suppose that we are in Case
Case~\ref{itm:R:onlyQ} of Theorem~\ref{thm:R:realwalks}. That is, the set $\undir{\Pi}$ of \undir{H}-realizable walks for $\alpha, \beta, q$ contains only a single walk $Q$ (from $\alpha(q)$ to $\beta(q)$).

\begin{lemma}[The move forward algorithm]
    \label{lem:moveforward} 
    There is a polynomial-time algorithm that, given $\alpha$, $\beta$, $q$, and $Q$, determines if the walk $Q$ is $H$-realizable. If so, it constructs an associated $H$-recoloring sequence of polynomial length.
\end{lemma}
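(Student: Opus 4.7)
The plan is to first reduce the problem to finding a valid ordering of color changes. Since $Q$ is the unique $\undir{H}$-realizable walk for $\alpha, \beta, q$, we invoke Theorem~\ref{thm:consref} to obtain, for every vertex $v \in V(G)$, its (reduced) target vertex walk $S(v) = \alpha(W_v)^{-1} \cdot Q \cdot \beta(W_v) \in \pi(\undir H)$, where $W_v$ is any walk from $q$ to $v$. By Lemma~\ref{lem:fundtri} and the push-or-pull property of any $H$-recoloring sequence under our hypotheses, \emph{every} $H$-recoloring sequence realizing $Q$ must change the color of each vertex $v$ along exactly the walk $S(v)$; only the order in which these color changes happen can vary.

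The algorithm is the natural greedy \emph{move-forward} procedure. We maintain a current homomorphism $\phi$, initialized to $\alpha$, together with for each $v$ a pointer into $S(v)$ indicating the current position (initially $0$). At each step, we look for a vertex $u$ such that (i)~$u$'s pointer has not yet reached the end of $S(u)$, and (ii)~replacing $\phi(u)$ by the next color $b$ in $S(u)$ still yields a homomorphism $G \to H$, i.e.\ for every arc of $G$ incident to $u$, the corresponding arc is present in $H$ after the change. If such a vertex is found, we perform the color change and advance $u$'s pointer; otherwise we report that $Q$ is not $H$-realizable. The procedure terminates when all pointers reach the end, in which case $\phi = \beta$ and we output the sequence of color changes performed.

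The main obstacle is the correctness of the greedy: we must show that if $Q$ is $H$-realizable then the procedure never gets stuck. To this end, fix any $H$-recoloring sequence $T$ that realizes $Q$. By the observation above, $T$ performs exactly the color changes encoded by $\{S(v)\}_{v \in V(G)}$, possibly in a different order. At any point during the greedy we have carried out some multiset $M$ of color changes; since each $S(v)$ is a fixed reduced walk, the greedy's current coloring $\phi$ is characterized by $M$. We then consider $T$ restricted to performing the same prefix of changes (after possible reordering, which is legal because the push-or-pull property and the DAG structure constructed in the proof of Theorem~\ref{thm:consref} allow us to swap commuting adjacent color changes). The very next color change performed by the reordered $T$ is, by definition, legal from $\phi$ in $H$, and it advances some vertex's pointer—hence it is a valid move for the greedy. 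Therefore the greedy succeeds and produces an $H$-recoloring sequence of length $\sum_{v} |S(v)| = O(|V(G)|^2 + |V(G)|\cdot |Q|)$. Each step tests feasibility by scanning the incident arcs of the candidate vertex in $O(|E(G)|)$ time, giving an overall polynomial runtime.
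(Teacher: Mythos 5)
The algorithm you propose (greedy move-forward along the vertex walks $S_v$, advancing any vertex whose next step preserves a homomorphism $G\to H$) is the same as the paper's, and the reduction to the walks $S_v$ via Lemma~\ref{lem:fundtri} and Theorem~\ref{thm:consref} is also the same. However, your correctness argument has a genuine gap. You assert that if $Q$ is $H$-realizable and the greedy has performed a multiset $M$ of moves, one can reorder a realizing sequence $T$ so that the moves in $M$ form a prefix, ``because the push-or-pull property and the DAG structure constructed in the proof of Theorem~\ref{thm:consref} allow us to swap commuting adjacent color changes.'' But in the directed setting adjacent color changes need not commute, even when both occur in a valid sequence. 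Take $u \to v$ in $G$ with $\phi(u)=a$, $\phi(v)=b$, $S_u=(a\,c)$, $S_v=(b\,d)$, and $a\to b$, $c\to b$, $c\to d\in A(H)$ while $a\to d\notin A(H)$: moving $u$ then $v$ is valid, but moving $v$ then $u$ is not. The DAG you appeal to from Theorem~\ref{thm:consref} tracks only the constraints imposed by $\undir H$, precisely because in the undirected/reflexive setting such swaps are always available; its structure says nothing about whether the orientation constraints of $H$ allow the swap you need. Thus ``swap commuting adjacent color changes'' is not an operation you can always perform to rearrange $T$ into your greedy's order, and you have not shown that the particular swaps required are ever legal.

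The paper instead isolates the observation that makes the greedy work directly, avoiding any exchange argument: it defines the constraint $u<v$ (``$u$ must move forward before $v$ can'') using the current colors and the orientation of the arcs $u\to v$ and $v\to u$ in $G$ together with the arcs $a_i\to a_{i+1}$, $a_{i+1}\to a_i$ in $H$, and then notes that moving a vertex forward can only remove such constraints on its neighbors, never create new ones. This monotonicity means the set of unblocked vertices never shrinks on account of a legal move, so the greedy cannot reach a dead end that a cleverer ordering would have avoided; and if it does get stuck, every remaining vertex is blocked by another remaining vertex, yielding a cycle $u_1<u_2<\cdots<u_n<u_1$ that certifies $Q$ is not $H$-realizable. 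You would need to prove an analogous monotonicity statement (for instance: if advancing $u$ is legal when a neighbor $w$ is at position $j$, it remains legal when $w$ is at any position $j'\ge j$ reachable in a homomorphism, which follows from reflexivity of $H$ and the push-or-pull alignment of $S_u$ and $S_w$), rather than appealing to commutativity of moves, which simply does not hold here.
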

\begin{proof} 
    The algorithm proceeds as follows.
    Construct all walks $S_v$ from $\alpha(v)$ to $\beta(v)$ by choosing walks from $q$ to $v$ (if $Q$ is realizable, then it can be associated with a $H$-recoloring sequence whose vertex walks are the walks $S_v$). For any vertex $u \in V(G)$ and its vertex walk $S_u = (a_0\,a_1)\ldots(a_{n-1}\,a_n)$, we call \emph{moving forward} $u$ the operation that changes the color of $u$ to its next color in $S_u$ (if the color of $u$ is $a_i$, then moving forward $u$ changes its color to $a_{i+1}$). Let $v$ be a neighbor of $u$ and suppose that the color of $u$ is $a_i$ ($i<n$). We write $u < v$ if:
    \begin{itemize}
        \item the color of $v$ is $a_{i+1}$ (think that $v$ is one step forward in their common walk so $u$ needs to be pulled to the color of $v$ first), or
        \item the color of $v$ is $a_i$ and $u \rightarrow v \in A(G)$ and $a_i \rightarrow a_{i+1} \not\in A(H)$, or
        \item the color of $v$ is $a_i$ and $v \rightarrow u \in A(G)$ and $a_{i+1} \rightarrow a_i \not\in A(H)$ (think that $u$ and $v$ are at the same step in their walk, but the orientation of $H$ prevents $v$ from being pushed, so we must push $u$ first)
    \end{itemize}

    So $u<v$ means that $u$ must move forward before $v$ does. We say that $u$ \emph{is able to move forward} if there is no neighbor $v$ of $u$ such that $v<u$ and if $u$ is not at the end of $S_u$. Observe that moving $u$ forward (when possible) can only enable its its neighbors to move forward. So we can deduce an algorithm to decide whether $Q$ is realizable:
    
    \begin{enumerate}
        \item \label{itm:mfw:init} Initialization: For all arc $u \rightarrow v \in A(G)$, decide if gives an order $u<v$, $v>u$, or none. For all vertex $u \in V(G)$, let $b(u)$ be the number of neighbors $v$ of $u$ such that $v<u$. Also set a list $M$ of all vertices that are able to move forward, so of vertices $u \in V(G)$ such that $S_u$ is non empty and $b(u) =0$.
        \item \label{itm:mfw:push} While $M$ is not empty, pick any vertex $u \in M$ and move $u$ forward. Then update $S_u$ by deleting its first edge. check the changes of all arcs $u \rightarrow v$ and $v \rightarrow u$ to update the number of constraints $b(v)$ (which might decrease by $1$) and $b(u)$. If $b(v) = 0$ and $S_v$ is non empty, then add $v$ to $M$. Similarly if $b(u)$ is still $0$ and $S_u$ is non empty, then add $u$ to $M$.
    \end{enumerate}

    This algorithm can stop for two reasons: either the last $H$-coloring of $G$ is $\beta$, so the moves performed by the algorithm yield a $H$-recoloring sequence from $\alpha$ to $\beta$ and $Q$ is $H$-realizable. Or there is a cycle $u_1 < u_2 < \ldots u_n < u_1$ (it is possible that $n =2$ if there is a symmetric edge $ u_1 \rightleftarrows u_2$ and $S_{u_1}$ or $S_{u_2}$ is not symmetric). This means that $Q$ is not $H$-realizable and the obstruction is the cycle $(u_1u_2)\ldots(u_nu_1)$.
    
     Phase~\ref{itm:mfw:init} takes time $O(|E(G)|)$. Phase~\ref{itm:mfw:push} will require at most $|Q| + 2|V(G)|$ move forward for each vertex $v \in V(G)$, so it runs in time $O(\sum_{v \in V(G)} (|Q| + |V(G)|) \cdot \text{deg}(v)) = O(|E(G)| \cdot (|Q| + |V(G)|)$, which is thus the final complexity of the algorithm.
\end{proof}

\subsubsection{Case \ref{itm:R:walksRnP} of Theorem~\ref{thm:R:realwalks}}

Let $\alpha, \beta : G \to H$ and $q \in V(G)$. We assume in this section that we are in Case
Case~\ref{itm:R:walksRnP} of Theorem~\ref{thm:R:realwalks}. That is, the set $\undir{\Pi}$ of \undir{H}-realizable walks for $\alpha, \beta, q$ is the set $\{ R^nP \mid n \in \mathbb{Z} \}$, where $R, P\in \pi(H)$ and $R$ is a closed walk. We show that in order to decide whether there is a walk of the form $R^nP$ that is $H$-realizable, it suffices to check the walks $R^iP$ for all $-N \leq i \leq N$, where $N$ is polynomial in $|V(G)|$ and $|V(H)|$.

\begin{lemma} 
	\label{lem:maxred} 
	Let  $R = A R_0 A^{-1}$, such that so $A$ is reduced and $R_0$ is cyclically reduced.
	Then there exists some $n_0 \in \mathbb{N}$,
	polynomial in $|V(G)|$ and $|V(H)|$, such that for $n \geq n_0$, reducing the walk 
	\[ 
		S_v = \alpha(W)^{-1}\cdot A \cdot R_0^{n} \cdot A^{-1} \cdot P \cdot \beta(W) 
	\]
	leaves the middle term $R_0^{n-n_0}$ intact.
\end{lemma}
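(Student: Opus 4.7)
The plan is to use that, since $R_0$ is cyclically reduced, the walk $R_0^n$ is itself reduced for every $n\ge 1$, so any reduction performed on $S_v$ can only take place at the two interfaces where $R_0^n$ meets the surrounding walks, and each such interface can only consume a bounded number of consecutive copies of $R_0$.

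First I would reduce the two flanking walks separately: let $L'$ be the reduction of $\alpha(W)^{-1}\cdot A$ and $N'$ the reduction of $A^{-1}\cdot P\cdot \beta(W)$. Choosing $W$ to be a shortest walk in $G$, we have $|\alpha(W)|,|\beta(W)|\le |V(G)|$, while $|A|$ and $|P|$ are polynomial in $|V(G)|+|E(G)|+|E(H)|$ by Theorem~\ref{thm:T:Wrochna}; hence $|L'|$ and $|N'|$ are polynomial in the sizes of $G$ and $H$. Since $R_0$ is cyclically reduced, no cancellation occurs at the junction between two consecutive copies, and therefore $R_0^n$ is already a reduced walk.

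Next I would write $S_v=L'\cdot R_0^n\cdot N'$ and perform the reduction in two stages, first at the left junction and then at the right. At the left junction, each cancellation step removes one edge from the end of $L'$ and one edge from the beginning of $R_0^n$. Since $L'$ has only $|L'|$ edges, at most $|L'|$ edges from the prefix of $R_0^n$ can be consumed, spanning at most $\lceil |L'|/|R_0|\rceil+1$ copies of $R_0$. The analogous bound $\lceil |N'|/|R_0|\rceil+1$ holds at the right junction. Setting
\[
n_0=\lceil |L'|/|R_0|\rceil+\lceil |N'|/|R_0|\rceil+2,
\]
which is polynomial in $|V(G)|+|V(H)|$, we conclude that at least $n-n_0$ consecutive copies of $R_0$ in the middle of the original $R_0^n$ are untouched by the two boundary reductions.

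The main technical point I expect to have to justify carefully is that the right-junction reduction, performed after the left-junction reduction, cannot propagate back into the middle block and trigger further cancellation. This is where the cyclic reducedness of $R_0$ is essential: the walk obtained after the left-junction reduction has the form $L''\cdot(\text{tail of }R_0^n)$ and is reduced by construction; appending $N'$ on the right then triggers cancellation only at that new junction, and as long as $n\ge n_0$ the tail still contains many untouched copies of the reduced walk $R_0$, so the right-junction cancellation exhausts itself well before reaching the interior $R_0^{n-n_0}$. This leaves $R_0^{n-n_0}$ as a consecutive factor of the fully reduced form of $S_v$, which is what the lemma asserts.
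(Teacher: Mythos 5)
Your proof is correct and follows essentially the same route as the paper: both arguments exploit the fact that $R_0^n$ is reduced (since $R_0$ is cyclically reduced), so any cancellation in $S_v$ happens only at the two junctions with the flanking pieces, and the amount of cancellation on each side is bounded by the length of that flanking piece, hence consumes a polynomially bounded number of copies of $R_0$. The only cosmetic difference is that you pre-reduce $\alpha(W)^{-1}\cdot A$ and $A^{-1}\cdot P\cdot\beta(W)$ into $L'$, $N'$ and count copies via $\lceil|L'|/|R_0|\rceil$ etc., whereas the paper bounds the cancellation directly by $|\alpha(W)^{-1}|\le|V(G)|$ and $|P\cdot\beta(W)|\le|P|+|V(G)|$, giving a slightly sharper $n_0=\lceil|V(G)|/|R_0|\rceil+\lceil(|P|+|V(G)|)/|R_0|\rceil$ that does not involve $|A|$; both versions are polynomial, which is all that is needed. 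Your closing remark about the right-junction reduction not propagating into the middle block is exactly the right thing to observe and is implicit in the paper's phrase ``the middle term $R_0^{n-n_0}$ cannot reduce in $S_v$.''
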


\begin{proof} 
	Let $a_0:= \lceil |V(G)|/|R_0| \rceil$ and $b_0:= \lceil(|P| +
	|V(G)|)/|R_0| \rceil$. Let $n_0 := a_0 + b_0$. Let
	$W$ be a walk from $q$ to $v$ of length at most $V(G)$. Such a walk can be found by breadth-first-search in time $O(E(G))$. Let $n > n_0$ and consider the walk 
	\[
    	S_v = \alpha(W)^{-1} \cdot R^n\cdot P \cdot \beta(W) 
    \]
	Since $R_0$ is cyclically reduced, we have that $R^n$ reduces to $A \cdot R_0^n\cdot  A^{-1}$. Furthermore, since $n > n_0$ and $|\alpha(W)^{-1}| \leq |W| \leq |V(G)| \leq a_0|R_0| = |R_0^{a_0}|$, the
	left term $\alpha(W)^{-1}$ reduces at most with $A \cdot R_0^{a_0}$. Similarly, the
	right term $P \cdot\beta(W)$ at most reduces with $R_0^{b_0} \cdot A^{-1}$.
	Finally, the middle term $R_0^{n-n_0}$ cannot reduce in $S_v$.
\end{proof}

\begin{lemma} 
    \label{lem:caseRnP}
    There exists $N \in \mathbb{N}$ polynomial in $|V(G)|$ and $|V(H)|$ such that the following are equivalent:
    \begin{itemize}
        \item None of the walks $R^nP$ is $H$-realizable for $-N \leq n \leq N$.
        \item None of the walks $R^nP$ is $H$-realizable for $n \in \mathbb{Z}$.
    \end{itemize}
    In particular, we can solve \recolun{H} in polynomial time using the move forward algorithm of Lemma~\ref{lem:moveforward}.
\end{lemma}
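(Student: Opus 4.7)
The plan is to exploit the periodic structure of the walks $R^n P$, as made explicit by Lemma~\ref{lem:maxred}, to show that the $H$-realizability of $R^n P$ is eventually periodic in $n$, so that it suffices to test a polynomial-sized window of exponents by repeated calls to the move-forward algorithm of Lemma~\ref{lem:moveforward}.

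First I would fix, for each vertex $v \in V(G)$, a shortest walk $W_v$ from $q$ to $v$, and apply Lemma~\ref{lem:maxred} to obtain a polynomial constant $n_0$ such that for every $n \geq n_0$ the reduced vertex walk
\[
    S_v^{(n)} := \alpha(W_v)^{-1}\cdot R^n\cdot P\cdot \beta(W_v)
\]
has the explicit form $X_v \cdot R_0^{n-n_0}\cdot Y_v$, where $R = A R_0 A^{-1}$ is the decomposition of $R$ into a cyclically reduced core and the walks $X_v, Y_v$ depend only on $v$. A symmetric statement holds for $n \leq -n_0$. In particular, for $n \geq n_0$, the system of vertex walks that the move-forward algorithm operates on for $R^{n+1} P$ differs from the system for $R^n P$ only by one extra copy of $R_0$ inserted in the middle of each $S_v$.

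The heart of the argument is the following swap claim: there exists a polynomial $N \geq n_0$ such that for every $n \geq N$, the walk $R^n P$ is $H$-realizable if and only if $R^{n+1} P$ is. I would prove the non-trivial direction by taking an $H$-recoloring sequence $\Sigma$ realizing $R^{n+1} P$ and excising one full $R_0$-cycle of moves from each vertex. During $\Sigma$ each vertex $v$ traverses $X_v$, then repeats $R_0$ a total of $n+1-n_0$ times, then traverses $Y_v$. I would locate two times $i < j$ within the middle phase such that between time $i$ and time $j$ every vertex advances by exactly $|R_0|$ positions along its walk and returns to the same color and the same set of pending local ordering constraints. Because $R_0$ is cyclically reduced and the \pop property is preserved, the block of moves between $i$ and $j$ corresponds to exactly one full $R_0$-cycle per vertex, and removing this block produces a shorter valid move-sequence realizing $R^n P$. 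The symmetric swap handles $n \leq -N$, and iterating yields the equivalence claimed in the lemma.

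The main obstacle will be to carry out this excision without letting $N$ depend on the exponential size of the coloring space. The right state to track is not the full coloring but the tuple used implicitly by the move-forward algorithm: for each vertex currently in its middle phase, its position within $R_0$, together with the current set $M$ of enabled vertices and the counters $b(u)$ from the proof of Lemma~\ref{lem:moveforward}. I expect that once every vertex has entered its middle phase, this coarser state either enters a ``steady regime'' in which each vertex can be advanced through a full $R_0$-cycle and we can then iterate arbitrarily many extra $R_0$-cycles, or it produces an obstruction cycle $u_1 < u_2 < \cdots < u_k < u_1$ within polynomially many steps of entering the middle phase, which would equally obstruct every larger $n$. Taking $N$ to be a polynomial bound that dominates $n_0$, the transient length before the middle phase is reached by all vertices, and the delay before the dichotomy manifests, yields the swap claim. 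Given the swap claim, the algorithmic statement follows at once: run the move-forward algorithm on $R^n P$ for each $n$ with $|n| \leq N$; by Lemma~\ref{lem:moveforward} each call takes time $O(|E(G)| \cdot (|R^n P| + |V(G)|))$, which is polynomial since $|n|$ is polynomial, and there are only $2N+1$ candidates to test. If any call succeeds we recover an $H$-recoloring sequence; otherwise Theorem~\ref{thm:R:realwalks} and the swap claim guarantee that no $R^n P$ is $H$-realizable and we correctly report a no-instance.
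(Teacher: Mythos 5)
Your approach attempts to prove the swap claim by excising a full $R_0$-cycle of moves from an $H$-recoloring sequence realizing $R^{n+1}P$. This is a genuinely different strategy from the paper's, which works with \emph{negative} certificates instead of positive ones: it assumes $R^N P$ is \emph{not} $H$-realizable, runs the move-forward algorithm to obtain an obstruction cycle $C$, and then argues that this same cycle obstructs $R^n P$ for all $n \geq N$.

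The gap in your excision argument is exactly the one you flag but do not resolve. To cut out a block of moves and still have a valid recoloring sequence, you need two time indices $i < j$ at which the \emph{entire coloring} $V(G) \to V(H)$ coincides, with each vertex having advanced by exactly $|R_0|$ positions in its vertex walk in between. Finding such a pair by pigeonhole on colorings alone gives an exponential bound on $N$. Your proposed remedy --- tracking a ``coarser state'' consisting of each vertex's phase within $R_0$, the set $M$, and the counters $b(u)$ --- is not actually coarser: the positions within $R_0$ for all vertices range over $|R_0|^{|V(G)|}$ combinations and the set $M$ over $2^{|V(G)|}$, so this state space is still exponential. The final dichotomy (``steady regime'' vs.\ polynomially fast obstruction) is asserted, not proved, and is in fact the crux.

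The paper sidesteps all of this by noting that an obstruction cycle $C$ returned by the move-forward algorithm has at most $|V(G)|$ vertices. With $N = n_0 + 2a_0$ where $a_0 = \lceil |V(G)|/|R_0| \rceil$, each vertex walk $S_v$ decomposes into a first part $\alpha(W)^{-1} A R_0^{a_0}$, a middle part $R_0^{n-n_0}$, and a last part $R_0^{a_0} A^{-1} P \beta(W)$, with the first and last parts of fixed polynomial size independent of $n$. Since $|C| \leq |V(G)| \leq a_0 |R_0|$, the blocked positions of all vertices of $C$ fit entirely inside the first-plus-middle part or entirely inside the middle-plus-last part; in either case the blocked segment of each walk is unchanged as $n$ grows past $N$ (in the second case one argues on the reversed walks), so the same cycle obstructs $R^n P$ for every $n \geq N$. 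This avoids any need to synchronize a global state across the coloring space. Your argument as written does not reach a proof; you would need to switch to reasoning about the obstruction cycle returned by the move-forward algorithm and exploit its size bound, which is what the paper does.
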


\begin{proof} 
    Decompose $R = AR_0A^{-1}$, with all walks reduced and $R_0$ cyclically reduced. Apply Lemma~\ref{lem:maxred} to obtain $n_0$ such that for $n \geq n_0$, in all generated walks
    \[ 
        S_v = \alpha(W)^{-1} \cdot A \cdot R_0^{n} \cdot A^{-1} \cdot P \cdot \beta(W)\enspace,
    \]
    where $W$ is a walk from $q$ to $v$ of length at most $V(G)$ and
    a middle term $R_0^{n-n_0}$ does not reduce. 
    Let $a_0 := \lceil |V(G)|/|R_0| \rceil$. Let $N := n_0 + 2 a_0$. We assume that $R^{N}P$ is not $H$-realizable and apply the move forward algorithm of Lemma~\ref{lem:moveforward} to $R^{N}P$ to get an obstruction cycle $ C =u_1 < u_2 < \ldots <u_r <u_1$ (recall that possibly $r=2$). We can wonder at what step of their respective walk the vertices of $C$ are blocked by the obstruction.

    For $v \in V(G)$ we decompose $S_v$ into three parts: $\alpha(W)^{-1} A R_0^{a_0}$ is the first part. $R_0^{n-n_0}$ is the middle part and $R_0^{a_0} \cdot A^{-1} \cdot P \cdot \beta(W)$ the last part. Since $|C| \leq |V(G)| \leq |R_0^{a_0}|$ we have that there cannot be two vertices $u,v \in V(C)$ such that $u$ is blocked before a move of the first part of $S_u$ and $v$ before a move of the last part of $S_v$. 

\begin{itemize}
\item If all vertices are blocked before a move in the first or in the middle part of their vertex walk, then the move forward algorithm will stop at the same steps as well for any $n \geq N$ (since all walks are the same to that step).
\item If all vertices stop before a move contained in the middle or in the last part of their vertex walks, then this means there is an obstruction in the first or in the last part of the reverse walks $S_v^{-1}$. So as before, using the move forward algorithm to test if that reverse walk reconfigures $\beta$ to $\alpha$, it will stop at the exact same step for $n \geq N$. 
\end{itemize}

In all cases, the same cycle $u_1 < \ldots < u_r <u_1$ will be an obstruction to the move forward algorithm for $n \geq N$, so none of the walks $R^nP$ is $H$-realizable for $n \geq N$.
Exchanging $R_0$ with $R_0^{-1}$, we obtain that the same holds for $n \leq -N$ if $R^{-N}$ is not $H$-realizable.
\end{proof}

The following lemma summarizes the previous results.

\begin{lemma} 
    \label{lem:R:case3} 
    If Case~\ref{itm:R:walksRnP} of Theorem~\ref{thm:R:realwalks} applies then there is a polynomial-time algorithm that finds an $H$-realizable walk or concludes correctly there is none.
\end{lemma}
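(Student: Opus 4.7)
The plan is to combine Lemma~\ref{lem:moveforward} (the move-forward algorithm, which decides in polynomial time whether a given walk is $H$-realizable) with Lemma~\ref{lem:caseRnP} (which gives a polynomial bound $N$ such that some $R^nP$ is $H$-realizable if and only if some $R^nP$ with $|n|\leq N$ is). Theorem~\ref{thm:R:realwalks} furnishes us with $R,P\in\pi(H)$ of size polynomial in $|V(G)|,|E(G)|,|E(H)|$ such that $\undir{\Pi}=\{R^nP\mid n\in\mathbb{Z}\}$, so the candidate walks to test are known explicitly.

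The algorithm will therefore proceed as follows. First, invoke Theorem~\ref{thm:R:realwalks} to compute $R$ and $P$ explicitly, together with a cyclic decomposition $R=AR_0A^{-1}$ with $R_0$ cyclically reduced; this can be done in polynomial time by iteratively stripping the outermost edges that cancel. Next, compute the threshold $N$ from Lemma~\ref{lem:caseRnP}, which is polynomial in $|V(G)|$ and $|V(H)|$. Then, for each integer $n$ with $-N\leq n\leq N$, form the walk $R^nP$ (whose length is polynomial in the input size, since $|R|,|P|,N$ are) and run the move-forward algorithm of Lemma~\ref{lem:moveforward} on it. If any of these calls produces an $H$-recoloring sequence, output it; otherwise, report that no $R^nP$ is $H$-realizable.

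Correctness follows immediately from the two lemmas: Lemma~\ref{lem:moveforward} certifies that each individual call is correct, and Lemma~\ref{lem:caseRnP} guarantees that the finite window $[-N,N]$ is sufficient, i.e., if the algorithm rejects every walk in this window, then no $R^nP$ for $n\in\mathbb{Z}$ is $H$-realizable. For the runtime analysis, the total work is $O(N)$ calls to the move-forward algorithm, each running in time polynomial in $|V(G)|$, $|E(G)|$, and $|R^nP|=O(N\,|R|+|P|)$, all of which are polynomial in the input size.

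The main conceptual obstacle has already been handled by Lemma~\ref{lem:caseRnP}: the a priori infinite family $\{R^nP\}_{n\in\mathbb{Z}}$ is reduced to a polynomial-size search space by observing that, for $|n|$ large enough, a middle block $R_0^{n-n_0}$ of each generated vertex walk $S_v$ survives reduction intact, so that the move-forward algorithm either succeeds for all sufficiently large $n$ simultaneously or gets stuck on the same obstruction cycle for all of them. Assembling these ingredients, no additional technical work is required beyond an explicit description of the procedure and an observation that each step runs in polynomial time.
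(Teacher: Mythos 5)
Your proposal is correct and follows exactly the paper's proof: determine $N$ via Lemma~\ref{lem:caseRnP}, run the move-forward algorithm of Lemma~\ref{lem:moveforward} on each walk $R^nP$ for $-N\leq n\leq N$, and if all fail invoke Lemma~\ref{lem:caseRnP} to conclude that no $R^nP$ is $H$-realizable. The extra details you supply (computing the cyclic decomposition of $R$, the explicit runtime bound) are consistent with the paper's treatment and do not change the argument.
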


\begin{proof} 
    Determine $N$ according to~\ref{lem:caseRnP}. Then, for each $-N \leq i \leq N$, we can treat the walk $R^iP$ as in Case~\ref{itm:R:onlyQ} of Theorem~\ref{thm:R:realwalks}: we apply the move-forward-algorithm of Lemma~\ref{lem:moveforward} to $R^iP$. If one of $R^iP$ is $H$-realizable then we are done, otherwise we invoke Lemma~\ref{lem:caseRnP} and conclude that none of the walks $R^iP$ is $H$-realizable for any $i \in \mathbb{Z}$.
\end{proof}

\begin{remark} 
    With Lemma~\ref{lem:R:case3}, Case~\ref{itm:R:walksRnP} is treated in polynomial time. Since we apply the move forward algorithm $O(|N|) = O(|P|+|V(G)|) = O(|V(G)| \cdot |E(G)| + |E(H)|)$ times, the total running time is $O((|V(G)| \cdot (|V(G)| \cdot |E(G)| + |E(H)|))^2)$.
\end{remark}

\subsubsection{Case~\ref{itm:R:allwalks} of Theorem~\ref{thm:R:realwalks}}

Let $\alpha, \beta : G \to H$ and $q \in V(G)$. We assume in this section that we are in Case
Case~\ref{itm:R:allwalks} of Theorem~\ref{thm:R:realwalks}. 
So for any closed walk $C$ of $G$, we have $\alpha(C) = \beta(C) = \varepsilon$ in $\pi(H)$.
Our goal is to decide if some reduced walk from $\alpha(q)$ to $\beta(q)$ is $H$-realizable.

\begin{lemma} 
    \label{lem:R:dirsym} 
    Let $Q$ be a reduced walk from $\alpha(q)$ to $\beta(q)$ (so $Q$ is $\undir{H}$-realizable). For any vertex $v \in V(G)$, let $S_v$ be the reduced walk $\alpha(W)^{-1}Q\beta(W)$, where $W$ is any walk from $q$ to $v$. Then $Q$ is $H$-realizable if and only if for any directed closed walk $u_1 \rightarrow u_2 \rightarrow \ldots \rightarrow u_n \rightarrow u_1$, for each $1 \leq k \leq n$, the walk $S_{u_k}$ is symmetric ($n \geq 2$, possibly $n=2$ and the closed walk goes forth and back a symmetric edge).
\end{lemma}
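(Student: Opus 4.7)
The plan is to establish the equivalence by arguing each direction separately.

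For the forward direction, I would suppose that $Q$ is $H$-realizable via some $H$-recoloring sequence $S$ satisfying the \pop property and rely on two structural observations. Since we are in Case~\ref{itm:R:allwalks} of Theorem~\ref{thm:R:realwalks}, $\alpha(C) = \varepsilon$ in $\pi(H)$ for every closed walk $C$ in $G$; applying Lemma~\ref{lem:fundtri} to any prefix of $S$ ending at an intermediate coloring $\phi$ (with $u = v = q$) would extend this to $\phi(C) = \varepsilon$ in $\pi(H)$ throughout $S$. Additionally, a closed directed walk in the reflexive digraph $H$ that reduces to $\varepsilon$ in $\pi(H)$ can only traverse loops and symmetric edges, because each non-loop reduction pair $(w_{i-1}w_i)(w_iw_{i+1})$ with $w_{i-1}=w_{i+1}$ witnesses both arcs $w_{i-1} \to w_i$ and $w_i \to w_{i-1}$ of $H$. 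Combining these, for any directed closed walk $D = u_1 \to \cdots \to u_n \to u_1$ in $G$ and any color change of some $u_k$ in $S$ from $a$ to $b$, the \pop property places the color of its in-neighbor $u_{k-1}$ in $\{a, b\}$; a short case split on this color identifies either the non-loop edge $a \to b$ in the walk $\phi'(D)$ when $\phi(u_{k-1}) = a$, or the non-loop edge $b \to a$ in the walk $\phi(D)$ when $\phi(u_{k-1}) = b$, and by the second observation this edge is symmetric in $H$. Hence every transition in the color sequence of $u_k$ uses a symmetric edge of $H$, and so does the reduction $S_{u_k}$.

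For the backward direction, I would adapt the dependency-graph construction from the proof of Theorem~\ref{thm:consref}. Starting from the walks $S_v := \alpha(W_v)^{-1} \cdot Q \cdot \beta(W_v)$ for each vertex $v$, I would build a dependency digraph $D_{\mathrm{dep}}$ on $V(G)$ whose arcs capture both the four \pop-based move-ordering cases of Theorem~\ref{thm:consref} and additional orientation arcs triggered whenever advancing $u$ by one step in $S_u$ would violate some arc of $G$ incident to $u$ at the current coloring (these orientation cases amount, as in Section~\ref{sec:orient}, to an asymmetry of the current edge of $S_u$ in $H$). If $D_{\mathrm{dep}}$ is acyclic, the same topological-sort construction as in Theorem~\ref{thm:consref} will produce an $H$-recoloring sequence realizing $Q$.

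The hard part will be showing that $D_{\mathrm{dep}}$ must be acyclic under the hypothesis. Every orientation arc of $D_{\mathrm{dep}}$ records two pieces of data: a specific directed arc of $G$ and the asymmetry of the current edge of some $S_{u_i}$ in $H$. A directed cycle in $D_{\mathrm{dep}}$ therefore traces a closed walk in $G$ whose orientations are fixed exactly at the positions where the associated $S_{u_i}$ is asymmetric. Careful bookkeeping of both orientation and \pop arcs around this cycle should yield a directed closed walk of $G$ on which some $S_v$ has an asymmetric edge, contradicting the hypothesis. This last combinatorial step is where most of the technical work lies; the reflexivity of $G$ and $H$ will be used throughout to stitch together the \pop arcs with the orientation arcs into a genuinely directed closed walk.
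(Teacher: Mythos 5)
Your forward direction is sound and, if anything, cleaner than the paper's. Both arguments turn on the same key observation, which you isolate as Observation~2: a directed closed walk in $H$ that reduces to $\varepsilon$ in $\pi(\undir{H})$ can only traverse loops and symmetric edges, because every cancelling pair $(w_{i-1}w_i)(w_iw_{i+1})$ with $w_{i-1}=w_{i+1}$ in a directed walk witnesses both arcs $w_{i-1}\to w_i$ and $w_i\to w_{i-1}$. Using Lemma~\ref{lem:fundtri} on prefixes of $S$ to propagate $\phi(C)=\varepsilon$ to every intermediate coloring $\phi$ is exactly the right tool, and the short case split on the in-neighbour's color under the \pop property then does the job. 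The paper proves the same thing by spelling out the cancellation edge-by-edge; the substance is identical.

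The backward direction has a genuine gap and departs from the paper's argument in a way that is not obviously repairable. You propose to build a single static dependency digraph $D_{\mathrm{dep}}$ that merges the four \pop-based cases from Theorem~\ref{thm:consref} with "orientation arcs triggered ... at the current coloring", then topologically sort it. But in the reflexive directed setting the orientation constraints genuinely depend on the current coloring $\delta$, which changes as the sequence progresses, so the orientation arcs are not determined at time zero; a static digraph computed at $\alpha$ does not capture all of them, and "is $D_{\mathrm{dep}}$ acyclic?" is not even well posed for a time-varying structure. This is precisely why the paper introduces the dynamic move-forward algorithm (Lemma~\ref{lem:moveforward}) and proves the backward direction by contradiction, analysing the obstruction cycle $C$ at which the greedy algorithm gets stuck. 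Furthermore, even granting a cycle in your $D_{\mathrm{dep}}$, you assert without proof that it "traces a directed closed walk of $G$"; a cycle may use \pop arcs, which need not correspond to directed arcs of $G$. The paper closes exactly this gap with a nontrivial step: using $\delta(C)=\alpha(C)=\varepsilon$ it shows the image $\delta(C)$ degenerates to a single vertex of $H$, which then forces every constraint $u_k<u_{k+1}$ around $C$ to be an orientation constraint on a non-symmetric edge and hence forces $C$ to be a directed closed walk. You acknowledge that "most of the technical work lies" in this last combinatorial step, and that is indeed the missing content; without it the backward implication is not established.
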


\begin{proof} Assume that $Q$ is $H$-realizable and let $C = u_1 \rightarrow u_2 \rightarrow \ldots \rightarrow u_n \rightarrow u_1$ be a directed cycle. Let $u_{n+1} := u_1$ and $u_0 := u_n$. Let $1 \leq k \leq n$ and consider any color change of $u_k$ in the $H$-recoloring sequence generated by $Q$, say $(a_ia_{i+1})$. Let $\delta$ be the $H$-coloring just before this step.

\begin{itemize}
\item If both $u_{k-1}$ and $u_{k+1}$ have color $a_{i+1}$, then the arcs $u_{k-1} \rightarrow u_k \rightarrow u_{k+1}$ yield $a_i \rightleftarrows a_{i+1}$, because $\delta$ is a homomorphism.
\item If both $u_{k-1}$ and $u_{k+1}$ have color $a_i$, then the same holds because we still have a homomorphism after the color change.
\item If $u_{k-1}$ has color $a_i$ and $u_{k+1}$ has color $a_{i+1}$ then we immediately have $a_i \rightarrow a_{i+1}$. Considering $\delta(C) = \alpha(C) = \varepsilon$ in $\pi(H)$, we obtain that $\delta((u_{k+1} \rightarrow \ldots \rightarrow u_{k-1})$ is a path in $H$ that contains the inverse arc $a_{i+1} \rightarrow a_i$. In particular, $a_i \rightleftarrows a_{i+1}$.
\item Similarly, if $u_{k-1}$ has color $a_{i+1}$ and $u_{k+1}$ has color $a_i$, then $a_i \rightleftarrows a_{i+1}$.
\end{itemize}

In any case, $a_i \rightleftarrows a_{i+1}$. So $S_{u_k}$ is symmetric. This concludes the only if part.\\

\emph{Conversely,} if $Q$ is not $H$-realizable, then using the move forward algorithm of lemma~\ref{lem:moveforward}, there is an obstruction closed walk $C = (u_1u_2)\ldots(u_nu_1)$ in $G$ and let $\delta \colon G \to H$ be the coloring obtained at the end of the algorithm, so we have obstructions $u_1 < u_2 < \ldots < u_n < u_1$. As $\delta(C) = \alpha(C) = \varepsilon$ in $\pi(H)$, we can write $\delta(C) = PP^{-1}$ in $H$ (as walks in $H$, without reduction).\\

If $P$ contains at least two vertices, then let $u \in V(C)$ such that $\delta(v)$ is an extremity of $P$. Say $u := u_k$ for $1 \leq k \leq n$. Let $u_p,u_q$ be the nearest vertices of $u_k$ in both directions of $C$ such that $\delta(u_p) \neq \delta(u_k) \neq \delta(u_q)$. So we have 
\begin{center}
$\delta(u_p) \neq \delta(u_{p+1}) = \ldots = \delta(u_k) = \ldots \delta(u_{q-1}) \neq \delta(u_q)$. 
\end{center}
Because $\delta(u_k)$ is an extremity of $P$, we can state that $\delta(u_p) = \delta(u_q)$. But then the obstruction $u_p < u_{p+1}$ and $u_{q-1} < u_q$ cannot both hold. This contradicts the assumption that $C$ is an obstruction closed walk.\\

So $P$, and then $\delta(C)$ contains only one vertex $a \in V(H)$. We deduce from the push or pull property that all vertices of $C$ have the same next move forward onto the same vertex $b \in V(H)$. First assume that $u_1 \rightarrow u_2$ in $A(G)$, then $u_1 < u_2$ means that $b \rightarrow a \in A(H)$ and $a \rightarrow b \not \in A(H)$. Considering that $u_k < u_{k+1}$ for all $1 \leq k \leq n$ (with $u_{n+1} = u_1$), we obtain that necessarily $u_k \rightarrow u_{k+1}$ in $A(G)$, so $C$ is directed. Similarly, if $u_1 \leftarrow u_2$, we obtain that $C$ is directed in the other way.\\

In all cases, $C$ is a directed closed walk and the edge $(ab)$ is not symmetric. So the walks $S_{u_k}$ ($1 \leq k \leq n)$ are not symmetric.
\end{proof}

\subsubsection{Main result on directed graphs}

We can now prove the following result from which theorem~\ref{thm:triangledir} follows immediately.

\begin{theorem} Let $H$ be any reflexive graph and $\alpha,\beta \colon G \to H$ be two $H$-coloring where $G$ is reflexive. We can find in polynomial time a $H$-recoloring sequence from $\alpha$ to $\beta$ that satisfies the push or pull property or conclude correctly that there is none.

\end{theorem}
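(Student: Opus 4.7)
The plan is to fix a base vertex $q \in V(G)$ and apply Theorem~\ref{thm:R:realwalks} to obtain in polynomial time a compact description of the set $\undir{\Pi}$ of $\undir{H}$-realizable walks from $\alpha(q)$ to $\beta(q)$; then, inside each of the four returned cases, decide whether some walk in $\undir{\Pi}$ is in fact $H$-realizable for the oriented graph $H$. Since every $H$-realizable walk is $\undir{H}$-realizable, this is a complete search. From any $H$-realizable walk we can reconstruct an associated $H$-recoloring sequence satisfying the \pop property using the move-forward algorithm of Lemma~\ref{lem:moveforward}.

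The first three cases are handled directly. If $\undir{\Pi} = \emptyset$ (Case~\ref{itm:R:nowalk}), we output \no. If $\undir{\Pi} = \{Q\}$ (Case~\ref{itm:R:onlyQ}), we feed the unique candidate $Q$ into Lemma~\ref{lem:moveforward}, which either yields the required sequence or produces an obstruction. If $\undir{\Pi} = \{R^nP \mid n \in \mathbb{Z}\}$ (Case~\ref{itm:R:walksRnP}), we invoke Lemma~\ref{lem:R:case3}, which reduces the problem to testing polynomially many values of $n$ via Lemma~\ref{lem:moveforward} and the bound from Lemma~\ref{lem:caseRnP}.

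The main case is Case~\ref{itm:R:allwalks}, in which every reduced walk from $\alpha(q)$ to $\beta(q)$ is $\undir{H}$-realizable. Here the key is Lemma~\ref{lem:R:dirsym}: a reduced walk $Q$ is $H$-realizable if and only if for every directed closed walk $u_1 \to \cdots \to u_n \to u_1$ of length $n \geq 2$ in $G$, each generated vertex walk $S_{u_k}$ is symmetric. We therefore compute the set $V' \subseteq V(G)$ of vertices lying on some such directed closed walk, which coincides with the union of the non-trivial strongly connected components of $G$ and can be computed in time $O(|V(G)| + |E(G)|)$ by Tarjan's algorithm. If $V' = \emptyset$, the condition is vacuous, so any reduced walk from $\alpha(q)$ to $\beta(q)$ is $H$-realizable and we return for instance a shortest one. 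Otherwise we pick $q \in V'$ and invoke Lemma~\ref{lem:symwalks} with the set $V'$ and the system of breadth-first-search walks from $q$ to every $v \in V'$; by Lemma~\ref{lem:R:dirsym} any walk returned by Lemma~\ref{lem:symwalks} is $H$-realizable, and conversely no walk outside this set is.

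The main obstacle is this last case, namely the interplay between the structural criterion of Lemma~\ref{lem:R:dirsym}, which singles out the non-trivial strongly connected components of $G$ as the source of all obstructions, and the algorithmic tool of Lemma~\ref{lem:symwalks}, which produces a witness walk generating symmetric vertex walks on $V'$. Once a candidate $H$-realizable walk is obtained in each case, Lemma~\ref{lem:moveforward} produces the associated $H$-recoloring sequence in polynomial time. Since all intermediate steps run in polynomial time and the move-forward algorithm is invoked at most polynomially many times, the overall procedure is polynomial, proving the theorem. Theorem~\ref{thm:triangledir} is then immediate: when $H$ contains no triangle of algebraic girth $1$, every $H$-recoloring sequence automatically satisfies the \pop property, so the above algorithm solves \recolun{H} for all reflexive instances.
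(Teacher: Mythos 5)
Your proposal follows the same case analysis and uses the same tools as the paper's proof: apply Theorem~\ref{thm:R:realwalks} to obtain the four cases, handle the first three via Lemma~\ref{lem:moveforward} and Lemma~\ref{lem:R:case3}, and in the fourth case use Tarjan's algorithm to compute $V'$, then combine Lemma~\ref{lem:R:dirsym} with Lemma~\ref{lem:symwalks}. The only thing worth making explicit (as the paper does) is that in the subcase of Lemma~\ref{lem:symwalks} where \emph{all} symmetric walks generate symmetric vertex walks on $V'$, one must still check that at least one symmetric walk from $\alpha(q)$ to $\beta(q)$ actually exists (by breadth-first search on the symmetric part of $H$), and report \no otherwise; this is implicit but not stated in your writeup.
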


\begin{proof} Pick any vertex $v \in V(G)$ and use theorem~\ref{thm:R:realwalks} to obtain a description of the set of all $\undir{H}$-realizable walks in time $O(|V(G)|\cdot|E(G)| + |E(H)|)$. If there is none, we conclude that there is no $H$-realizable walk. If there is only one $\undir{H}$-realizable $Q$, then we only need to test $Q$ with the move-forward algorithm of Lemma~\ref{lem:moveforward}. In the third case, lemma~\ref{lem:R:case3} concludes.

In case~\ref{itm:R:allwalks}, apply Tarjan's algorithm to obtain in time $O(|E(G)|)$ all strongly connected components of $G$ and hence the set $V'$ of all vertices that belong to some directed closed walk. By lemma~\ref{lem:R:dirsym}, $H$-realizable walks are exactly those which generate symmetric walks on $V'$. If $V' = \emptyset$, then any walk from $\alpha(q)$ to $\beta(q)$ is $H$-realizable, so we can conclude in time $O(|E(G)|)$ using breadth-first search. Otherwise, let $q_0 \in V'$ and use lemma~\ref{lem:symwalks} to determine the set of walks from $\alpha(q_0)$ that generate symmetric walks on $V'$ with any system $(W_v)$ of walks from $q$ to $v$ for all $v \in V'$ (since $\undir{H}$-realizable walks are topologically valid, Lemma~\ref{lem:equigen} ensures that generated walks to not depend of how the walks $W_v$ are constructed).

\begin{itemize}
\item If this set is empty, then there is no $H$-realizable walks for $\alpha,\beta,q_0$, so there is no $H$-recoloring sequence satisfying the push or pull property.
\item If there is only one walk $Q$ that is realizable for $\alpha,\beta,q_0$, then we can construct the wanted $H$-recoloring sequence by using the move forward algorithm of lemma~\ref{lem:moveforward} in polynomial time. 
\item If all walks from $\alpha(q_0)$ to $\beta(q_0)$ generate symmetric walks on $V'$, then we can search for one by breadth-first search in time $O(|E(G)|)$. If there is none, then there is no $H$-recoloring sequence satisfying the push or pull property. If there is some, then the move forward algorithm can construct an associated $H$-recoloring sequence.
\end{itemize}

\end{proof}

\bibliographystyle{plainurl}
\bibliography{references} 

\end{document}